\documentclass[12pt, draftclsnofoot, onecolumn]{IEEEtran}
\pagestyle{empty}

\usepackage{amsmath,amssymb,epsfig,amsthm,psfrag,epsf, multirow,wrapfig, graphicx}
\usepackage[english]{babel}
\usepackage{epstopdf}
\usepackage{cite}
\newtheorem{theorem}{Theorem}
\newtheorem{lemma}{Lemma}

\newtheorem{remark}{Remark}
\usepackage{cleveref}
\usepackage{ dsfont }
\usepackage{bm}

\usepackage{caption}
\usepackage{subcaption}
\usepackage{color}

\title{Multi-Cell Multi-User Massive FD-MIMO: Downlink Precoding and Throughput Analysis}
\author{Rubayet Shafin and Lingjia Liu
\thanks{R. Shafin and L. Liu are with the Bradley Department of Electrical and Computer Engineering, Virginia Tech, Blacksubrg VA, 24061, USA. The corresponding author is L. Liu (ljliu@ieee.org). This work has been submitted to the IEEE for possible publication. Copyright may be transferred without notice, after which this version may no longer be accessible.}
}
\begin{document}
\maketitle


\begin{abstract}
In this paper, downlink (DL) precoding and power allocation strategies are identified for a time-division-duplex (TDD) multi-cell multi-user massive Full-Dimension MIMO (FD-MIMO) network.
Utilizing channel reciprocity, DL channel state information (CSI) feedback is eliminated and the DL multi-user MIMO precoding is linked to the uplink (UL) direction of arrival (DoA) estimation through estimation of signal parameters via rotational invariance technique (ESPRIT).
Assuming non-orthogonal/non-ideal spreading sequences of the UL pilots, the performance of the UL DoA estimation is analytically characterized and the characterized DoA estimation error is incorporated into the corresponding DL precoding and power allocation strategy.
Simulation results verify the accuracy of our analytical characterization of the DoA estimation and demonstrate that the introduced multi-user MIMO precoding and power allocation strategy outperforms existing zero-forcing based massive MIMO strategies.
\end{abstract}

\begin{IEEEkeywords}
Direction-of-arrival estimation, FD-MIMO, multi-cell, multi-user MIMO, zero-forcing, ESPRIT.
\end{IEEEkeywords}

\section{Introduction}

Massive-MIMO or large-scale MIMO, has generated significant interest both in academia
\cite{Noncooperative_Cellular} and industry \cite{Full_Dimension_MIMO}.
Because of the promise of fulfilling future throughput demand via aggressive spatial multiplexing, massive MIMO is considered as one of the key enabling technologies  for next generation wireless networks.
Due to form factor limitation at the base station (BS), three dimensional (3D) massive-MIMO/Full Dimension MIMO (FD-MIMO) systems have been introduced in 3GPP to deploy active antenna elements in a two dimensional (2D) antenna array enabling the exploitation of the degrees of freedom in both azimuth and elevation domains. 
{ Due to the availability of the huge spectrum in the millimeter wave (mmWave) band, mmWave communication is considered as another enabling technology for future cellular networks: 5G and beyond. However, due to its significantly higher path loss compared to the microwave channel, it is extremely challenging to establish an effective communication for outdoor channels using mmWave bands. This challenge can be tackled using beamforming techniques where the base station serves multiple users with narrower beams. This can be possible if a large number of antennas are deployed at the base station in order to realize the narrow beams. As a result, massive MIMO is a natural counterpart for the mmWave cellular network.}
{\color{red} }

Since the benefits of massive MIMO or massive FD-MIMO are limited by the accuracy of the downlink (DL) channel state information (CSI) available at the base station, it is critical for the BS to obtain corresponding DL CSI information.
In general, the BS can obtain the CSI knowledge through the following: 1) DL CSI feedback where the the CSI information is fed back from mobile stations (MSs), and 2) DL/UL channel reciprocity where BS estimates the uplink (UL) CSI and infers DL CSI information through channel reciprocity.
Note that DL CSI feedback is heavily used in frequency-division-duplex (FDD) systems where only a few bits of the corresponding DL CSI information are fed back to the BS \cite{Downlink_MIMO} to achieve a good tradeoff between DL MIMO performance and UL feedback overhead/reliability.
To utilize the DL/UL channel reciprocity, the critical point becomes estimating the UL channel at the BS.
Based on UL pilots/reference signals sent from MSs, there are generally two methods to estimate the UL channel.
First is to estimate the corresponding channel transfer function (e.g., UL channel matrix).
Alternatively, UL direction of arrival (DoA) can be estimated at the BS using ESPRIT algorithm~\cite{A_subspace_rotation_approach}.
Even though the DoA only provides partial information on the UL channel, it is shown in \cite{Rubayet_Journal, Angle_and_Delay_Estimation_for_3D} that it can be directly linked to DL MIMO precoding in TDD systems.
It is important to note that the DoA based MIMO precoding strategy has also been introduced to FDD systems demonstrating significant performance benefits in reality \cite{Eliminating_channel_feedback}.

{Despite promising better performance, non-linear precoding schemes, such as dirty paper coding (DPC) or vector perturbation, are not practical for MIMO systems due to its high implementation complexity. }
{In recent years, simple linear processing techniques have been shown to offer significant performance gains for multi-user massive MIMO scenarios where the base stations employ a large number of antennas\cite{Noncooperative_Cellular}.
}
{Hence, most of the prior works in massive MIMO literature have focused on maximum ratio transmission (MRT) and zero forcing (ZF)-based methods for DL MIMO precoding \cite{Massive_MIMO_for_Maximal_Spectral, Joint_Power_Allocation_and_User}.} However, for mmWave  mssive FD-MIMO systems, it is possible to design low-complexity precoders with better performance than the conventional ZF/MRT-based precoders.

{ In this paper, we will characterize the optimal/near-optimal DL MIMO precoding and power allocation strategies for a TDD multi-cell multi-user mmWave massive FD-MIMO network.
}
ESPRIT-based UL DoA estimation scheme will be introduced and performance of the DoA estimation will be analytically characterized assuming non-orthogonal spreading sequences used for UL pilots/reference signals.
DL multi-user MIMO precoding and power allocation strategies will be identified based on the UL DoA estimation and their corresponding error performance.
Performance evaluation will be conducted to illustrate benefits of the introduced MIMO precoding and power allocation strategy over our previous scheme~\cite{Rubayet_Journal} as well as popular zero forcing (ZF)-based precoding~\cite{Massive_MIMO_for_Maximal_Spectral,Joint_Power_Allocation_and_User}.
The contribution of this paper can be summarized as the following:
\begin{itemize}
    \item { First, we present a unitary
    	 ESPRIT-based uplink DoA estimation method for multi-cell multi-user mmWave massive FD-MIMO OFDM network.
    }
Unlike majority of existing work, our scheme considers a more realistic scenario where non-orthogonal spreading sequences are used as UL pilots for both intra-cell and inter-cell users. As a result, due to the non-zero correlation coefficients among users' spreading sequences, the UL DoA/channel estimation is subject to intra-cell interference, inter-cell interference, and the so called pilot contamination.
    \item Second, we analytically characterize the mean square error (MSE) of {unitary ESPRIT-based} UL DoA estimation for the corresponding multi-cell multi-user FD-MIMO network. Our analytical results show how different perturbation components, namely noise elements, and intra-cell and inter-cell interferences, affect the UL DoA/channel estimation performance. The MSE has been related to key physical parameters such as number of antennas, BS array geometry, complex path gains, and correlation coefficients between users' spreading sequences.
    \item Third, we derive the sum-rate maximizing DL precoding and power allocation strategy for our FD-MIMO system. Furthermore, we perform a large antenna array regime analysis for DL precoding and identify the optimum power allocation under both perfect and imperfect DoA estimation scenarios.
    \item Finally, we validate our algorithms and analytical results through extensive simulation. The evaluation results demonstrate that our simulated MSE for different antenna numbers and antenna array geometries match well with those of analytical expressions for both elevation and azimuth estimation in large SNR regimes.
        Moreover, we also show that the introduced sum-rate maximization precoding strategy outperforms both eigenbeamforming and ZF-based precoding over all SNR regimes.
\end{itemize}

The rest of the paper is organized as follows. Section II describes the system model and the channel model for the underlying multi-cell multi-user massive FD-MIMO network.
Section III presents the ESPRIT-based UL DoA estimation method and the performance characterization for DoA estimation.
The achievable sum-rate analysis under both perfect and imperfect DoA estimation as well as the optimal MIMO precoding and power allocation strategies are contained in Section IV.
Simulation results are presented in Section V and Section VI concludes the paper.

\section{System and Channel Model}\label{System_Model}
\begin{figure}[h!]
\centering
\includegraphics[width=0.8\linewidth]{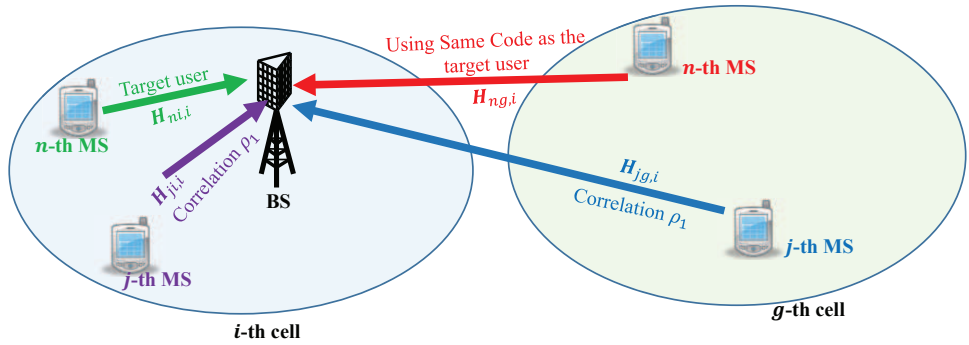}
\caption[width=0.5\linewidth]{{Network Model.}}
\label{system}
\end{figure}
We consider a multi-cell multi-user MIMO-OFDM system consisting of $G$ BSs as depicted in figure \ref{system}.
Each BS with $N_r$ number of antennas supports $J$ number of mobile stations (MSs)--each having $N_t$ number of transmit antennas.
After appending cyclic prefix (CP), the resulting time domain transmit signal at each MS is first passed through a parallel-to-serial converter followed by a digital-to-analog (DAC) converter, resulting in the baseband OFDM signal.
The baseband signal is then up-converted and sent through a  frequency selective fading channel, which is assumed to remain time-invariant during an OFDM symbol duration.
{It is to be noted here that we assumed same number of antennas at all UEs because of better clarity of exposition. However, we want to emphasize here that the algorithm and analysis presented in this work are not restricted by this assumption. All the results presented in this work can be straightforwardly extended to the scenario where users in the cell have different number of antennas.
}

In the UL, each MS sends $N_t$ spreading sequences of length $Q$ as plots/reference signals: one on each transmit antenna.
Accordingly, the $N_r \times Q$ frequency-domain received signal for the $k$-th subcarrier at the $i$-th BS can be written as
\begin{align}\label{Z_i}
\mathbf{Z}_i(k)=\sum\limits_{g=0}^{G-1}\sum\limits_{j=0}^{J-1}\sqrt{\Lambda_{jg,i}} \mathbf{H}_{jg,i}(k)\mathbf{X}_{jg}(k)+\mathbf{W}_i(k),
\end{align}
where $\mathbf{H}_{jg,i}(k)$ is the $N_r \times N_t$ {channel matrix} for the channel between the $i$-th BS and the $j$-th MS in the $g$-th cell at the $k$-th subcarrier, and $\Lambda_{jg,i}$ is the corresponding large scale fading coefficient which is independent of subcarrier frequency; $\mathbf{X}_{jg}(k)$ is the $N_t \times Q$ frequency domain transmit signal from the $j$-th MS in the $g$-th cell for the $k$-th subcarrier, and $\mathbf{W}_i(k)$ is the corresponding $N_r \times Q$ noise matrix.
Note that each row vector of $\mathbf{X}_{jg}(k)$ is a length-$Q$ spreading sequence.
The channel transfer function, $\mathbf{H}_{jg,i}(k)$, can be written as
\begin{align}\label{H_jgi}
\mathbf{H}_{jg,i}(k)&=\sum_{\ell=0}^{L_{jg,i}-1}\mathbf{C}_{jg,i}(\ell)e^{\frac{-j2\pi k \ell}{N_c}},
\end{align}
where $\mathbf{C}_{jg,i}(\ell)$ is the $N_r \times N_t$ channel impulse response (CIR) for the $\ell$-th tap of the channel between $i$-th BS and the $j$-th MS in the $g$-th cell.
{$N_c$ denotes total number of subcarriers.
}
Here, we assume that the channel, which can be represented by an equivalent discrete-time linear channel impulse response, has a finite number ($L_{jg,i}$) of non-zero taps.

Using the geometric channel model for mmWave frequencies, the impulse response for the $\ell$-th tap of the channel between  $i$-th BS and the $j$-th MS in the $g$-th cell can be represented by\cite{Rubayet_Journal, Angle_and_Delay_Estimation_for_3D, akdeniz2014millimeter}
\begin{align}\label{C_jgi}
\mathbf{C}_{jg,i}(\ell)= \sum\limits_{p=0}^{P_{jg,i,\ell}-1}\alpha_{jg,i}(\ell,p) \mathbf{e}_{r,jg,i}(\ell,p)\mathbf{e}_{t,jg,i}^H(\ell,p),
\end{align}
where $\alpha_{jg,i}(\ell,p)$, $\mathbf{e}_{r,jg,i}(\ell,p)$, and $\mathbf{e}_{t,jg,i}(\ell,p)$ are, respectively, the channel gain, $N_{r} \times 1$ receive antenna array response, and $N_{t} \times 1$ transmit antenna array response for the $p$-th sub-path within the $\ell$-th tap of the channel between the $i$-th BS and the $j$-th MS in $g$-th cell;
$P_{jg,i,\ell}$ is the total number of sub-paths within the $\ell$-th tap of the channel;
and $(\cdot)^H$ denotes Hermitian transpose operation.
In the FD-MIMO network of interests, a 1D uniform linear array (ULA) is assumed at each MS.
The corresponding transmit antenna array response can be described using the Vandermonde structure:
$\mathbf{e}_{t,jg,i}(\ell,p)= \begin{bmatrix}
1 & e^{j\omega_{jg,i,\ell,p}} & \ldots & e^{j(N_t-1)\omega_{jg,i,\ell,p}}
\end{bmatrix}^T $, where $\omega_{jg,i,\ell,p}=(2\pi\Delta_t/\lambda)\cos\Omega_{jg,i,\ell,p}$,  $\Delta_t$ is the spacing between the adjacent transmit antenna elements, $\Omega_{jg,i,\ell,p}$ is the transmit angle (DoD) for $p$-th sub-path within $\ell$-th tap of the channel between $i$-th base station and the $j$-th user in $g$-th cell, and $\lambda$ is the carrier wavelength.
On the other hand, for FD-MIMO networks the antenna array at the BS is a 2D planar array placed in the X-Z plane, with $M_{1}$ and $M_{2}$ antenna elements in vertical and horizontal directions, respectively.
Accordingly, the number of total receive antenna elements at the base station is $N_{r} = M_{1} \times M_{2}$.
Therefore, the receive antenna array response for the $p$-th sub-path within $\ell$-th tap can be expressed as $\mathbf{e}_{r,jg,i}(\ell,p)=\mathbf{a}(v_{jg,i,\ell,p})\otimes\mathbf{a}(u_{jg,i,\ell,p})$, where $\otimes$ represents the Kronecker product, and  $\mathbf{a}(u_{jg,i,\ell,p})= \begin{bmatrix}
1 & e^{ju_{jg,i,\ell,p}} & \ldots & e^{j(M_{1}-1)u_{jg,i,\ell,p}}
\end{bmatrix}^T$ and $\mathbf{a}(v_{jg,i,\ell,p})= \begin{bmatrix}
1 & e^{jv_{jg,i,\ell,p}} & \ldots & e^{j(M_{2}-1)v_{jg,i,\ell,p}}
\end{bmatrix}^T$  can be treated as the receive steering vectors in the elevation and azimuth domains, respectively.
Here, $u_{jg,i,\ell,p}=\frac{2\pi \Delta_r}{\lambda}\cos\theta_{jg,i,\ell,p}$ and $v_{jg,i,\ell,p}=\frac{2\pi \Delta_r}{\lambda}\sin\theta_{jg,i,\ell,p} \cos\phi_{jg,i,\ell,p}$ are the two receive spatial frequencies at the BS, $\Delta_r$ is the spacing between adjacent antenna elements in the receive antenna array, and $\theta_{jg,i,\ell,p}$ and $\phi_{jg,i,\ell,p}$ are the elevation and azimuth DoAs for the $p$-th sub-path within $\ell$-th tap for the channel between the $i$-th BS and the $j$-th MS in $g$-th cell, respectively. 
{In this paper, we are not considering user mobility or scheduler impact on the system performance, and we will address these important issues in our future work.}

\section{Uplink Channel Characterization Through DoA Estimation}
In this section, we will present the UL DoA estimation procedure for the multi-cell multi-user massive FD-MIMO network and characterize assuming non-orthogonal/non-ideal spreading sequences and characterize the corresponding estimation performance. 
{We choose ESPRIT-based method for DoA estimation over other high resolution DoA estimation methods, such as MUSIC, since ESPRIT offers better resolvability and unbiased estimates with lower variance. Most importantly, ESPRIT provides significant computational advantages in terms of faster processing speed, lower storage requirement and indifference to knowledge of precise array geometry.
}
\subsection{UL DoA Estimation through Unitary ESPRIT}\label{UL_DOA_Estimation}
Let the $n$-th MS at the $i$-th cell be the target user which tries to communicate to the $i$-th BS.
In massive FD-MIMO networks, the number of scheduled users may be quite large, and hence due to limited availability of orthogonal spreading codes, it may not be possible to assign orthogonal sequences to all scheduled users.
With this in mind, in this work, we assume a more realistic scenario that only the spreading sequences used by the same MS
are orthogonal while spreading sequences for different MSs within a cell are non-orthogonal.
Furthermore, we assume that the same pool of spreading sequences are reused across all cells as UL pilots complying with the 3GPP LTE/LTE-Advanced standards~\cite{Holma}.

Let the correlation among the spreading sequences from different MSs be denoted as $\rho_1$.
Now, for estimating the UL channel of the $n$-th MS in $i$-th cell, at the $i$-th BS, the $N_r \times Q$ received signal at the $k$-th subcarrier, $\mathbf{Z}_i(k)$, is first correlated with the spreading sequences of $n$-th MS.
Hence, after correlating the received signal with the target user's sequence, from \eqref{Z_i}, we have
\begin{align}\label{Z_i_k_2}
&\mathbf{Z}_i(k)\mathbf{X}_{ni}^H(k)=\sum\limits_{g=0}^{G-1}\sum\limits_{j=0}^{J-1}\sqrt{\Lambda_{jg,i}} \mathbf{H}_{jg,i}(k)\mathbf{X}_{jg}(k)\mathbf{X}_{ni}^H(k)+\mathbf{W}^{'}_i(k),
\end{align}
where $\mathbf{W}^{'}_i(k)=\mathbf{W}_i(k)\mathbf{X}_{ni}^H(k)$ is the equivalent noise element.
Now, we can re-write \eqref{Z_i_k_2} as
\begin{align}\notag
\mathbf{Z}_i(k)\mathbf{X}_{ni}^H(k)&=\sqrt{\Lambda_{ni,i}}\mathbf{H}_{ni,i}(k)+\sum\limits_{\substack{g=0\\g \neq i}}^{G-1}\sqrt{\Lambda_{ng,i}}\mathbf{H}_{ng,i}(k)+\sum\limits_{\substack{j=0\\j \neq n}}^{J-1}\sqrt{\Lambda_{ji,i}}\mathbf{H}_{ji,i}(k)\rho_1\mathbf{1}_{N_t}\\\label{abcd}
&+\sum\limits_{\substack{g=0\\g \neq i}}^{G-1}\sum\limits_{\substack{j=0\\j \neq n}}^{J-1}\sqrt{\Lambda_{jg,i}}\mathbf{H}_{jg,i}(k)\rho_1\mathbf{1}_{N_t}+\mathbf{W}_i^{'}(k),
\end{align}
where $\mathbf{1}_{N_t}$ denotes an $N_t \times N_t$ matrix with each element being unity.
In \eqref{abcd}, the first term, $\sqrt{\Lambda_{ni,i}}\mathbf{H}_{ni,i}(k)$,  represents the target user's UL channel; first summation term represents the inter-cell interference caused by users in other cells whose spreading sequences are exactly the same as that of target user (pilot contamination); second summation term represents the intra-cell interference; and third summation term represents the inter-cell interference caused by users in other cells whose spreading sequences are different (non-orthogonal) than that of target user.
{
In realistic wireless networks such as LTE/LTE-Advanced networks, there exists a nonzero correlation between different pilot sequences. For example, Zadoff-Chu sequence is used to make sure different prime length Zadoff-Chu sequence has constant cross-correlation \cite{popovic1992generalized}. To reflect this practical constraint as well as provide system design insights, a correlation coefficient, $\rho_1$,  in \eqref{abcd} is introduced as a system design parameter to consider the tradeoff between the training sequence length and corresponding system performance. It is to be noted that  $\rho_1=0$ results in the special case where all the users in a cell are assigned orthogonal codes. Also important to note that the value of $\rho_1$ depends on the length of scrambling sequences the system designer chooses, which again depends on the coherence time of the channel. This provides us a way to investigate the impact of channel coherence on the network performance: Smaller coherence time will lead to a shorter scrambling sequence resulting in higher values for $\rho_1$.

}

Because of the large path loss, which is manifested by the large-scale fading coefficients, overall gains of the inter-cell interference channels are relatively small compared to that of the target user's channel. Furthermore, presence of the small correlation coefficients, $\rho_1$, intra-cell interference terms can also be considered relatively smaller compared to the term for target user's channel.
Hence, during the ESPRIT-based parameter estimation phase, we can treat the interference and noise elements together, and 
\eqref{abcd} can be written as
\begin{align}\label{Z_i_2}
\hat{\mathbf{H}}_{ni,i}(k)=\mathbf{Z}_i(k)\mathbf{X}_{ni}^H(k)=\sqrt{\Lambda_{ni,i}}\mathbf{H}_{ni,i}(k)+\mathbf{W}_i^{''}(k)
\end{align}
where $\mathbf{W}_i^{''}(k)$ is the  equivalent noise-plus-interference matrix.
Now, using \eqref{H_jgi} and \eqref{C_jgi}, we can write $\mathbf{H}_{ni,i}(k)$ as
\begin{align}\label{H_nik}
&\mathbf{H}_{ni,i}(k)=\sum_{\ell=0}^{L_{ni,i}-1}\sum\limits_{p=0}^{P_{ni,i,\ell}-1}\alpha_{ni,i}(\ell,p) \mathbf{e}_{r,ni,i}(\ell,p)\mathbf{e}_{t,ni,i,k}^H(\ell,p)
\end{align}
where $\mathbf{e}_{t,ni,i,k}(\ell,p)=\mathbf{e}_{t,ni,i}(\ell,p)e^{\frac{-j2\pi k \ell}{N_c}}$.
In order to jointly estimate the elevation and azimuth angles of the uplink channel between the $i$-th base station and the $n$-th user in $i$-th cell, we can now apply a low-complexity DoA estimation algorithm based on unitary ESPRIT.

High frequency channels, especially millimeter-wave channels, usually have fewer number of scattering clusters \cite{Broadband}.
In this work, we focus on the simple case where each scattering cluster contributes a single propagation path.
This is a reasonable assumption for the analysis of FD-MIMO systems~\cite{Rubayet_Journal, Spatially_Sparse, Channel_Estimation_and_Hybrid_Precoding}.
Hence for the clarity of exposition, and notational convenience, we can drop the subpath index, $p$, from $\alpha_{jg,i}(\ell,p) $, $\mathbf{e}_{r,jg,i,k}(\ell,p)$, and $\mathbf{e}_{t,jg,i}^H(\ell,p)$.
{
However, our results also hold  for multiple subpaths scenario due to the fact that ESPRIT can be used to distinguish subpaths as long as the spatial resolvability of the array is higher than the angular spread between two subpaths \cite{wang2012low}. 
It is to be noted here that in this work, instead of Standard ESPRIT, we utilize Unitary ESPRIT \cite{haardt1995unitary}  for DoA estimation, which provides superior estimation performance for the case where the sub-paths within the same clusters are highly correlated. Moreover, because of Forward-Backward Averaging (FBA), Unitary ESPRIT can still estimate the corresponding DoAs of two sub-paths which are completely correlated or coherent. It is also noteworthy here that it is unlikely to have more than two completely coherent sub-paths in the mmWave propagation channel based on 3GPP mmWave channel model \cite{3GPP_Study_on_channel_model} and the seminal work in \cite{samimi20163}. Therefore, our introduced algorithm is applicable for most general mmWave channels. However, for the very special case where more than two sub-paths are completely coherent, and all such sub-path DoAs are required to be estimated, the spatial smoothing technique can be applied in conjunction with FBA to de-correlate the corresponding signals \cite{pillai1989forward}. However, this is out of the scope of current manuscript and we will consider this special case in our future work.
}
Now, \eqref{H_nik} can be written as
\begin{align}
\mathbf{H}_{ni,i}(k)&=\mathbf{A}_{ni,i}\mathbf{D}_{ni,i}\mathbf{B}_{ni,i}^H(k)
\end{align}
where $\mathbf{A}_{ni,i}=\begin{bmatrix}
\mathbf{e}_{r,ni,i}(0) &  \ldots & \mathbf{e}_{r,ni,i}(L_{ni,i}-1)
\end{bmatrix}$, $\mathbf{D}_{ni,i}=\text{diag}\begin{bmatrix}
\alpha_{ni,i}(0) & \ldots & \alpha_{ni,i}(L_{ni,i}-1)
\end{bmatrix}$,  and $\mathbf{B}_{ni,i}(k)=\begin{bmatrix}
\mathbf{e}_{t,ni,i,k}(0) & \ldots & \mathbf{e}_{t,ni,i,k}(L_{ni,i}-1)
\end{bmatrix}$.
Hence, from \eqref{Z_i_2}, the  channel matrix, $\hat{\mathbf{H}}_{ni,i}(k)$, can be written as
\begin{align}\label{noisy_channel}
\hat{\mathbf{H}}_{ni,i}(k)&=\sqrt{\Lambda_{ni,i}}\mathbf{A}_{ni,i}\mathbf{D}_{ni,i}\mathbf{B}_{ni,i}^H(k)+\mathbf{W}_i^{''}(k).
\end{align}
By converting all the complex matrices to the real matrices, Unitary ESPRIT performs the computations in real, instead of complex, numbers from beginning to the end of the algorithm, and hence, reduces the computational complexity significantly. Since we are only interested in estimating UL DoAs, the noisy channel from~\eqref{noisy_channel} can be expressed as
\begin{align}\label{H_hat_k}
\hat{\mathbf{H}}_{ni,i}(k)=\mathbf{A}_{ni,i}\mathbf{S}_{ni,i}(k)+\mathbf{W}_i^{''}(k),
\end{align}
where $\mathbf{S}_{ni,i}(k)=\sqrt{\Lambda_{ni,i}}\mathbf{D}_{ni,i}\mathbf{B}_{ni,i}^H(k)$.
In order to perform unitary ESPRIT, we need to use forward-backward averaging on the received signal:
\begin{align}\notag
\hat{\mathbf{H}}_{ni,i}^{fba}(k)&=\begin{bmatrix}
\hat{\mathbf{H}}_{ni,i}(k) & \mathbf{\Pi}_{N_r}\hat{\mathbf{H}}_{ni,i}^{*}(k)\mathbf{\Pi}_{N_t}
\end{bmatrix}\\\label{H_nii_fba}
&=\begin{bmatrix}
\mathbf{A}_{ni,i}\mathbf{S}_{ni,i}(k) & \mathbf{\Pi}_{N_r}\mathbf{A}_{ni,i}^{*}\mathbf{S}_{ni,i}^{*}(k)\mathbf{\Pi}_{N_t}
\end{bmatrix}+ \begin{bmatrix}
\mathbf{W}_i^{''}(k) & \mathbf{\Pi}_{N_r}{\mathbf{W}_i^{''}}^{*}(k)\mathbf{\Pi}_{N_t}
\end{bmatrix},
\end{align}
where $\mathbf{A}^*$ represents complex conjugate of $\mathbf{A}$, and $\mathbf{\Pi}_{p}$ denotes the $p \times p$ exchange matrix with ones on its antidiagonal and zeros elsewhere. The subspace decomposition of the signal space of the received signal through singular value decomposition then can be written as:
\begin{align} \label{Noiseless_SVD}
&\begin{bmatrix}
\mathbf{A}_{ni,i}\mathbf{S}_{ni,i}(k) & \mathbf{\Pi}_{N_r}\mathbf{A}_{ni,i}^{*}\mathbf{S}_{ni,i}^{*}(k)\mathbf{\Pi}_{V}
\end{bmatrix}=\begin{bmatrix}
\mathbf{U}^{sig}_{ni,i} & \mathbf{U}^{noise}_{ni,i}
\end{bmatrix}\begin{bmatrix}
\mathbf{\Sigma}^{sig}_{ni,i} & \mathbf{0}\\
\mathbf{0} & \mathbf{0}
\end{bmatrix}\begin{bmatrix}
{\mathbf{V}^{sig}_{ni,i}}^H\\
{\mathbf{V}^{noise}_{ni,i}}^H
\end{bmatrix}
\end{align}
From this step onward, we can now follow our line of work \cite{Rubayet_Journal} in order to apply ESPRIT-based techniques on \eqref{Noiseless_SVD}. Hence, the details are not repeated here due to page limitation.
\subsection{RMSE Characterization}
The theoretical performance of the unitary ESPRIT-based UL DoA estimation can be characterized where the root mean squared error (RMSE) of the estimation is served as the performance metric.
Let $\hat{v}_{ni,i,\ell}$ denote the estimated spatial frequency for $\ell$-th tap of the target user's channel, i.e, the channel between the $i$-th BS and the $n$-th MS in the $i$-th cell; the estimation error is then given by $\Delta v_{ni,i,\ell}=v_{ni,i,\ell}-\hat{v}_{ni,i,\ell}$.
Similarly, $\Delta u_{ni,i,\ell}=u_{ni,i,\ell}-\hat{u}_{ni,i,\ell}$.

It has been shown in~\cite{analytical_performance} that the unitary transformation does not affect the MSE of the ESPRIT methods; however, the statistics of the noise and the signal subspace are changed due to the forward and backward averaging performed in \eqref{H_nii_fba}.
To be specific, the covariance and complementary covariance matrices for the equivalent noise-plus-interference matrix $\mathbf{W}_{i}^{''}(k)$ in \eqref{abcd} become, respectively~\cite{analytical_performance}:
\begin{equation}
\begin{split}
&\mathbf{R}_{i}^{(fba)}(k) = \begin{bmatrix}
\mathbf{R}_{i}(k)&\mathbf{0}\\
\mathbf{0}&\bm{\Pi}_{N_rN_t}\mathbf{R}_{i}^*(k)\bm{\Pi}_{N_rN_t}
\end{bmatrix}; \ \ \mathbf{C}_{i}^{(fba)}(k) = \begin{bmatrix}
\mathbf{0}&\mathbf{R}_{i}(k)\bm{\Pi}_{N_rN_t}\\
\bm{\Pi}_{N_rN_t}\mathbf{R}_{i}^*(k)&\mathbf{0}
\end{bmatrix},
\label{equ_covariance}
\end{split}
\end{equation}
where $\mathbf{R}_{i}(k) = \mathbb{E}_{\alpha,\theta,\phi,\psi}\left\{\text{vec}\left\{\mathbf{W}_i^{''}(k)\right\}\text{vec}\left\{\mathbf{W}_i^{''}(k)\right\}^H\right\}$, and the expectation, $\mathbb{E}_{\alpha,\theta,\phi,\psi}$, is taken with respect to different channel realizations (i.e., w.r.t. channel gains, DoA's-- both azimuth and elevation-- and DoD's of the interference channels). Now the expression of the covariance matrix, $\mathbf{R}_{i}(k)$, can be simplified using the following lemma:

\begin{lemma}\label{lemma_covariance}
The covariance matrix, $\mathbf{R}_{i}(k)$, of the equivalent noise-plus-interference matrix, $\mathbf{W}_i^{''}$, is given by:
\begin{align}
\mathbf{R}_{i}(k)=\mathbf{R}_{i,1}(k)+\mathbf{R}_{i,2}(k)+\mathbf{R}_{i,3}(k)+\mathbf{R}_{i,4}(k),
\end{align}where $\mathbf{R}_{i,4}(k)=\sigma^2\mathbf{I}_{N_rN_t}$, where $\sigma^2$ is the noise variance, and
\begin{align}\label{R_i_1_k}
\mathbf{R}_{i,1}(k)=\mathds{E}_{\alpha,\theta,\phi,\psi}\left[\sum\limits_{\substack{g=0\\g\neq i}}^{G-1}\left(\sqrt{\Lambda_{ng,i}}\right)^2\bm{\mathcal{R}}_{ng,i}(k)\right],
\end{align}
\begin{align}\label{R_i_2_k}
&\mathbf{R}_{i,2}(k)=\mathds{E}_{\alpha,\theta,\phi,\psi}\left[\rho_1^2\sum\limits_{\substack{j=0\\j\neq n}}^{J-1}\left(\sqrt{\Lambda_{ji,i}}\right)^2\bm{\mathcal{X}}_{t,r}\bm{\mathcal{R}}_{ji,i}(k)\bm{\mathcal{X}}_{t,r}\right],
\end{align}
\begin{align}\label{R_i_3_k}
&\mathbf{R}_{i,3}(k)=\mathds{E}_{\alpha,\theta,\phi,\psi}\left[\rho_1^2\sum\limits_{\substack{g=0\\g \neq i}}^{G-1}\sum\limits_{\substack{j=0\\j\neq n}}^{J-1}\left(\sqrt{\Lambda_{jg,i}}\right)^2\bm{\mathcal{X}}_{t,r}\bm{\mathcal{R}}_{jg,i}(k)\bm{\mathcal{X}}_{t,r}\right],
\end{align}
where\small $\bm{\mathcal{X}}_{t,r}=\left(\mathbf{1}_{N_t}\otimes\mathbf{I}_{N_r}\right)$, and $\bm{\mathcal{R}}_{pq,r}(k)=\bm{\mathcal{P}}_{pq,r}(k)\bm{\mathcal{P}}_{pq,r}^H(k)$, where   $\bm{\mathcal{P}}_{pq,r}(k)=\left(\mathbf{B}_{pq,r}^*(k)\otimes\mathbf{A}_{pq,r}\right)\text{vec}\left\{\mathbf{D}_{pq,r}\right\}$.  \normalsize
\end{lemma}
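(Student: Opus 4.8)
The plan is to expand $\mathbf{W}_i^{''}(k)$ into its four physically distinct constituents read directly off \eqref{abcd}---the pilot-contamination term $\sum_{g\neq i}\sqrt{\Lambda_{ng,i}}\mathbf{H}_{ng,i}(k)$, the intra-cell term $\rho_1\sum_{j\neq n}\sqrt{\Lambda_{ji,i}}\mathbf{H}_{ji,i}(k)\mathbf{1}_{N_t}$, the non-orthogonal inter-cell term $\rho_1\sum_{g\neq i}\sum_{j\neq n}\sqrt{\Lambda_{jg,i}}\mathbf{H}_{jg,i}(k)\mathbf{1}_{N_t}$, and the filtered noise $\mathbf{W}_i'(k)=\mathbf{W}_i(k)\mathbf{X}_{ni}^H(k)$---then vectorize, form the outer product $\text{vec}\{\mathbf{W}_i''(k)\}\text{vec}\{\mathbf{W}_i''(k)\}^H$, and take the expectation. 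The claimed decomposition into exactly the four summands $\mathbf{R}_{i,1}(k)+\cdots+\mathbf{R}_{i,4}(k)$ will follow once every cross-term is shown to vanish.

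First I would vectorize each channel factor. Applying the identity $\text{vec}\{\mathbf{A}\mathbf{B}\mathbf{C}\}=(\mathbf{C}^T\otimes\mathbf{A})\text{vec}\{\mathbf{B}\}$ to $\mathbf{H}_{jg,i}(k)=\mathbf{A}_{jg,i}\mathbf{D}_{jg,i}\mathbf{B}_{jg,i}^H(k)$ yields precisely $\text{vec}\{\mathbf{H}_{jg,i}(k)\}=(\mathbf{B}_{jg,i}^*(k)\otimes\mathbf{A}_{jg,i})\text{vec}\{\mathbf{D}_{jg,i}\}=\bm{\mathcal{P}}_{jg,i}(k)$, which is exactly how the matrix $\bm{\mathcal{P}}$ in the statement arises. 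For the terms carrying the all-ones factor, the same identity gives $\text{vec}\{\mathbf{H}\rho_1\mathbf{1}_{N_t}\}=\rho_1(\mathbf{1}_{N_t}^T\otimes\mathbf{I}_{N_r})\text{vec}\{\mathbf{H}\}$, and the symmetry $\mathbf{1}_{N_t}^T=\mathbf{1}_{N_t}$ produces the factor $\bm{\mathcal{X}}_{t,r}=\mathbf{1}_{N_t}\otimes\mathbf{I}_{N_r}$ acting on $\bm{\mathcal{P}}_{jg,i}(k)$. Since $\bm{\mathcal{X}}_{t,r}$ is real and symmetric, it reappears unchanged on both sides of the outer product, explaining the sandwiched form $\bm{\mathcal{X}}_{t,r}\bm{\mathcal{R}}_{jg,i}(k)\bm{\mathcal{X}}_{t,r}$ in \eqref{R_i_2_k} and \eqref{R_i_3_k}.

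The substantive step is the decoupling. Because the expectation is taken over the channel gains, DoAs, and DoDs of the interference channels, and because distinct $(\text{user},\text{cell})$ index pairs correspond to independent channels with zero-mean complex gains $\alpha$, every cross-term $\mathbb{E}[\bm{\mathcal{P}}_{jg,i}(k)\bm{\mathcal{P}}_{j'g',i}^H(k)]$ with $(j,g)\neq(j',g')$ vanishes. The three interference groups occupy disjoint index sets---cells $g\neq i$ with user $n$; cell $i$ with users $j\neq n$; and cells $g\neq i$ with users $j\neq n$---so no cross-terms survive either within or between the groups, and $\bm{\mathcal{P}}_{jg,i}(k)\bm{\mathcal{P}}_{jg,i}^H(k)=\bm{\mathcal{R}}_{jg,i}(k)$ collapses each group to \eqref{R_i_1_k}--\eqref{R_i_3_k}. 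I expect this independence-and-zero-mean argument to be the main obstacle, since it is what legitimately kills the cross-correlations and must be stated explicitly as a modeling assumption on the interference channels rather than taken for granted.

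Finally the noise term is handled separately. Writing $\text{vec}\{\mathbf{W}_i'(k)\}=(\mathbf{X}_{ni}^*(k)\otimes\mathbf{I}_{N_r})\text{vec}\{\mathbf{W}_i(k)\}$ and using that $\mathbf{W}_i(k)$ is spatially white with variance $\sigma^2$ and independent of every channel---so that it contributes no cross-correlation with the interference terms---the covariance reduces to $\sigma^2\bigl((\mathbf{X}_{ni}(k)\mathbf{X}_{ni}^H(k))^*\otimes\mathbf{I}_{N_r}\bigr)$. Invoking the assumed orthonormality of the target MS's own spreading sequences, $\mathbf{X}_{ni}(k)\mathbf{X}_{ni}^H(k)=\mathbf{I}_{N_t}$, this is exactly $\mathbf{R}_{i,4}(k)=\sigma^2\mathbf{I}_{N_rN_t}$, which completes the four-term decomposition.
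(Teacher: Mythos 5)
Your proposal is correct and follows essentially the same route as the paper, whose proof is only a one-line sketch invoking "properties of matrix vectorization" and "independence of channel gains" — you have simply made explicit the $\text{vec}\{\mathbf{A}\mathbf{B}\mathbf{C}\}=(\mathbf{C}^T\otimes\mathbf{A})\text{vec}\{\mathbf{B}\}$ identity, the zero-mean/independence argument that kills the cross-terms, and the $\mathbf{X}_{ni}(k)\mathbf{X}_{ni}^H(k)=\mathbf{I}_{N_t}$ normalization behind $\mathbf{R}_{i,4}(k)=\sigma^2\mathbf{I}_{N_rN_t}$. Your observation that independence alone does not suffice and that the zero-mean property of the gains must be assumed explicitly is a fair refinement of the paper's terser statement.
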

\begin{proof}[Proof Sketch]
Lemma \ref{lemma_covariance} can be proved using the properties of matrix vectorization, and with the assumption of the independence of channel gains.
\end{proof}
It is noteworthy here that in Lemma \ref{lemma_covariance}, $\mathbf{R}_{i,1}(k)$, $\mathbf{R}_{i,2}(k)$, $\mathbf{R}_{i,3}(k)$, and $\mathbf{R}_{i,4}(k)$ correspond, respectively, to the effects of pilot contamination, intra-cell interference, inter-cell interference, and noise element of the noise-plus-interference signal.
Now,  the first order approximation of the mean square estimation error of $v_{ni,i,\ell}$ for the Unitary ESPRIT is given by \cite{analytical_performance}:
\begin{align}\notag
\mathbb{E}\left\{\left(\triangle v_{ni,i,\ell} \right)^2\right\}& = \frac{1}{2}\left({\mathbf{r}_{ni,i,\ell}^{(v)^{H}}}\cdot \mathbf{W}_{ni,i,mat}^* \cdot \mathbf{R}_{i}^{(fba)^{T}} \cdot \mathbf{W}_{ni,i,mat}^T {\mathbf{r}_{ni,i,\ell}^{(v)}} \right.\\\label{equ_mse1}
&\left.
- \text{Re} \left\{{\mathbf{r}_{ni,i,\ell}^{(v)^{T}}}\cdot \mathbf{W}_{ni,i,mat} \cdot \mathbf{C}_{i}^{(fba)} \cdot \mathbf{W}_{ni,i,mat}^T \cdot {\mathbf{r}_{ni,i,\ell}^{(v)}}\right\}\right),
\end{align}
where
\begin{equation}\label{r_ni_i_l}
{\mathbf{r}_{ni,i,\ell}^{(v)}}=\mathbf{q}_{\ell} \otimes \left(\left[(\tilde{\mathbf{J}}_1^{(v)}\mathbf{U}^{sig}_{ni,i})^{+}(\tilde{\mathbf{J}}_2^{(v)}/e^{jv_{ni,i,\ell}}-\tilde{\mathbf{J}}_1^{(v)})\right]^T\mathbf{p}_{\ell}\right),
\end{equation}
\begin{equation}\label{W_ni_i_mat}
\mathbf{W}_{ni,i,mat}=({\mathbf{\Sigma}^{{sig}^{-1}}_{ni,i}}\mathbf{V}^{{sig}^T}_{ni,i}) \otimes (\mathbf{U}^{noise}_{ni,i} \mathbf{U}^{{noise}^H}_{ni,i}).
\end{equation}
Here,  $\mathbf{\tilde{J}}_{v,1}=[\mathbf{I}_{M_2-1}\quad \mathbf{0}]\otimes\mathbf{I}_{M_1}$ and $\mathbf{\tilde{J}}_{v,2}=[\mathbf{0}\quad \mathbf{I}_{M_2-1}]\otimes\mathbf{I}_{M_1}$ are the selection matrices for the first and second subarrays, respectively, for the spatial frequency $v_{ni,i,\ell}$;
$\mathbf{T}$ is the transformation matrix,  $\mathbf{q}_{\ell}$ is the $\ell$-th column of matrix $\mathbf{T}_{ni,i}$, $\mathbf{p}_{\ell}^T$ is the $\ell$-th row of matrix $\mathbf{T}_{ni,i}^{-1}$; $\mathbf{R}_{i}^{fba}$ and $\mathbf{C}_{i}^{fba}$ are the covariance and complementary covariance matrices of the noise-plus-interference, respectively. Now, let us consider the following lemma:

\begin{lemma}\label{lemma_R_i_fba_decompose}
Covariance and complementary covariance matrices of the forward-backward averaged signal can be decomposed as
\begin{align}\label{R_i_fba_k}	&\mathbf{R}_{i}^{(fba)}(k)=\mathbf{R}_{i,1}^{(fba)}(k)+\mathbf{R}_{i,2}^{(fba)}(k)+\mathbf{R}_{i,3}^{(fba)}(k)+\mathbf{R}_{i,4}^{(fba)}(k),\\ &\mathbf{C}_{i}^{(fba)}(k)=\mathbf{C}_{i,1}^{(fba)}(k)+\mathbf{C}_{i,2}^{(fba)}(k)+\mathbf{C}_{i,3}^{(fba)}(k)+\mathbf{C}_{i,4}^{(fba)}(k),
\end{align}
where
\begin{equation}
\begin{split}
&\mathbf{R}_{i,m}^{(fba)}(k) = \begin{bmatrix}
\mathbf{R}_{i,m}(k)&\mathbf{0}\\
\mathbf{0}&\bm{\Pi}_{N_rN_t}\mathbf{R}_{i,m}^*(k)\bm{\Pi}_{N_rN_t}
\end{bmatrix};\ \ \mathbf{C}_{i,m}^{(fba)}(k) = \begin{bmatrix}
\mathbf{0}&\mathbf{R}_{i,m}(k)\bm{\Pi}_{N_rN_t}\\
\bm{\Pi}_{N_rN_t}\mathbf{R}_{i,m}^*(k)&\mathbf{0}
\end{bmatrix},
\label{equ_covariance_1}
\end{split}
\end{equation}
for $m = 1, \ldots, 4$, where $\mathbf{R}_{i,m}(k)$'s are given by Lemma \ref{lemma_covariance}.
\end{lemma}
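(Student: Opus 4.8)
The plan is to observe that both block-matrix constructions in \eqref{equ_covariance} are \emph{linear} in the underlying matrix $\mathbf{R}_i(k)$, and then to invoke the additive decomposition of $\mathbf{R}_i(k)$ furnished by Lemma~\ref{lemma_covariance}. Concretely, I would define the operators $\mathcal{F}$ and $\mathcal{G}$ that send any $N_rN_t \times N_rN_t$ matrix $\mathbf{M}$ to
\begin{align}\notag
\mathcal{F}(\mathbf{M}) = \begin{bmatrix} \mathbf{M}&\mathbf{0}\\ \mathbf{0}&\bm{\Pi}_{N_rN_t}\mathbf{M}^*\bm{\Pi}_{N_rN_t} \end{bmatrix}, \quad \mathcal{G}(\mathbf{M}) = \begin{bmatrix} \mathbf{0}&\mathbf{M}\bm{\Pi}_{N_rN_t}\\ \bm{\Pi}_{N_rN_t}\mathbf{M}^*&\mathbf{0} \end{bmatrix},
\end{align}
so that $\mathbf{R}_i^{(fba)}(k)=\mathcal{F}(\mathbf{R}_i(k))$ and $\mathbf{C}_i^{(fba)}(k)=\mathcal{G}(\mathbf{R}_i(k))$ by \eqref{equ_covariance}, while $\mathbf{R}_{i,m}^{(fba)}(k)=\mathcal{F}(\mathbf{R}_{i,m}(k))$ and $\mathbf{C}_{i,m}^{(fba)}(k)=\mathcal{G}(\mathbf{R}_{i,m}(k))$ by \eqref{equ_covariance_1}.

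First I would verify that $\mathcal{F}$ and $\mathcal{G}$ are additive. The placement of a matrix into the diagonal or off-diagonal slot of a block structure is manifestly linear, and right- or left-multiplication by the fixed exchange matrix $\bm{\Pi}_{N_rN_t}$ is linear; the only operation requiring comment is the complex conjugation appearing in the lower-right block of $\mathcal{F}$ and the lower-left block of $\mathcal{G}$. Since conjugation distributes over finite sums, $\left(\sum_{m=1}^4 \mathbf{R}_{i,m}(k)\right)^* = \sum_{m=1}^4 \mathbf{R}_{i,m}^*(k)$, and the sandwiching by $\bm{\Pi}_{N_rN_t}$ preserves this, so both $\mathcal{F}$ and $\mathcal{G}$ commute with the summation.

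Then I would simply substitute the decomposition $\mathbf{R}_i(k)=\sum_{m=1}^4 \mathbf{R}_{i,m}(k)$ from Lemma~\ref{lemma_covariance} and distribute term by term:
\begin{align}\notag
\mathbf{R}_i^{(fba)}(k)=\mathcal{F}\!\left(\sum_{m=1}^4 \mathbf{R}_{i,m}(k)\right)=\sum_{m=1}^4 \mathcal{F}(\mathbf{R}_{i,m}(k))=\sum_{m=1}^4 \mathbf{R}_{i,m}^{(fba)}(k),
\end{align}
and identically for $\mathbf{C}_i^{(fba)}(k)$ with $\mathcal{G}$ in place of $\mathcal{F}$, which establishes both claimed identities.

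The main obstacle is essentially nonexistent here: the result is a direct consequence of the linearity of the forward-backward block construction together with Lemma~\ref{lemma_covariance}. The only point deserving care is confirming that conjugation commutes with the finite sum, so that the lower-right block of each $\mathbf{R}_{i,m}^{(fba)}(k)$ and the lower-left block of each $\mathbf{C}_{i,m}^{(fba)}(k)$ aggregate correctly to reproduce the corresponding blocks of $\mathbf{R}_i^{(fba)}(k)$ and $\mathbf{C}_i^{(fba)}(k)$; this is immediate and requires no new hypotheses beyond those already invoked for Lemma~\ref{lemma_covariance}.
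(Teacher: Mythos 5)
Your proof is correct and follows essentially the same route as the paper's own (sketched) argument: substitute the additive decomposition of $\mathbf{R}_i(k)$ from Lemma~\ref{lemma_covariance} into the block constructions of \eqref{equ_covariance} and use the linearity of those constructions (including the fact that conjugation and sandwiching by $\bm{\Pi}_{N_rN_t}$ distribute over the finite sum). Your explicit operator formulation via $\mathcal{F}$ and $\mathcal{G}$ is just a cleaner packaging of the same substitution.
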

\begin{proof}[Proof Sketch] This Lemma can be proved by substituting \eqref{R_i_fba_k} into \eqref{equ_covariance}, and by utilizing the definitions of  $\mathbf{R}_{i,m}(k)$s  from Lemma \ref{lemma_covariance}.
\end{proof}
Using Lemma \ref{lemma_R_i_fba_decompose}, we can separately investigate the effects of different elements of noise-plus-interference signal on the DoA estimation performance, and hence, can write \eqref{equ_mse1} as 
$\mathbb{E}\left\{\left(\triangle v_{ni,i,\ell} \right)^2\right\}=\sum\limits_{m=1}^{4}\mathbb{E}\left\{\left(\triangle v_{ni,i,\ell} \right)^2\right\}_m$ where
\begin{align}\notag
\mathbb{E}\left\{\left(\triangle v_{ni,i,\ell} \right)^2\right\}_m &= \frac{1}{2}\left({\mathbf{r}_{ni,i,\ell}^{(v)^{H}}}\cdot \mathbf{W}_{ni,i,mat}^* \cdot \mathbf{R}_{i,m}^{(fba)^{T}} \cdot \mathbf{W}_{ni,i,mat}^T {\mathbf{r}_{ni,i,\ell}^{(v)}} \right.\\\label{equ_mse1_new1}
&\left.
- \text{Re} \left\{{\mathbf{r}_{ni,i,\ell}^{(v)^{T}}}\cdot \mathbf{W}_{ni,i,mat} \cdot \mathbf{C}_{i,m}^{(fba)} \cdot \mathbf{W}_{ni,i,mat}^T \cdot {\mathbf{r}_{ni,i,\ell}^{(v)}}\right\}\right).
\end{align}
Now, \eqref{equ_mse1_new1} depends on the singular value decomposition (SVD) of the noiseless received signal, which can be difficult to obtain at the BS.
However, for massive MIMO systems, this becomes possible due to the fact that the steering vectors are orthogonal.
We consider the following Lemma~\cite{Rubayet_Journal} to facilitate the derivation of the MSE expression for massive MIMO systems:
\begin{lemma}
If the elevation and azimuth angles are both drawn independently from a continuous distribution, the normalized array response vectors become orthogonal asymptotically, that is, $\mathbf{\bar{e}}_{r,jg,i}(m)\perp\textbf{span}\left\{\mathbf{\bar{e}}_{r,{j^{'}}{g^{'}},{i^{'}}}(n)\;|\;\forall (j,g,i,m)\ne (j^{'},g^{'},i^{'},n)\right\}$ when the number of antennas at the base station goes large, where $\mathbf{\bar{e}}_{r,jg,i}(m)=\frac{1}{\sqrt{N_r}}\mathbf{e}_{r,jg,i}(m)$.
	\label{lemma:angle}
\end{lemma}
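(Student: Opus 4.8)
The plan is to reduce the claimed asymptotic orthogonality to the decay of the normalized inner product between any two distinct receive steering vectors, and then to establish that decay by exploiting the Kronecker structure of the planar-array response. Since the index set $(j,g,i,m)$ ranges over only finitely many users, cells, base stations, and taps, it suffices to show that $\mathbf{\bar{e}}_{r,jg,i}^H(m)\,\mathbf{\bar{e}}_{r,j'g',i'}(n)\to 0$ for every distinct pair; orthogonality to the entire finite-dimensional span then follows immediately. Writing the two vectors as $\mathbf{a}(v_1)\otimes\mathbf{a}(u_1)$ and $\mathbf{a}(v_2)\otimes\mathbf{a}(u_2)$ with spatial-frequency pairs $(u_1,v_1)\neq(u_2,v_2)$, I would apply the mixed-product identity $(\mathbf{A}\otimes\mathbf{B})^H(\mathbf{C}\otimes\mathbf{D})=(\mathbf{A}^H\mathbf{C})\otimes(\mathbf{B}^H\mathbf{D})$ to factor the normalized inner product as $\frac{1}{M_1 M_2}\,(\mathbf{a}^H(v_1)\mathbf{a}(v_2))\,(\mathbf{a}^H(u_1)\mathbf{a}(u_2))$, i.e., into a product of two one-dimensional Dirichlet kernels.

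First I would record the standard closed form $\mathbf{a}^H(u_1)\mathbf{a}(u_2)=\sum_{m=0}^{M_1-1}e^{jm\Delta u}$ with $\Delta u=u_2-u_1$, whose magnitude equals $|\sin(M_1\Delta u/2)/\sin(\Delta u/2)|$. The key observation is that this quantity is bounded by the array-size-independent constant $1/|\sin(\Delta u/2)|$ whenever $\Delta u\not\equiv 0 \ (\mathrm{mod}\ 2\pi)$, and equals $M_1$ when $\Delta u\equiv 0$; the same dichotomy holds for the $v$-factor with $M_2$. I would then split into cases according to which spatial frequencies coincide. If $u_1\neq u_2$ and $v_1\neq v_2$, both kernels are bounded by constants, so dividing by $M_1 M_2$ drives the inner product to zero. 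If exactly one frequency matches, say $u_1=u_2$ but $v_1\neq v_2$, the $u$-kernel equals $M_1$ while the $v$-kernel stays bounded, so the normalized product is $O(1/M_2)\to 0$; the symmetric subcase yields $O(1/M_1)\to 0$. Thus in every nondegenerate case the normalized inner product vanishes, provided both $M_1,M_2\to\infty$, which is the correct reading of $N_r=M_1 M_2\to\infty$ for a planar array; I would note explicitly that growing a single dimension alone would not neutralize a matched frequency in the other dimension.

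The remaining and most delicate point is the degenerate case $u_1=u_2$ and $v_1=v_2$, where neither kernel decays. Here I would invoke the continuity hypothesis on the angle distributions. Because $u=\frac{2\pi\Delta_r}{\lambda}\cos\theta$ and $v=\frac{2\pi\Delta_r}{\lambda}\sin\theta\cos\phi$, the two frequency pairs coincide only if $\theta_1=\theta_2$ and (away from the measure-zero set $\sin\theta=0$) $\phi_1=\phi_2$; since the elevation and azimuth angles of distinct paths are drawn independently from continuous distributions, the event that two distinct angle pairs are exactly equal is a null event. Hence, almost surely, every distinct index pair falls into a nondegenerate case and the asymptotic orthogonality follows. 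I expect this last step to be the main obstacle, both because it is where the probabilistic continuity assumption does the real work and because one must handle the angle-to-frequency map carefully (in particular the $\sin\theta=0$ exclusion and the injectivity of $\cos$ on the elevation range) when arguing that coincidence occurs with probability zero.
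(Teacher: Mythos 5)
Your argument is correct and is the standard Dirichlet-kernel proof of this asymptotic orthogonality: the paper itself gives no proof of this lemma, deferring entirely to the cited reference \cite{Rubayet_Journal}, where essentially the same Kronecker-factorization ($(\mathbf{a}(v_1)\otimes\mathbf{a}(u_1))^H(\mathbf{a}(v_2)\otimes\mathbf{a}(u_2))=(\mathbf{a}^H(v_1)\mathbf{a}(v_2))(\mathbf{a}^H(u_1)\mathbf{a}(u_2))$) plus measure-zero-coincidence argument is used. Your explicit handling of the partially degenerate cases (one matched spatial frequency, which forces both $M_1\to\infty$ and $M_2\to\infty$ rather than merely $N_r\to\infty$) is a careful addition, although under the stated continuity assumption those cases are themselves almost-surely excluded, so for a fixed random realization the two bounded kernels divided by $N_r=M_1M_2$ already vanish almost surely.
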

Using this property, we can analytically characterize the effect of each individual perturbation element on the DoA estimation performance.
To be specific, the MSE of UL DoA estimation due to pilot contamination is given by the following Theorem:
\begin{theorem}\label{Theorem1}
	{For the massive MIMO network, the MSE of the unitary ESPRIT-based UL DoA estimation due to pilot contamination is given by…
	}
\begin{align}\notag
\mathds{E}_{\theta,\phi,\phi}\{(\Delta v_{ni,i,\ell})^2\}_{1}=&\frac{1}{8|\alpha_{ni,i}(\ell)|^2N_t^2\Lambda_{ni,i}(M_2-1)^2M_1^2} \times \\\label{Thm1Eq}
&\qquad\sum\limits_{\substack{g=0\\g\neq i}}^{G-1}\Lambda_{ng,i} X_{ng,i} \left(\sum\limits_{m=0}^{L_{ng,i-1}}|\alpha_{ng,i}(m)|^2\right)\left(Y_{ng,i}+Y_{ng,i}^{'}-2\Re\left\{e^{j\Phi}\tilde{Y}_{ng,i}\right\}\right),
\end{align}
where $\Phi=\left((M_1-1)u_{ni,i,\ell}+(M_2-1)v_{ni,i,\ell}\right)$, and  $X_{ng,i}$ and $Y_{ng,i}$ are given by
\begin{align} X_{ng,i}&=\mathds{E}_{\psi}\left|\left(1+e^{-j(\omega_{ni,i,\ell}-\omega_{ng,i,m})}+\ldots+e^{-j(N_t-1)(\omega_{ni,i,\ell}-\omega_{ng,i,m})}\right)\right|^2,\\\notag
Y_{ng,i}&=\mathds{E}_{\theta,\phi}\left|\left(1+e^{j(u_{ni,i,\ell}-u_{ng,i,m})}+\ldots+e^{j(M_1-1)(u_{ni,i,\ell}-u_{ng,i,m})}\right) \times \right.\\
&\qquad\left. \left(e^{j(M_2-1)(v_{ni,i,\ell}-v_{ng,i,m} ) }-1\right) \right|^2,
\end{align}
and $Y_{ng,i}^{'}$ and $\tilde{Y}_{ng,i}$ are given by
\begin{align}\notag	Y_{ng,i}^{'}&=\mathds{E}_{\theta,\phi}\left|\left(e^{j(M_1-1)u_{ng,i,m}}+e^{ju_{ni,i,\ell}}e^{j(M_1-2)u_{ng,i,m}}+\ldots+e^{j(M_1-1)u_{ni,i,\ell}}\right)\times\right.\\
&\qquad\left.\left(e^{j(M_2-1)v_{ni,i,\ell}}-e^{j(M_2-1)v_{ng,i,m}}\right)\right|^2,\\\notag \tilde{Y}_{ng,i}&=\mathds{E}_{\theta,\phi}\bigg[\left(e^{-j(M_1-1)u_{ng,i,m}}+e^{-ju_{ni,i,\ell}}e^{-j(M_1-2)u_{ng,i,m}}+\ldots+e^{-j(M_1-1)u_{ni,i,\ell}}\right)\times\\\notag
&\qquad\left(e^{-j(M_2-1)v_{ni,i,\ell}}-e^{-j(M_2-1)v_{ng,i,m}}\right)\left(1+\ldots+e^{(M_1-1)(u_{ng,i,m}-u_{ni,i,\ell})}\right)\times\\\label{Y_ng_i_tilde}
&\qquad\left(e^{j(M_2-1)(v_{ng,i,m}-v_{ni,i,\ell})}-1\right)\bigg]
\end{align}
for $m=0,\ldots L_{ng,i}-1$, and $\mathds{E}_{\psi}$ and $\mathds{E}_{\theta,\phi}$ denote, respectively, expectations with respect to DoD and DoAs of the interference channel.
\end{theorem}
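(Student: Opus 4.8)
The plan is to specialize the per-component MSE formula \eqref{equ_mse1_new1} to the pilot-contamination index $m=1$ and reduce its two quadratic forms to the explicit transmit/receive geometric sums of \eqref{Thm1Eq}. First I would substitute the block forms of $\mathbf{R}_{i,1}^{(fba)}(k)$ and $\mathbf{C}_{i,1}^{(fba)}(k)$ from \eqref{equ_covariance_1} (Lemma~\ref{lemma_R_i_fba_decompose}) and then insert the explicit covariance $\mathbf{R}_{i,1}(k)=\mathds{E}_{\alpha,\theta,\phi,\psi}[\sum_{g\neq i}\Lambda_{ng,i}\,\bm{\mathcal{R}}_{ng,i}(k)]$ from \eqref{R_i_1_k} (Lemma~\ref{lemma_covariance}). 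Because each $\bm{\mathcal{R}}_{ng,i}(k)=\bm{\mathcal{P}}_{ng,i}(k)\bm{\mathcal{P}}_{ng,i}^H(k)$ is rank one, both quadratic forms collapse to scalars built from $\bm{\mathcal{P}}_{ng,i}^H(k)\,\mathbf{W}_{ni,i,mat}^T\,\mathbf{r}_{ni,i,\ell}^{(v)}$. Expanding $\bm{\mathcal{P}}_{ng,i}(k)=\sum_m\alpha_{ng,i}(m)\,\mathbf{e}^*_{t,ng,i,k}(m)\otimes\mathbf{e}_{r,ng,i}(m)$ and exploiting the Kronecker structure of $\mathbf{W}_{ni,i,mat}$ in \eqref{W_ni_i_mat}, each scalar factorizes into an independent transmit-array (DoD) inner product and a receive-array (DoA) inner product, which is precisely the split that will produce $X_{ng,i}$ on one side and $Y_{ng,i}$-type terms on the other.

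The second step makes the subspace quantities explicit via the asymptotic orthonormality of the normalized steering vectors (Lemma~\ref{lemma:angle}). In the massive-MIMO regime $\mathbf{A}_{ni,i}^H\mathbf{A}_{ni,i}\to N_r\mathbf{I}_{L_{ni,i}}$, so $\mathbf{U}^{sig}_{ni,i}=\mathbf{A}_{ni,i}\mathbf{T}_{ni,i}$ for a computable transformation matrix $\mathbf{T}_{ni,i}$, the signal singular values scale as $\sqrt{N_rN_t\Lambda_{ni,i}}\,|\alpha_{ni,i}(\ell)|$, and the noise projector reduces to $\mathbf{U}^{noise}_{ni,i}\mathbf{U}^{{noise}^{H}}_{ni,i}=\mathbf{I}_{N_r}-\tfrac{1}{N_r}\mathbf{A}_{ni,i}\mathbf{A}_{ni,i}^H$. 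Crucially, every interference steering vector $\mathbf{e}_{r,ng,i}(m)$ with $g\neq i$ is asymptotically orthogonal to the target signal space, so the projector acts on it as the identity up to vanishing terms. These facts collapse the inverse singular-value and pseudo-inverse factors into the explicit prefactor $1/\bigl(8|\alpha_{ni,i}(\ell)|^2N_t^2\Lambda_{ni,i}(M_2-1)^2M_1^2\bigr)$ appearing in \eqref{Thm1Eq}.

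The third step evaluates the receive-array inner product between the receive factor of $\mathbf{r}_{ni,i,\ell}^{(v)}$ in \eqref{r_ni_i_l} and $\mathbf{e}_{r,ng,i}(m)$. Substituting $\tilde{\mathbf{J}}_1^{(v)},\tilde{\mathbf{J}}_2^{(v)}$, the shift $e^{-jv_{ni,i,\ell}}$, and $\mathbf{e}_{r,ng,i}(m)=\mathbf{a}(v_{ng,i,m})\otimes\mathbf{a}(u_{ng,i,m})$ turns this inner product into the product of a finite elevation geometric sum in $u_{ni,i,\ell}-u_{ng,i,m}$ (full length $M_1$, since elevation is preserved by $\mathbf{I}_{M_1}$) and an azimuth phase difference $e^{j(M_2-1)(v_{ni,i,\ell}-v_{ng,i,m})}-1$ (from $\tilde{\mathbf{J}}_2^{(v)}-\tilde{\mathbf{J}}_1^{(v)}$ along $M_2$); squaring and taking the interference-angle expectation yields $Y_{ng,i}$, while the transmit inner product likewise gives $X_{ng,i}$. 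The complementary-covariance term is handled identically except that $\bm{\Pi}_{N_r}$ and $\bm{\Pi}_{N_t}$ conjugate-reverse the array indices; this reversal generates the companion sum $Y_{ng,i}^{'}$, the mixed cross term $\tilde{Y}_{ng,i}$ of \eqref{Y_ng_i_tilde}, and the overall phase $e^{j\Phi}$ with $\Phi=(M_1-1)u_{ni,i,\ell}+(M_2-1)v_{ni,i,\ell}$. Collecting the covariance contribution $(Y_{ng,i}+Y_{ng,i}^{'})$ and subtracting twice the real part $\Re\{e^{j\Phi}\tilde{Y}_{ng,i}\}$ of the complementary-covariance contribution assembles \eqref{Thm1Eq}.

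I expect the main obstacle to be the bookkeeping of the receive-side bilinear forms: one must track exactly how the two selection matrices, the azimuth shift $e^{-jv_{ni,i,\ell}}$, the asymptotic noise projector, and the FBA exchange matrices interact on the Kronecker-structured steering vector $\mathbf{a}(v)\otimes\mathbf{a}(u)$ so as to reproduce the elevation sums times azimuth phase differences in $Y_{ng,i}$, $Y_{ng,i}^{'}$, and $\tilde{Y}_{ng,i}$, including the precise phase $\Phi$. A secondary subtlety is justifying, again through Lemma~\ref{lemma:angle}, that the cross-tap terms $m\neq m'$ and the coupling between the interference and target subspaces are negligible asymptotically, so that only the diagonal $\sum_{m}|\alpha_{ng,i}(m)|^2$ survives inside the expectation.
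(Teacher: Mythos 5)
Your proposal follows essentially the same route as the paper's Appendix~\ref{Proof_Theorem1}: specialize \eqref{equ_mse1_new1} to $m=1$, exploit the Kronecker factorization $\mathbf{W}_{ni,i,mat}^T\mathbf{r}_{ni,i,\ell}^{(v)}=\bm{\beta}_{ni,i,\ell}\otimes\bm{\alpha}_{v,ni,i,\ell}$ together with the rank-one structure of $\bm{\mathcal{R}}_{ng,i}(k)$, make the signal/noise subspaces explicit via Lemma~\ref{lemma:angle}, and evaluate the resulting transmit/receive geometric sums to obtain $X_{ng,i}$, $Y_{ng,i}$, $Y_{ng,i}^{'}$, and $\tilde{Y}_{ng,i}$ with the phase $e^{j\Phi}$ coming from the FBA exchange matrices. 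The only slip is that forward--backward averaging doubles the data, so the signal singular values scale as $\sqrt{2N_rN_t\Lambda_{ni,i}}\,|\alpha_{ni,i}(\ell)|$ rather than $\sqrt{N_rN_t\Lambda_{ni,i}}\,|\alpha_{ni,i}(\ell)|$; with that $\sqrt{2}$ restored your bookkeeping reproduces the $1/8$ prefactor exactly as in the paper.
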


\begin{proof}
See Appendix \ref{Proof_Theorem1}.
\end{proof}
{
	
\begin{remark}\label{jacobian11}
Based on Jacobian, MSEs of the elevation and azimuth angles can be obtained from the MSEs of the spatial frequencies as follows \cite{zhu2013joint}
\begin{align}
\mathds{E}_{\theta,\phi,\phi}\{(\Delta\theta_{\ell})^2\}&=\mathds{E}_{\theta,\phi,\phi}\{(\Delta u_{\ell} )^2\}\frac{1}{\pi^2 \sin^2(\theta_{\ell})},\\
\mathds{E}_{\theta,\phi,\phi}\{(\Delta\theta_{\ell})^2\}&=\frac{\mathds{E}_{\theta,\phi,\phi}\{(\Delta u_{\ell} )^2\}\cot^2(\theta_{\ell})\cot^2(\phi_{\ell})}{\pi^2 \sin^2(\theta_{\ell})}+\frac{\mathds{E}_{\theta,\phi,\phi}\{(\Delta v_{\ell} )^2\}}{\pi^2 \sin^2(\theta_{\ell}) \sin^2(\phi_{\ell})}
 \end{align}
\end{remark}
}
\begin{remark}
	The expectation expressed in \eqref{Thm1Eq} of Theorem \ref{Theorem1} is NOT taken with respect to time, rather, it is taken with respect to the DoAs or DoDs of interference users due to random locations of those interference users.
\end{remark}
Similarly, the effect of intra-cell interference, inter-cell interference, and  noise elements on the MSE performance are characterized in Theorem 2, Theorem 3, and Theorem 4, respectively:

\begin{theorem}
	{For the massive MIMO network, the MSE of the unitary ESPRIT-based UL DoA estimation due to intra-cell interference is given by
	}
\begin{align}\notag
\mathbb{E}_{\theta,\phi,\phi}\left\{\left(\triangle v_{ni,i,\ell} \right)^2\right\}_2&=\frac{\rho_1^2{|X_{ni,i,\ell}^{''}|}^2}{8|\alpha_{ni,i}(\ell)|^2N_t^2\Lambda_{ni,i}(M_2-1)^2M_1^2}\times\\
&\qquad\sum\limits_{\substack{j=0\\j\neq n}}^{J-1}\Lambda_{ji,i}{X}_{ji,i}\left(\sum\limits_{m=0}^{L_{ji,i}-1}|\alpha_{ji,i}(m)|^2\right)\left({Y}_{ji,i}+{Y}_{ji,i}^{'}-2\Re\left\{e^{j\Phi}\tilde{Y}_{ji,i}\right\}\right),
\end{align}
where $X_{ni,i,\ell}^{''}=\sum\limits_{m=0}^{N_t-1}e^{jm\omega_{ni,i,\ell}}$.
\end{theorem}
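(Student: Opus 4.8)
The plan is to reuse, essentially verbatim, the machinery of Appendix~\ref{Proof_Theorem1} that establishes Theorem~\ref{Theorem1}, since the intra-cell term $\mathbb{E}\{(\triangle v_{ni,i,\ell})^2\}_2$ in \eqref{equ_mse1_new1} is the \emph{same} quadratic form as the pilot-contamination term, with $\mathbf{R}_{i,1}^{(fba)}$ and $\mathbf{C}_{i,1}^{(fba)}$ merely replaced by $\mathbf{R}_{i,2}^{(fba)}$ and $\mathbf{C}_{i,2}^{(fba)}$ from Lemma~\ref{lemma_R_i_fba_decompose}. I would therefore begin from \eqref{equ_mse1_new1} with $m=2$ and recall, from \eqref{R_i_2_k}, that $\mathbf{R}_{i,2}(k)=\rho_1^2\sum_{j\neq n}\Lambda_{ji,i}\,\mathbb{E}[\bm{\mathcal{X}}_{t,r}\bm{\mathcal{R}}_{ji,i}(k)\bm{\mathcal{X}}_{t,r}]$ is a sum of outer products $(\bm{\mathcal{X}}_{t,r}\bm{\mathcal{P}}_{ji,i})(\bm{\mathcal{X}}_{t,r}\bm{\mathcal{P}}_{ji,i})^{H}$. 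Because the weighting vector $\mathbf{g}=\mathbf{W}_{ni,i,mat}^{T}\mathbf{r}_{ni,i,\ell}^{(v)}$ built from \eqref{r_ni_i_l}--\eqref{W_ni_i_mat} is unchanged, the entire contribution collapses, interferer by interferer, to the scalar $|\mathbf{g}^{H}(\bm{\mathcal{X}}_{t,r}\bm{\mathcal{P}}_{ji,i})|^2$ plus the analogous complementary-covariance piece.

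The one genuinely new computation -- and the source of the extra prefactor $\rho_1^2|X_{ni,i,\ell}^{''}|^2$ -- is the simplification of $\bm{\mathcal{X}}_{t,r}\bm{\mathcal{P}}_{ji,i}$. Writing $\bm{\mathcal{P}}_{ji,i}=\sum_{m}\alpha_{ji,i}(m)\big(\mathbf{e}_{t,ji,i,k}^{*}(m)\otimes\mathbf{e}_{r,ji,i}(m)\big)$ and using the mixed-product rule together with the rank-one factorization of the all-ones matrix $\mathbf{1}_{N_t}$, I would show that the transmit block collapses: $\bm{\mathcal{X}}_{t,r}\bm{\mathcal{P}}_{ji,i}=\mathbf{1}_{t}\otimes\mathbf{z}_{ji}$, where $\mathbf{1}_{t}$ is the all-ones $N_t$-vector and $\mathbf{z}_{ji}=\sum_{m}\alpha_{ji,i}(m)\,c_{ji,m}\,\mathbf{e}_{r,ji,i}(m)$ with $c_{ji,m}=\sum_{q=0}^{N_t-1}e^{-jq\omega_{ji,i,m}}e^{j2\pi km/N_c}$. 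Since the transmit factor of $\mathbf{g}$ is $\mathbf{\Sigma}^{{sig}^{-1}}_{ni,i}\mathbf{V}^{{sig}^{T}}_{ni,i}$, contracting it against $\mathbf{1}_{t}$ picks out the target transmit steering vector summed over its entries, which has magnitude $|X_{ni,i,\ell}^{''}|$ with $X_{ni,i,\ell}^{''}=\sum_{m=0}^{N_t-1}e^{jm\omega_{ni,i,\ell}}$ (the value is insensitive to the conjugation convention since $|\sum_q e^{jq\omega}|=|\sum_q e^{-jq\omega}|$); squaring produces the $|X_{ni,i,\ell}^{''}|^2$ prefactor absent from Theorem~\ref{Theorem1}. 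The interferer's own summed transmit response $|c_{ji,m}|^2$ is exactly what becomes $X_{ji,i}$ after the DoD expectation $\mathbb{E}_{\psi}$. In contrast to the cross-correlation $X_{ng,i}$ of Theorem~\ref{Theorem1}, the all-ones combiner induced by the non-orthogonal pilots thus \emph{decouples} the target and interferer transmit responses into two separate factors.

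With the transmit dimension disposed of, the receive-dimension accounting is identical to Appendix~\ref{Proof_Theorem1}. Here I would invoke Lemma~\ref{lemma:angle}: for large $N_r=M_1M_2$ the normalized receive steering vectors are asymptotically orthogonal, so $\mathbf{U}^{sig}_{ni,i}$, $\mathbf{\Sigma}^{sig}_{ni,i}$, $\mathbf{V}^{sig}_{ni,i}$ admit closed forms in the target steering vectors and gains, and the noise projector $\mathbf{U}^{noise}_{ni,i}\mathbf{U}^{{noise}^{H}}_{ni,i}$ annihilates the target direction while leaving $\mathbf{e}_{r,ji,i}(m)$. Substituting these, applying the selection matrices $\tilde{\mathbf{J}}_1^{(v)},\tilde{\mathbf{J}}_2^{(v)}$ and expanding the receive Vandermonde structure in $u$ and $v$ reproduces $Y_{ji,i}$, $Y_{ji,i}^{'}$ and $\tilde{Y}_{ji,i}$ exactly as for pilot contamination: the covariance block $\mathbf{R}_{i,2}^{(fba)}$ yields $Y_{ji,i}+Y_{ji,i}^{'}$, while the complementary block $\mathbf{C}_{i,2}^{(fba)}$, via the real-part term of \eqref{equ_mse1_new1} and the exchange matrices $\bm{\Pi}_{N_r}$, supplies $-2\Re\{e^{j\Phi}\tilde{Y}_{ji,i}\}$ with $\Phi=(M_1-1)u_{ni,i,\ell}+(M_2-1)v_{ni,i,\ell}$. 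Collecting the scalar $\rho_1^2$, the normalization $1/\big(8|\alpha_{ni,i}(\ell)|^2N_t^2\Lambda_{ni,i}(M_2-1)^2M_1^2\big)$, and finally taking the independent expectations over the interferers' gains (which factor out $\sum_m|\alpha_{ji,i}(m)|^2$) and over their random DoDs and DoAs then yields the claimed expression.

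I expect the main obstacle to be the conjugate/transpose bookkeeping through the forward-backward doubling -- specifically, verifying that $\mathbf{C}_{i,2}^{(fba)}$ contributes the cross term $-2\Re\{e^{j\Phi}\tilde{Y}_{ji,i}\}$ with the correct phase $\Phi$, and checking that the $\mathbf{1}_{N_t}$-induced collapse is compatible with the fba symmetrization (it is, because $\bm{\mathcal{X}}_{t,r}$ acts as $\mathbf{I}_{N_r}$ on the receive factor on which $\bm{\Pi}_{N_r}$ operates, so the two operations commute). A secondary subtlety is using Lemma~\ref{lemma:angle} to argue that, asymptotically, the per-tap cross terms ($m\neq\ell$) and the cross terms between distinct interferers become negligible, so that the double sum separates cleanly into $\sum_{j\neq n}\Lambda_{ji,i}X_{ji,i}\big(\sum_m|\alpha_{ji,i}(m)|^2\big)(\cdots)$ rather than leaving residual coupling.
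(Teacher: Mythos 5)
Your proposal follows essentially the same route as the paper's Appendix~\ref{Proof_Theorem2}: start from \eqref{equ_mse1_new1} with $m=2$, exploit the rank-one structure of $\mathbf{1}_{N_t}$ inside $\bm{\mathcal{X}}_{t,r}$ to collapse the transmit dimension and extract the $\rho_1^2|X_{ni,i,\ell}^{''}|^2$ prefactor, and then reuse the Theorem~\ref{Theorem1} receive-side machinery (Lemma~\ref{lemma:angle}'s asymptotic orthogonality, the closed-form signal-subspace SVD, and the covariance/complementary-covariance split of Lemma~\ref{lemma_R_i_fba_decompose}) to recover $Y_{ji,i}+Y_{ji,i}^{'}-2\Re\{e^{j\Phi}\tilde{Y}_{ji,i}\}$ with the stated normalization. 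The one place you diverge is a fine algebraic point rather than a different method: you read the interferer's transmit factor as the decoupled $|\sum_{q}e^{jq\omega_{ji,i,m}}|^2$, which is indeed what $\mathbf{1}_{N_t}=\mathbf{1}\mathbf{1}^T$ forces, whereas the paper's \eqref{withRespectToAlpha_1} writes it as the cross-correlation $|X_{ji,i}(m)|^2$ with $X_{ji,i}(m)=\sum_{r}e^{jr(\omega_{ni,i,\ell}-\omega_{ji,i,m})}$ carried over from Theorem~\ref{Theorem1}'s notation; both versions feed into the same placeholder $X_{ji,i}$ in the final statement, but your reading is the one consistent with the rank-one collapse.
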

\begin{proof}
See Appendix \ref{Proof_Theorem2}.
\end{proof}

\begin{theorem}
	{For the massive MIMO network, the MSE of the unitary ESPRIT-based UL DoA estimation due to inter-cell interference is given by}
\begin{align}\notag
\mathbb{E}_{\theta,\phi,\phi}\left\{\left(\triangle v_{ni,i,\ell} \right)^2\right\}_3&=\frac{\rho_1^2{|X_{ni,i,\ell}^{''}|}^2}{8|\alpha_{ni,i}(\ell)|^2N_t^2\Lambda_{ni,i}(M_2-1)^2M_1^2}\times\\
&\qquad\sum\limits_{\substack{g=0\\g\neq i}}^{G-1}\sum\limits_{\substack{j=0\\j\neq n}}^{J-1}\Lambda_{jg,i}{X}_{jg,i}\left(\sum\limits_{m=0}^{L_{jg,i}-1}|\alpha_{jg,i}(m)|^2\right)\left({Y}_{jg,i}+{Y}_{jg,i}^{'}-2\Re\left\{e^{j\Phi}\tilde{Y}_{jg,i}\right\}\right),
\end{align}
\end{theorem}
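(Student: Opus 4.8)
The plan is to mirror the derivation of Theorem~2, exploiting the fact that the inter-cell covariance component $\mathbf{R}_{i,3}(k)$ in \eqref{R_i_3_k} shares the identical $\bm{\mathcal{X}}_{t,r}$-sandwich structure with the intra-cell component $\mathbf{R}_{i,2}(k)$ in \eqref{R_i_2_k}; the two differ only in that the single intra-cell sum $\sum_{j\neq n}$ over users in cell $i$ is replaced by the inter-cell double sum $\sum_{g\neq i}\sum_{j\neq n}$ over all users in all foreign cells. First I would substitute the inter-cell blocks $\mathbf{R}_{i,3}^{(fba)}(k)$ and $\mathbf{C}_{i,3}^{(fba)}(k)$ from Lemma~\ref{lemma_R_i_fba_decompose} into the per-component MSE \eqref{equ_mse1_new1} with $m=3$. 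Because the factor $\bm{\mathcal{X}}_{t,r}=(\mathbf{1}_{N_t}\otimes\mathbf{I}_{N_r})$ multiplies $\bm{\mathcal{R}}_{jg,i}(k)$ on both sides exactly as in the intra-cell case, the transmit-side Kronecker factors again collapse onto the target user's transmit geometry, producing the scalar $|X_{ni,i,\ell}^{''}|^2$ with $X_{ni,i,\ell}^{''}=\sum_{m=0}^{N_t-1}e^{jm\omega_{ni,i,\ell}}$, together with the $\rho_1^2$ carried by \eqref{R_i_3_k}, out in front.

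Next I would invoke the asymptotic orthogonality of the normalized receive steering vectors (Lemma~\ref{lemma:angle}) to fix the signal- and noise-subspace quantities appearing in $\mathbf{r}_{ni,i,\ell}^{(v)}$ and $\mathbf{W}_{ni,i,mat}$. As $N_r=M_1 M_2\to\infty$, distinct taps' receive steering vectors become orthonormal, so $\mathbf{U}_{ni,i}^{sig}$ aligns (up to the $\sqrt{N_r}$ normalization) with the columns of $\mathbf{A}_{ni,i}$, the entries of $\mathbf{\Sigma}_{ni,i}^{sig}$ concentrate around $\sqrt{\Lambda_{ni,i}}\,|\alpha_{ni,i}(\ell)|\,N_r$-type values, and $\mathbf{U}_{ni,i}^{noise}\mathbf{U}_{ni,i}^{noise\,H}$ becomes the projector onto the orthogonal complement of the target signal subspace. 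These closed forms reduce the bilinear forms in \eqref{equ_mse1_new1} to receive-side inner products between the target path's steering vector and those of each interfering path, with the overall scaling $1/\bigl(8|\alpha_{ni,i}(\ell)|^2 N_t^2\Lambda_{ni,i}(M_2-1)^2 M_1^2\bigr)$ emerging from $\mathbf{\Sigma}_{ni,i}^{sig^{-1}}$ together with the selection-matrix normalization.

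With the subspaces fixed, the remaining work is to evaluate these inner products for each interferer indexed by $(g,j,m)$. Applying the selection-matrix difference $\bigl(\tilde{\mathbf{J}}_2^{(v)}/e^{jv_{ni,i,\ell}}-\tilde{\mathbf{J}}_1^{(v)}\bigr)$ to a steering vector yields the azimuth-difference factor $\bigl(e^{j(M_2-1)(v_{ni,i,\ell}-v_{jg,i,m})}-1\bigr)$, while the elevation geometric sums assemble the $\bigl(1+\ldots+e^{j(M_1-1)(\cdot)}\bigr)$ factors; together these build up $Y_{jg,i}$, $Y_{jg,i}^{'}$, and the cross term $\tilde{Y}_{jg,i}$ with the common phase $\Phi=(M_1-1)u_{ni,i,\ell}+(M_2-1)v_{ni,i,\ell}$, exactly as in Theorem~2 but now carrying the foreign-cell index $jg,i$ in place of $ji,i$. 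Since the channel gains and the DoA/DoD's of distinct interferers are independent, the expectation $\mathds{E}_{\alpha,\theta,\phi,\psi}$ distributes additively over each summand, so the single intra-cell sum is simply promoted to the double sum $\sum_{g\neq i}\sum_{j\neq n}$, which is the sole structural change relative to Theorem~2.

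The main obstacle I anticipate lies in the careful bookkeeping of Lemma~\ref{lemma:angle} at the subspace step: one must verify that the cross-subspace leakage between the target tap and each interfering path vanishes at the correct rate in $N_r$, so that every bilinear form localizes cleanly onto a single interfering path without residual coupling to the receive-array geometry, and that the $\bm{\mathcal{X}}_{t,r}$ sandwich factors out as the scalar $|X_{ni,i,\ell}^{''}|^2$ independently of the receive-side terms. Once this asymptotic localization is justified term-by-term, the entire derivation of Theorem~2 carries over under the index substitution $ji,i\to jg,i$ with the added outer sum over $g\neq i$, yielding the stated expression.
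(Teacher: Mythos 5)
Your proposal is correct and matches the paper's approach exactly: the paper proves Theorem 3 by noting it is "similar to the proof of Theorem 2," which is precisely your observation that $\mathbf{R}_{i,3}(k)$ shares the same $\bm{\mathcal{X}}_{t,r}$-sandwich structure as $\mathbf{R}_{i,2}(k)$ so the entire Theorem 2 derivation carries over under the index substitution $ji,i\to jg,i$ with the added outer sum over $g\neq i$.
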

\begin{proof}
The proof is similar to the proof of Theorem 2.
\end{proof}
\begin{theorem}
	{For the massive MIMO network, the MSE of the unitary ESPRIT-based UL DoA estimation due to noise element is given by
	}
\begin{align}\notag
&\mathbb{E}\left\{\left(\triangle v_{ni,i,\ell} \right)^2\right\}_4=\frac{\sigma^2}{2|\alpha_{ni,i}(\ell)|^2N_t\Lambda_{ni,i}(M_2-1)^2M_1},
\end{align}
where $\sigma^2$ is the noise variance.
\end{theorem}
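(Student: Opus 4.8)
The plan is to specialize the general decomposed MSE formula \eqref{equ_mse1_new1} to the noise index $m=4$, for which Lemma \ref{lemma_covariance} gives the isotropic covariance $\mathbf{R}_{i,4}(k)=\sigma^2\mathbf{I}_{N_rN_t}$. First I would substitute this into the block expressions of \eqref{equ_covariance_1}: since $\bm{\Pi}_{N_rN_t}^2=\mathbf{I}_{N_rN_t}$, the forward-backward covariance collapses to $\mathbf{R}_{i,4}^{(fba)}(k)=\sigma^2\mathbf{I}_{2N_rN_t}$, while the complementary covariance $\mathbf{C}_{i,4}^{(fba)}(k)$ becomes $\sigma^2$ times the block-antidiagonal exchange matrix whose off-diagonal blocks equal $\bm{\Pi}_{N_rN_t}$. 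Because $\mathbf{R}_{i,4}^{(fba)}$ is proportional to the identity, the first quadratic form in \eqref{equ_mse1_new1} reduces at once to $\tfrac{\sigma^2}{2}\,{\mathbf{r}_{ni,i,\ell}^{(v)^{H}}}\mathbf{W}_{ni,i,mat}^{*}\mathbf{W}_{ni,i,mat}^{T}{\mathbf{r}_{ni,i,\ell}^{(v)}}=\tfrac{\sigma^2}{2}\bigl\|\mathbf{W}_{ni,i,mat}^{T}{\mathbf{r}_{ni,i,\ell}^{(v)}}\bigr\|^2$, so the whole problem reduces to evaluating a single norm together with the exchange-matrix cross term.

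Next I would make the two Kronecker factors explicit. From \eqref{W_ni_i_mat} and \eqref{r_ni_i_l}, both $\mathbf{W}_{ni,i,mat}$ and ${\mathbf{r}_{ni,i,\ell}^{(v)}}$ factor as Kronecker products, so the norm separates into an elevation/transmit factor and an azimuth signal-subspace factor. The key enabler is Lemma \ref{lemma:angle}: in the massive-MIMO regime the receive steering vectors are asymptotically orthogonal, so the SVD of the noiseless forward-backward signal in \eqref{Noiseless_SVD} becomes explicit. The columns of $\mathbf{U}^{sig}_{ni,i}$ coincide with the normalized receive responses $\bar{\mathbf{e}}_{r,ni,i}(\ell)$, the projector $\mathbf{U}^{noise}_{ni,i}\bigl(\mathbf{U}^{noise}_{ni,i}\bigr)^{H}$ projects onto their orthogonal complement, and the singular values satisfy $\bigl(\mathbf{\Sigma}^{sig}_{ni,i}\bigr)_{\ell\ell}^{2}\to 2\,\Lambda_{ni,i}\,|\alpha_{ni,i}(\ell)|^{2}\,N_rN_t$, where the factor $2$ is the forward-backward doubling and $N_rN_t$ is the coherent gain of the two arrays. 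Through $(\mathbf{\Sigma}^{sig}_{ni,i})^{-1}$ in \eqref{W_ni_i_mat}, this is what produces the $|\alpha_{ni,i}(\ell)|^2\Lambda_{ni,i}$ factor and part of the $N_t$ in the denominator.

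With the subspaces pinned down, I would evaluate the azimuth factor $\bigl[(\tilde{\mathbf{J}}_1^{(v)}\mathbf{U}^{sig}_{ni,i})^{+}(\tilde{\mathbf{J}}_2^{(v)}/e^{jv_{ni,i,\ell}}-\tilde{\mathbf{J}}_1^{(v)})\bigr]$ of \eqref{r_ni_i_l}. Acting the selection matrices $\tilde{\mathbf{J}}_{v,1}$ and $\tilde{\mathbf{J}}_{v,2}$ on the Vandermonde structure of $\bar{\mathbf{e}}_{r,ni,i}(\ell)=\tfrac{1}{\sqrt{N_r}}\,\mathbf{a}(v_{ni,i,\ell})\otimes\mathbf{a}(u_{ni,i,\ell})$ yields shifted copies whose residual, after projection by $\mathbf{U}^{noise}_{ni,i}\bigl(\mathbf{U}^{noise}_{ni,i}\bigr)^{H}$, has squared norm scaling as $(M_2-1)M_1$; carrying this through the norm generates the characteristic $(M_2-1)^2M_1$ shift-invariance baseline of the $M_2$-element azimuth array together with the surviving $N_t$. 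Collecting the $(\mathbf{\Sigma}^{sig}_{ni,i})^{-2}$, array-gain, and selection-matrix factors, and then adding the complementary-covariance contribution, should yield $\frac{\sigma^2}{2|\alpha_{ni,i}(\ell)|^2N_t\Lambda_{ni,i}(M_2-1)^2M_1}$; under the real-valued Unitary ESPRIT framework and the orthogonality of Lemma \ref{lemma:angle}, the $\mathbf{C}_{i,4}^{(fba)}$ term in \eqref{equ_mse1_new1} is precisely what cancels the spurious factor of two left over from the forward-backward doubling.

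The main obstacle I anticipate is constant-bookkeeping rather than anything conceptual: keeping the factor of two from forward-backward averaging consistent as it enters both $\mathbf{\Sigma}^{sig}_{ni,i}$ and the block structure of $\mathbf{C}_{i,4}^{(fba)}$, and verifying that the pseudoinverse $(\tilde{\mathbf{J}}_1^{(v)}\mathbf{U}^{sig}_{ni,i})^{+}$ composed with the noise projector reproduces exactly $(M_2-1)^2M_1$ rather than a neighboring expression. This requires a careful componentwise evaluation of the Vandermonde and selection-matrix products, which is routine but lengthy. I would also emphasize the structural reason the noise term is so much simpler than Theorems 2 and 3: the isotropy $\mathbf{R}_{i,4}(k)=\sigma^2\mathbf{I}_{N_rN_t}$ removes the $\bm{\mathcal{X}}_{t,r}$ correlation matrices and the $\rho_1^2$ weighting, which is exactly why $N_t$ appears here to the first power and $\rho_1$ does not appear at all.
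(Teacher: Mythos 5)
Your route is sound and is in fact more self-contained than what the paper does for this theorem: the paper's ``proof'' of Theorem~4 is a one-line deferral to Theorem~1 of the prior work \cite{Rubayet_Journal}, whereas you specialize the paper's own machinery --- the decomposition \eqref{equ_mse1_new1} with $m=4$, Lemma~\ref{lemma_covariance}'s $\mathbf{R}_{i,4}(k)=\sigma^2\mathbf{I}_{N_rN_t}$, Lemma~\ref{lemma_R_i_fba_decompose}, and the explicit asymptotic SVD from Appendix~\ref{Proof_Theorem1} ($\mathbf{U}^{sig}_{ni,i}=\mathbf{A}_{ni,i}/\sqrt{N_r}$, $\mathbf{\Sigma}^{sig}_{ni,i}=\sqrt{2N_rN_t\Lambda_{ni,i}}\,\mathrm{diag}\{\bar{\mathbf{b}}_{ni,i}\}$, and the closed form of $\bm{\alpha}_{v,ni,i,\ell}$). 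Your attribution of the factors is essentially right: the $\Lambda_{ni,i}|\alpha_{ni,i}(\ell)|^2$ and part of the $N_t$ come through $\mathbf{\Sigma}^{sig^{-1}}_{ni,i}$ via $\bm{\beta}_{ni,i,\ell}$ (with $\|\bm{\beta}_{ni,i,\ell}\|^2=1/(2N_t|\alpha_{ni,i}(\ell)|^2\Lambda_{ni,i})$), and the $(M_2-1)^2M_1$ comes from $\|\bm{\alpha}_{v,ni,i,\ell}\|^2=2/((M_2-1)^2M_1)$, which is the inverse of the scaling you quote (a harmless slip since you place it correctly in the denominator at the end). This buys a derivation that is consistent with how Theorems~1--3 are proved in this paper, rather than importing a result proved under a different setup.

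The one genuine gap is your final assembly step: the claim that the $\mathbf{C}_{i,4}^{(fba)}$ term ``cancels the spurious factor of two left over from the forward-backward doubling.'' Carrying out the cross term exactly as in \eqref{second_part_MSE1_pilot_simplified1} of Appendix~\ref{Proof_Theorem1}, one finds $\bm{\alpha}_{v,ni,i,\ell}^{T}\bm{\Pi}_{N_r}\bm{\alpha}_{v,ni,i,\ell}=-2e^{-j\Phi}/((M_2-1)^2M_1)$ and $\mathbf{e}_{t,ni,i,k}^{T}(\ell)\mathbf{e}_{t,ni,i,k}^{*}(\ell)=N_t$, so the full complementary contribution $(\bm{\beta}_{ni,i,\ell}\otimes\bm{\alpha}_{v,ni,i,\ell})^{T}\mathbf{C}_{i,4}^{(fba)}(\bm{\beta}_{ni,i,\ell}\otimes\bm{\alpha}_{v,ni,i,\ell})$ is real and \emph{negative}; after the leading minus sign in \eqref{equ_mse1_new1} it therefore \emph{adds} to the direct $\mathbf{R}_{i,4}^{(fba)}$ term rather than cancelling anything. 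Meanwhile the direct term alone, $\tfrac{1}{2}\sigma^2\|\bm{\beta}_{ni,i,\ell}\|^2\|\bm{\alpha}_{v,ni,i,\ell}\|^2$, already equals the stated $\sigma^2/(2|\alpha_{ni,i}(\ell)|^2N_t\Lambda_{ni,i}(M_2-1)^2M_1)$. So your accounting, if executed as described, would land a factor of $2$ away from the theorem (or require showing that the complementary contribution vanishes or is absorbed into the FBA normalization, which is presumably what the cited analysis in \cite{Rubayet_Journal} does). You correctly anticipated that constant bookkeeping is the risk; this is the specific place where it bites, and it needs to be resolved explicitly rather than asserted as a cancellation.
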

\begin{proof}
This theorem can be proved following the line of proof for Theorem-1 in \cite{Rubayet_Journal}.
\end{proof}
Similarly,  we can also obtain the MSE expressions for elevation spatial frequency, $\mathbb{E}\left\{\left(\triangle u_{ni,i,\ell} \right)^2\right\}$.
Accordingly, based on Jacobian matrices, we can characterize MSE expressions for UL elevation and azimuth DoAs from the MSEs of the spatial frequencies.
\begin{remark}
From Theorem 2 and Theorem 3 we can observe that non-zero correlation among the spreading sequences of different MSs does cause intra- and inter-cell interference for UL DoA estimation, and the corresponding MSEs of the estimation are directly affected by the correlation coefficient, $\rho_1$.
On the other hand, as we can see from Theorem 1, the MSE due to pilot contamination is not dependent on the correlation coefficient.

Furthermore, these four theorems suggest that our original work in~\cite{Rubayet_Journal} may yield strictly suboptimal solutions since in that work we only consider DoA estimation error due to noise elements.
This observation will be verified through performance evaluation in Section V.
\end{remark}
{
\subsection{Complexity Analysis}\label{CompComplexity_DoA}
In this subsection, we discuss the computational coomplexity of unitary ESPRIT-based DoA estimation procedure as presented in Section \ref{UL_DOA_Estimation}. We first summarize the computational complexity of some basic operations in terms of floating point operations (FLOPS). It requires $2(n-1)mp$ FLOPS for computing product of two matrices of sizes $(m \times n)$ and $n \times p$. For taking inverse of a positive definite matrix of size $(n \times n)$ requires $(n^3 +n^2+n)$ floating point operations; number of FLOPS required for taking SVD of an $m \times n$ matrix is $(4m^2n+22n^3)$, and complexity for finding eigenvalues of an $n \times n$ matrix is $n^3$. Now, we can describe the computational complexity of each step of our algorithm presented. For complexity analysis, we assume all the channels have $L$ resolvable paths. Now, correlating the received signal with training symbol matrix in  \eqref{Z_i_k_2} requires $C_a=2(Q-1)N_rN_t$ number of FLOPS. Taking forward-backward averaging in \eqref{H_nii_fba} requires $C_b=2N_rN_t(N_r+N_t-2)$ FLOPS. Number of FLOPS required for taking SVD of the forward-backward averaged received signal in \eqref{Noiseless_SVD} is $C_c=(8N_r^2N_t +176N_t^3)$. Now, for solving shift-invariance equations for  elevation and azimuth spatial frequencies, the number of FLOPS required are, respectively, $C_d=2[M_2(M_1-1)-1]L^2+[L^3+L^2+L]+[2(L-1)M_2L(M_1-1)]$, and $C_e=2[M_1(M_2-1)-1]L^2+[L^3+L^2+L]+[2(L-1)M_1L(M_2-1)]$. Finally, for calculating the eigenvalues of two shift-invariance operator matrices requires $C_f=2L^3$ number of FLOPS. Hence, total computational complexity of our ESPRIT-based DoA estimation method can be written as $C_{\text{ESPRIT}}=C_a+C_b+C_c+C_d+C_e+C_f$. Next, for comparison. we compute the computational complexity of MUSIC algorithm. For computing the covariance matrix of the received signal, the number of FLOPS required is $D_a=(Q+1)N_r^2$. Next, computing the SVD of the covariance matrix requires $D_b=26N_r^3$ FLOPS. Let $N_g$ denote the number of grids for candidate DoA search. Hence, total number of FLOPS required for extracting the eigenvectors corresponding to noise subspace is $D_c=N_g\left([2N_r(N_r-L)]+[2N_r-3]\right)$. Hence computational cost for MUSIC algorithm is $C_{\text{MUSIC}}=D_a+D_b+D_c$.

}
\section{Downlink Precoding and Achievable Rate Analysis}
\subsection{Optimum Precoding for Sum-rate Maximization}\label{Optim_Precoding}
In the DL, at the $i$-th BS, the $N_s \times 1 $ information symbol vector intended for the $n$-th MS in the $i$-th cell on the $k$-th subcarrier can be expressed as $\mathbf{s}^{dl}_{ni}[k]=\begin{bmatrix}
s^{dl}_{ni,0}[k],
\ldots,
s^{dl}_{ni,N_s-1}[k]
\end{bmatrix}$,
where $s^{dl}_{ni,p}[k]$ is the $p$-th information symbol intended for the $n$-th MS.
Accordingly, the $N_r \times 1$ downlink frequency domain transmit signal from the $i$-th BS can be written as
\begin{align}
\mathbf{x}^{dl}_{i}[k]&=\sum_{j=0}^{J-1}\mathbf{x}^{dl}_{ji}[k]=\sum_{j=0}^{J-1}\mathbf{V}_{ji}[k]\mathbf{s}^{dl}_{ji}[k],
\end{align}
where $\mathbf{x}^{dl}_{ji}[k]=\mathbf{V}_{ji}[k]\mathbf{s}^{dl}_{ji}[k]$, and $\mathbf{V}_{ji}[k]$ is the $N_r \times N_s$ precoding matrix for the $j$-th MS in the $i$-th cell on the $k$-th subcarrier.
Now, the $N_t \times 1$  received signal at the $n$-th MS in the $i$-th cell on the $k$-th subcarrier, $\mathbf{y}_{ni}^{dl}[k]$, can be written as
\begin{align}\notag
\mathbf{y}_{ni}^{dl}[k]&=\sum\limits_{g=0}^{G-1}\sqrt{\Lambda_{ni,g}}\mathbf{H}_{ni,g}^{dl}[k]\mathbf{x}_{g}^{dl}[k]+\mathbf{n}_{ni}^{dl}[k]=\sum\limits_{g=0}^{G-1}\sum_{j=0}^{J-1}\sqrt{\Lambda_{ni,g}}\mathbf{H}_{ni,g}^{dl}[k]\mathbf{V}_{jg}[k]\mathbf{s}^{dl}_{jg}[k]+\mathbf{n}_{ni}^{dl}[k]\\\notag
&=\sqrt{\Lambda_{ni,i}}\mathbf{H}_{ni,i}^{dl}[k]\mathbf{V}_{ni}[k]\mathbf{s}^{dl}_{ni}[k]+\sum\limits_{\substack{j=0\\j\neq n}}^{J-1}\sqrt{\Lambda_{ni,i}}\mathbf{H}_{ni,i}^{dl}[k]\mathbf{V}_{ji}[k]\mathbf{s}^{dl}_{ji}[k]\\\label{y_ni_dl_k}
&\quad+\sum\limits_{\substack{g=0\\g\neq i}}^{G-1}\sum\limits_{\substack{j=0\\j\neq n}}^{J-1}\sqrt{\Lambda_{ni,g}}\mathbf{H}_{ni,g}^{dl}[k]\mathbf{V}_{jg}[k]\mathbf{s}^{dl}_{jg}[k]+\mathbf{n}_{ni}^{dl}[k],
\end{align}
where $\mathbf{H}_{ni,g}^{dl}[k]$ is the $N_t \times N_r$ downlink channel between the $g$-th BS and the $n$-th MS in $i$-th cell on the $k$-th subcarrier, and $\mathbf{n}_{ni}^{dl}[k]$ is the corresponding $N_t \times 1$ noise vector at the receiver with $\mathds{E}\{\mathbf{n}_{ni}^{dl}[m]\mathbf{n}_{ni}^{dl}[n]\}=\sigma^2\mathbf{I}_{N_t}\delta(m-n)$.
In \eqref{y_ni_dl_k}, the first term is the desired signal, while the second and third terms represent the intra- and inter-cell interferences, respectively.
Now, the rate for $n$-th MS in $i$-th cell is given by
\begin{align}\notag
\mathcal{I}_{ni}[k]=\log_2\det&\left(\mathbf{I}+\Lambda_{ni,i}\mathbf{H}_{ni,i}^{dl}[k]\mathbf{V}_{ni}[k]\mathbf{V}_{ni}^{H}[k]{\mathbf{H}_{ni,i}^{{dl}^{H}}}[k]\times\right.\\
&\quad\left.\left(\sum_{(n,i)\ne(j,g)}\Lambda_{ni,g}\mathbf{H}_{ni,g}^{dl}[k]\mathbf{V}_{jg}[k]\mathbf{V}_{jg}^{H}[k]{\mathbf{H}_{ni,g}^{{dl}^{H}}}[k]+\sigma^2\mathbf{I}\right)^{-1}\right).
\end{align}
Accordingly, the sum-rate maximization (SRM) problem can be expressed as
\begin{align}\notag
&\max_{\{\mathbf{V}_{ji}[k]\}}  \ \ \ \sum_{j=0}^{J-1}\mathcal{I}_{ji}[k]\\\label{SRM_1}
& s.t. \ \ \ \sum_{j=0}^{J-1}\text{Tr}\left(\mathbf{V}_{ji}[k]\mathbf{V}_{ji}[k]^H \right)\leq P_t,
\end{align}
where $P_t$ is the total power available at the BS for each sucarrier.
In general, it is challenging to solve the problem in \eqref{SRM_1} since it is highly non-convex.
Alternatively, sum-MSE (mean square error) minimization is another popular utility maximization problem for DL multi-user MIMO systems.
Let $\mathbf{T}_{ni}$ be the DL receive processing matrix for the $n$-th MS.
The estimated received symbol vector can then be written as $\hat{\mathbf{s}}^{dl}_{ni}[k]=\mathbf{T}_{ni}^H\mathbf{y}^{dl}_{ni}[k]$.
Now,  $n$-th MS's MSE matrix can be defined as
\begin{align}\notag
\mathbf{E}_{ni}[k]&=\mathds{E}\left[\left(\hat{\mathbf{s}}^{dl}_{ni}[k]-\mathbf{s}^{dl}_{ni}[k]\right)\left(\hat{\mathbf{s}}^{dl}_{ni}[k]-\mathbf{s}^{dl}_{ni}[k]\right)^H\right]\\\notag
&=\left(\mathbf{I}-\sqrt{\Lambda_{ni,i}}\mathbf{T}_{ni}^H\mathbf{H}_{ni,i}^{dl}[k]\mathbf{V}_{ni}[k]\right)\left(\mathbf{I}-\sqrt{\Lambda_{ni,i}}\mathbf{T}_{ni}^H\mathbf{H}_{ni,i}^{dl}[k]\mathbf{V}_{ni}[k]\right)^H\\
& \ \ \ + \sum_{(n,i)\ne(j,g)} \Lambda_{ni,g} \mathbf{T}_{ni}^H\mathbf{H}_{ni,g}^{dl}[k]\mathbf{V}_{jg}[k]\mathbf{V}_{jg}^{H}[k]{\mathbf{H}_{ni,g}^{{dl}^{H}}}[k]\mathbf{T}_{ni}+\sigma^2\mathbf{T}_{ni}^H\mathbf{R}_{ni}
\end{align}
Accordingly, the sum-MSE minimization problem can be defined as
\begin{align}\notag
&\min_{\{\mathbf{V}_{ji}[k]\}} \ \ \ \sum_{j=0}^{J-1}\epsilon_{ji}[k]\\\label{MSE_problem}
& s.t. \ \ \ \sum_{j=0}^{J-1}\text{Tr}\left(\mathbf{V}_{ji}[k]\mathbf{V}_{ji}[k]^H \right)\leq P_t.
\end{align}
where $\epsilon_{ni}[k]=\text{Tr}\{\mathbf{E}_{ni}[k]\}$.
The relationship between the problems in \eqref{SRM_1} and \eqref{MSE_problem} can be established by the following lemma~\cite{Iteratively_Weighted_MMSE}:
\begin{lemma}
The sum-rate maximization problem in \eqref{SRM_1} and the sum-MSE minimization problem in \eqref{MSE_problem} are equivalent in the sense that the optimal solutions, $\{\mathbf{V}_{ji}[k]\}_{j=0}^{J-1}$, for both problems are identical.
\end{lemma}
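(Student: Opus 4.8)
The plan is to prove the equivalence through the weighted-MMSE reformulation of \cite{Iteratively_Weighted_MMSE}, treating the receive filters $\{\mathbf{T}_{ji}\}$ together with a set of auxiliary positive-definite weight matrices $\{\mathbf{W}_{ji}[k]\}$ as additional optimization variables, and then showing that at any stationary point the precoders optimizing the weighted sum-MSE objective also satisfy the first-order conditions of the sum-rate problem \eqref{SRM_1}. Because both \eqref{SRM_1} and \eqref{MSE_problem} are non-convex, the argument is carried out at the level of shared stationary points (hence the common global optimizer), rather than via convex duality.

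First I would fix the precoders and minimize $\epsilon_{ni}[k]=\text{Tr}\{\mathbf{E}_{ni}[k]\}$ over the receive matrix $\mathbf{T}_{ni}$ alone, which enters only the MSE problem. Setting $\partial\epsilon_{ni}[k]/\partial\mathbf{T}_{ni}^{*}=\mathbf{0}$ yields the Wiener (MMSE) receiver
\begin{align}\notag
\mathbf{T}_{ni}^{\star}[k]=\sqrt{\Lambda_{ni,i}}\Big(\sum_{(j,g)}\Lambda_{ni,g}\mathbf{H}_{ni,g}^{dl}[k]\mathbf{V}_{jg}[k]\mathbf{V}_{jg}^{H}[k]{\mathbf{H}_{ni,g}^{{dl}^{H}}}[k]+\sigma^2\mathbf{I}\Big)^{-1}\mathbf{H}_{ni,i}^{dl}[k]\mathbf{V}_{ni}[k].
\end{align}
Substituting $\mathbf{T}_{ni}^{\star}[k]$ back and applying the matrix-inversion lemma, the resulting MMSE matrix takes the compact form
\begin{align}\notag
\mathbf{E}_{ni}^{\star}[k]=\Big(\mathbf{I}+\Lambda_{ni,i}\mathbf{V}_{ni}^{H}[k]{\mathbf{H}_{ni,i}^{{dl}^{H}}}[k]\mathbf{J}_{ni}^{-1}[k]\mathbf{H}_{ni,i}^{dl}[k]\mathbf{V}_{ni}[k]\Big)^{-1},
\end{align}
where $\mathbf{J}_{ni}[k]$ is the interference-plus-noise covariance collecting the second, third, and noise terms of \eqref{y_ni_dl_k}. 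Comparing this with $\mathcal{I}_{ni}[k]$ and using $\det(\mathbf{I}+\mathbf{X}\mathbf{Y})=\det(\mathbf{I}+\mathbf{Y}\mathbf{X})$ gives the key identity $\mathcal{I}_{ni}[k]=-\log_2\det\mathbf{E}_{ni}^{\star}[k]$, tying the per-user rate directly to the achievable MMSE matrix.

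Next I would introduce the weights $\{\mathbf{W}_{ji}[k]\}$ and form the augmented objective $\sum_{j}\big(\text{Tr}(\mathbf{W}_{ji}[k]\mathbf{E}_{ji}[k])-\log\det\mathbf{W}_{ji}[k]\big)$ subject to the same power constraint. Minimizing jointly over $\mathbf{T}$ first (again the Wiener filter) and then over $\mathbf{W}$ via $\partial/\partial\mathbf{W}_{ji}[k]=\mathbf{0}$ yields $\mathbf{W}_{ji}^{\star}[k]=(\mathbf{E}_{ji}^{\star}[k])^{-1}$; back-substitution collapses the objective to $\sum_{j}\log\det(\mathbf{E}_{ji}^{\star}[k])^{-1}+\text{const}=\sum_{j}\mathcal{I}_{ji}[k]+\text{const}$. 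Hence the stationarity condition in $\mathbf{V}_{ji}[k]$ for the weighted-MSE problem coincides term-by-term with the KKT condition of \eqref{SRM_1}, the Lagrange multiplier on the power constraint matching in both, so any precoder set solving one problem solves the other, which is the claimed equivalence.

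The main obstacle is this last step: verifying that, after eliminating the auxiliary variables $\mathbf{T}$ and $\mathbf{W}$, the gradient of the weighted-MSE objective with respect to $\mathbf{V}_{ji}[k]$ reduces exactly to the gradient of the sum-rate objective, including the coupling interference terms that each $\mathbf{V}_{ji}[k]$ contributes to every other user's MSE through the sum $\sum_{(j,g)}$. This requires careful bookkeeping of the cross-terms and an appeal to the envelope (Danskin) property to justify that differentiating through the optimized $\mathbf{T}^{\star}$ and $\mathbf{W}^{\star}$ leaves no residual gradient contribution. The remaining pieces, namely the Wiener-filter derivation, the matrix-inversion-lemma simplification, and the determinant identity, are routine.
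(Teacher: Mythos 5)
The paper itself offers no proof of this lemma; it simply cites \cite{Iteratively_Weighted_MMSE}, and your proposal is a faithful reconstruction of the argument in that reference: MMSE receiver, the identity $\mathcal{I}_{ni}[k]=-\log_2\det\mathbf{E}_{ni}^{\star}[k]$, auxiliary weights with optimum $\mathbf{W}_{ji}^{\star}[k]=(\mathbf{E}_{ji}^{\star}[k])^{-1}$, and matching of stationarity conditions via the envelope property. In that sense your route is the intended one.

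There is, however, a genuine mismatch you should flag: what your argument (and the cited reference) establishes is the equivalence of \eqref{SRM_1} with the \emph{weighted} sum-MSE problem in which the weights are adaptively set to $(\mathbf{E}_{ji}^{\star}[k])^{-1}$, not with the plain unweighted sum-MSE problem \eqref{MSE_problem} that the lemma literally names. Read literally, the lemma is false: minimizing $\sum_j \mathrm{Tr}(\mathbf{E}_{ji}^{\star}[k])$ and maximizing $\sum_j \log\det\bigl((\mathbf{E}_{ji}^{\star}[k])^{-1}\bigr)$ have different optimizers in general --- already in the single-user, interference-free case the trace-MSE-optimal power allocation is the MMSE water-filling $p_k\propto(\mu/\sqrt{\lambda_k}-1/\lambda_k)^{+}$ whereas the rate-optimal allocation is the capacity water-filling $p_k=(\mu-1/\lambda_k)^{+}$, and these do not coincide. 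Your weight matrices $\{\mathbf{W}_{ji}[k]\}$ are exactly what bridges the two objectives, so after the final back-substitution you have proved equivalence to a problem that is not \eqref{MSE_problem}. To close the argument you would either need to restate the lemma for the weighted problem (which is what the citation actually supports, and which the paper tacitly relies on), or supply an additional argument for why identity weights suffice in this particular setting --- and no such argument exists in general. The remaining ingredients of your proof (Wiener filter, matrix-inversion lemma, determinant identity, Danskin-type differentiation through $\mathbf{T}^{\star}$ and $\mathbf{W}^{\star}$) are standard and correctly deployed.
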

In this work, we assume that no coordination is available among BSs, which is a typical scenario in TDD-based FD-MIMO networks.
Hence, problem in \eqref{MSE_problem} can be written as
\begin{align}\notag
&\min_{\{\mathbf{V}_{ji}[k]\}} \ \ \ \left|\left|\mathbf{T}_{i}^H\mathbf{H}_{i,i}^{dl}[k]\mathbf{V}_{i}[k]-\mathbf{I}\right|\right|_F^2\\\label{MSE_problem2}
& s.t. \ \ \ \sum_{j=0}^{J-1}\text{Tr}\left(\mathbf{V}_{ji}[k]\mathbf{V}_{ji}[k]^H \right)\leq P_t,
\end{align}
where $\mathbf{T}_i^H=\text{blkdiag}\{\mathbf{T}_{0i}^H, \mathbf{T}_{1i}^H, \ldots, \mathbf{T}_{(J-1)i}^H,\}$, $\mathbf{V}_i[k]=\left[\mathbf{V}_{0i}[k], \mathbf{V}_{1i}[k], \ldots, \mathbf{V}_{(J-1)i}[k]\right]$.
and $\mathbf{H}_{i,i}^{dl}[k]=\left[ \sqrt{\Lambda_{0i,i}}\mathbf{H}_{0i,i}^{{dl}^T}[k], \sqrt{\Lambda_{1i,i}}\mathbf{H}_{1i,i}^{{dl}^T}[k], \ldots, \sqrt{\Lambda_{(J-1)i,i}}\mathbf{H}_{(J-1)i,i}^{{dl}^T}[k]\right]^T$.
Now, using channel reciprocity property, the downlink channel can be written in terms of the uplink channel:
\begin{align}\label{H_ji_g_dl_k}
\mathbf{H}_{ni,i}^{dl}[k]= \mathbf{H}_{ni,i}^{T}[k]=\mathbf{B}_{ni,i}^{*}[k]\mathbf{D}_{ni,i}\mathbf{A}_{ni,i}^{T}=\bar{\mathbf{B}}_{ni,i}^{*}\mathbf{D}_{ni,i}\bar{\mathbf{A}}_{ni,i}^{T}[k],
\end{align}
where $\bar{\mathbf{A}}_{ni,i}[k]=\begin{bmatrix}
\mathbf{e}_{r,ni,i,k}(0) & \ldots & \mathbf{e}_{r,ni,i,k}(L_{ni,i}-1)
\end{bmatrix}$, where $\mathbf{e}_{r,ni,i,k}(\ell)=\mathbf{e}_{r,ni,i}(\ell) e^{\frac{-j2\pi k \ell}{N_c}}$,  and $\bar{\mathbf{B}}_{ni,i}=\begin{bmatrix}
\mathbf{e}_{t,ni,i}(0) & \ldots & \mathbf{e}_{t,ni,i}(L_{ni,i}-1)
\end{bmatrix}$.
Assuming each MS will only use its own DL CSI for receive processing, we have $\mathbf{T}_{n,i}^H\mathbf{H}_{ni,i}^{dl}[k]=\mathbf{D}_{ni,i}\bar{\mathbf{A}}_{ni,i}^{T}$.
Accordingly, the problem in \eqref{MSE_problem2}  can be expressed as
\begin{align}\notag
&\min_{\{\mathbf{V}_{ji}[k]\}} \ \ \ \left|\left|\bar{\mathbf{D}}_{i,i}\bar{\mathbf{A}}_{i,i}^{T}[k]\mathbf{V}_{i}[k]-\mathbf{I}\right|\right|_F^2\\\label{MSE_problem3}
& s.t. \ \ \ \sum_{j=0}^{J-1}\text{Tr}\left(\mathbf{V}_{ji}[k]\mathbf{V}_{ji}[k]^H \right)\leq P_t,
\end{align}
where $\bar{\mathbf{D}}_{i,i}=\text{blkdiag}\{\bar{\mathbf{D}}_{0i,i}, \bar{\mathbf{D}}_{1i,i}, \ldots, \bar{\mathbf{D}}_{(J-1)i,i},  \}$, $\bar{\mathbf{A}}_{i,i}[k]=\left[\bar{\mathbf{A}}_{0i,i}[k], \bar{\mathbf{A}}_{1i,i}[k], \ldots, \bar{\mathbf{A}}_{(J-1)i,i}[k]\right]$, and $\bar{\mathbf{D}}_{ni,i}=\sqrt{\Lambda_{ni,i}}\mathbf{D}_{ni,i}$ accounts for both the large and small scale fading effect.
The solution to this problem is given by the following theorem:
\begin{theorem}\label{Theorem_precoder}
Let $\bar{\mathbf{D}}_{i,i}\bar{\mathbf{A}}_{i,i}^{T}[k]=\tilde{\mathbf{U}}_{i,i}[\tilde{\mathbf{\Lambda}}_{i,i}, \mathbf{0}]\tilde{\mathbf{W}}_{i,i}^H$  be the SVD of the effective channel, $\bar{\mathbf{D}}_{i,i}\bar{\mathbf{A}}_{i,i}^{T}[k]$, where $\tilde{\mathbf{\Lambda}}_{i,i}=\text{diag}\{\lambda_{0i,i}, \lambda_{1i,i}, \ldots, \lambda_{(JN_s-1)i,i}\}$. Then the optimal precoding matrix problem in \eqref{MSE_problem3} is given by $\mathbf{V}_i[k]=\tilde{\mathbf{W}}_{i,i} [\mathbf{\Xi}_{i,i},\mathbf{0}]^T\tilde{\mathbf{U}}_{i,i}^H$, where $\mathbf{\Xi}_{i,i}=\text{diag}\{\xi_{0i,i}, \xi_{1i,i}, \ldots, \xi_{(JN_s-1)i,i}\}$, and $\xi_{mi,i}=\lambda_{mi,i}/(\lambda_{mi,i}^2+\eta)$, with the smallest $\eta\ge 0$ satisfying $\sum_{m=0}^{JN_s-1}|\xi_{mi,i}|^2\le P_t$.
\end{theorem}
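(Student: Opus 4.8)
The plan is to recognize \eqref{MSE_problem3} as a convex quadratically constrained quadratic program and to solve it through its Karush--Kuhn--Tucker (KKT) conditions. Writing $\mathbf{G}\triangleq\bar{\mathbf{D}}_{i,i}\bar{\mathbf{A}}_{i,i}^{T}[k]$ for the effective channel, the objective $\|\mathbf{G}\mathbf{V}_i[k]-\mathbf{I}\|_F^2$ is a convex quadratic in $\mathbf{V}_i[k]$ (its Hessian is governed by the positive-semidefinite $\mathbf{G}^H\mathbf{G}$), and the feasible set $\{\mathbf{V}_i[k]:\text{Tr}(\mathbf{V}_i[k]\mathbf{V}_i^H[k])\le P_t\}$ is a Frobenius-norm ball. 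Since $\mathbf{V}_i[k]=\mathbf{0}$ is strictly feasible for $P_t>0$, Slater's condition holds and the KKT conditions are both necessary and sufficient for global optimality. First I would form the Lagrangian $\mathcal{L}=\|\mathbf{G}\mathbf{V}_i[k]-\mathbf{I}\|_F^2+\eta\big(\text{Tr}(\mathbf{V}_i[k]\mathbf{V}_i^H[k])-P_t\big)$ with multiplier $\eta\ge 0$.

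Setting the (Wirtinger) gradient of $\mathcal{L}$ with respect to $\mathbf{V}_i^*[k]$ to zero yields the stationarity condition $(\mathbf{G}^H\mathbf{G}+\eta\mathbf{I})\mathbf{V}_i[k]=\mathbf{G}^H$, i.e. the regularized (MMSE-type) solution $\mathbf{V}_i[k]=(\mathbf{G}^H\mathbf{G}+\eta\mathbf{I})^{-1}\mathbf{G}^H$. Substituting the SVD $\mathbf{G}=\tilde{\mathbf{U}}_{i,i}[\tilde{\mathbf{\Lambda}}_{i,i},\mathbf{0}]\tilde{\mathbf{W}}_{i,i}^H$ and using that $\tilde{\mathbf{U}}_{i,i},\tilde{\mathbf{W}}_{i,i}$ are unitary, the factor $\mathbf{G}^H\mathbf{G}+\eta\mathbf{I}$ is diagonalized by $\tilde{\mathbf{W}}_{i,i}$ with eigenvalues $\lambda_{mi,i}^2+\eta$ on the signal subspace and $\eta$ on its orthogonal complement; the terminal factor $\mathbf{G}^H=\tilde{\mathbf{W}}_{i,i}[\tilde{\mathbf{\Lambda}}_{i,i},\mathbf{0}]^T\tilde{\mathbf{U}}_{i,i}^H$ annihilates the complement, so the product collapses to $\mathbf{V}_i[k]=\tilde{\mathbf{W}}_{i,i}[\mathbf{\Xi}_{i,i},\mathbf{0}]^T\tilde{\mathbf{U}}_{i,i}^H$ with $\xi_{mi,i}=\lambda_{mi,i}/(\lambda_{mi,i}^2+\eta)$, exactly as claimed. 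An equivalent and more transparent route is to exploit unitary invariance of the Frobenius norm directly: with the change of variable $\mathbf{P}=\tilde{\mathbf{W}}_{i,i}^H\mathbf{V}_i[k]\tilde{\mathbf{U}}_{i,i}$ the objective becomes $\|[\tilde{\mathbf{\Lambda}}_{i,i},\mathbf{0}]\mathbf{P}-\mathbf{I}\|_F^2$ and the constraint becomes $\text{Tr}(\mathbf{P}\mathbf{P}^H)\le P_t$; one then argues that the rows of $\mathbf{P}$ lying in the null space and all off-diagonal entries only waste power, forcing $\mathbf{P}=[\mathbf{\Xi}_{i,i},\mathbf{0}]^T$ with diagonal $\mathbf{\Xi}_{i,i}$ and reducing the problem to the decoupled scalar programs $\min_{\xi_{mi,i}}|\lambda_{mi,i}\xi_{mi,i}-1|^2$ under $\sum_m|\xi_{mi,i}|^2\le P_t$.

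Finally I would pin down $\eta$ from primal feasibility and complementary slackness. Inserting the solution gives $\text{Tr}(\mathbf{V}_i[k]\mathbf{V}_i^H[k])=\sum_{m=0}^{JN_s-1}|\xi_{mi,i}|^2=\sum_m \lambda_{mi,i}^2/(\lambda_{mi,i}^2+\eta)^2$, which is continuous and strictly decreasing in $\eta\ge 0$. If the unconstrained pseudoinverse choice $\eta=0$ already satisfies $\sum_m 1/\lambda_{mi,i}^2\le P_t$, then $\eta=0$ is optimal; otherwise complementary slackness forces the constraint to be active and monotonicity guarantees a unique $\eta>0$ solving $\sum_m\lambda_{mi,i}^2/(\lambda_{mi,i}^2+\eta)^2=P_t$. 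Both cases are captured by selecting the smallest $\eta\ge 0$ for which $\sum_m|\xi_{mi,i}|^2\le P_t$, completing the characterization.

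The routine parts are the matrix-calculus gradient and the SVD bookkeeping. The step needing the most care---the main obstacle---is the treatment of the power-allocation multiplier: establishing strict monotonicity of $\text{Tr}(\mathbf{V}_i[k]\mathbf{V}_i^H[k])$ in $\eta$ (so that the smallest such $\eta$ is well defined and the active-constraint root is unique), and handling the rank deficiency of $\mathbf{G}^H\mathbf{G}$ as $\eta\to 0^+$ so that the formula continues to describe the minimum-norm (pseudoinverse) solution on the signal subspace rather than diverging on the null space.
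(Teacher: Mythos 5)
Your proposal is correct and follows essentially the same route as the paper: the paper's proof also forms the Lagrangian of the convex quadratic program \eqref{MSE_problem3}, sets the derivative with respect to $\mathbf{V}_i[k]$ to zero to obtain the regularized least-squares stationarity condition, and reads off the result from the SVD. You simply make explicit the SVD bookkeeping, the annihilation of the null-space eigenvalues by $\mathbf{G}^H$, and the monotonicity argument fixing the multiplier $\eta$, all of which the paper leaves implicit.
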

\begin{proof}
See Appendix \ref{Proof_Theorem5}.
\end{proof}
From theorem \ref{Theorem_precoder}, it can be seen that the optimal precoder that minimizes the sum-MSE, and hence maximizes the sum-rate can be constructed from the estimated UL DoAs as well as the path gains.
In this paper, we assume that the BS has perfect knowledge of the path gains.
However, path gains can also be estimated using  maximum likelihood (ML) method once the DoAs have been estimated.
Our work on this aspect can be found in~\cite{Joint_parametric_ICC17}.

\subsection{Large-Antenna System Analysis}
In this section, we present the achievable rate analysis and simplified precoding strategy for massive FD-MIMO systems.
{Our discussions in this sub-section are based on asymptotic analysis. This can be viewed as the special case of Section \ref{Optim_Precoding} where the number of antennas at the base station goes large asymptotically.
}
\subsubsection{Achievable Rate under Perfect Channel Estimation}
In this case, \eqref{y_ni_dl_k} can be written as
\begin{align}\label{y_ni_dl_k_new2}
&\mathbf{y}_{ni}^{dl}[k]=\bar{\mathbf{B}}_{ni,i}^{*}\bar{\mathbf{D}}_{ni,i}\bar{\mathbf{A}}_{ni,i}^{T}[k]\mathbf{V}_{ni}[k]\mathbf{s}^{dl}_{ni}[k]+\sum\limits_{\substack{j=0\\j\neq n}}^{J-1}\bar{\mathbf{B}}_{ni,i}^{*}\bar{\mathbf{D}}_{ni,i}\bar{\mathbf{A}}_{ni,i}^{T}[k]\mathbf{V}_{ji}[k]\mathbf{s}^{dl}_{ji}[k]+\mathbf{n}_{ni}^{'}[k],
\end{align}
where
\begin{align}
\mathbf{n}_{ni}^{'}[k]=\sum\limits_{\substack{g=0\\g\neq i}}^{G-1}\sum\limits_{\substack{j=0}}^{J-1}\sqrt{\Lambda_{ni,g}}\mathbf{H}_{ni,g}^{dl}[k]\mathbf{V}_{jg}[k]\mathbf{s}^{dl}_{jg}[k]+\mathbf{n}_{ni}^{dl}[k]
\end{align}
is the equivalent noise-plus-inter-cell-interference vector.
As the number of antennas grows large, the right singular matrix, $\tilde{\mathbf{W}}_{i,i}$, in Theorem \ref{Theorem_precoder} can be approximated as the DoA  matrix, $ \bar{\mathbf{A}}_{i,i}^*$.
In other words, for massive FD-MIMO systems, eigen directions align with the directions of arrivals, which is also validated in \cite{Rubayet_Journal} and \cite{Angle_and_Delay_Estimation_for_3D}.
From Lemma \ref{lemma:angle}, the array steering vectors for different MSs become orthogonal as the number of antennas grows large, i.e., we have $(1/N_r)\bar{\mathbf{A}}_{ji,i}^{H}[k]\bar{\mathbf{A}}_{{j^{'}}i,i}[k]\rightarrow \mathbf{0}$ as $N_r \rightarrow \infty$ for all $j \ne j^{'}$.
Hence for the massive MIMO systems, beamforming in the DoA directions nullifies the intra-cell interferences. Therefore, the optimum eigen-beamformer under perfect DoA estimation:
\begin{align}\label{V_ni_i_eig_opt}
\mathbf{V}_{ni}^{eig}[k]=\frac{1}{N_r}\bar{\mathbf{A}}_{ni,i}^{*}[k],
\end{align}
and accordingly, the received signal in \eqref{y_ni_dl_k_new2} can be written as
\begin{align}\label{y_ni_dl_k_new3}
\mathbf{y}_{ni}^{dl}[k]=\bar{\mathbf{B}}_{ni,i}^{*}\bar{\mathbf{D}}_{ni,i}\mathbf{s}^{dl}_{ni}[k]+\mathbf{n}_{ni}^{'}[k].
\end{align}
Now, the  signal in \eqref{y_ni_dl_k_new3}, under the optimal receive processing, results in
\begin{align}\label{y_ni_dl_k_tilde}
\tilde{\mathbf{y}}_{ni}^{dl}[k]=\bar{\mathbf{D}}_{ni,i}\mathbf{s}^{dl}_{ni}[k]+\tilde{\mathbf{n}}_{ni}^{'}[k],
\end{align}
where $\tilde{\mathbf{y}}_{ni}^{dl}[k]=\left(\bar{\mathbf{B}}_{ni,i}^{T}\bar{\mathbf{B}}_{ni,i}^{*}\right)^{-1}\bar{\mathbf{B}}_{ni,i}^{T}\mathbf{y}_{ni}^{dl}[k]$, and $\tilde{\mathbf{n}}_{ni}^{'}[k]=\left(\bar{\mathbf{B}}_{ni,i}^{T}\bar{\mathbf{B}}_{ni,i}^{*}\right)^{-1}\bar{\mathbf{B}}_{ni,i}^{T}\mathbf{n}_{ni}^{'}[k]$. Accordingly, the achievable rate for the $n$-th user in $i$-th cell, $\mathcal{I}_{ni}[k]$, can be expressed as
{
\begin{align}\label{R_ni_k_new1}
\mathcal{I}_{ni}[k]=\log_2\det\left(\mathbf{I}_{L_{ni,i}}+\frac{\bar{\mathbf{D}}_{ni,i}\mathbf{Q}_{ni}^{dl}[k]\bar{\mathbf{D}}_{ni,i}^{H}}{\sum\limits_{\substack{g=0\\g\neq i}}^{G-1}\sum\limits_{\substack{j=0}}^{J-1}\Lambda_{ni,g} \tilde{\mathbf{B}}_{ni,i}^H\mathbf{H}_{ni,g}^{dl}[k]\mathbf{V}_{jg}[k]\mathbf{Q}^{dl}_{jg}[k] \mathbf{V}_{jg}^H[k] \mathbf{H}_{ni,g}^{{dl}^H}[k]\tilde{\mathbf{B}}_{ni,i} +\sigma^2\mathbf{I} }\right).
\end{align}
where,  $\tilde{\mathbf{B}}_{ni,i}^H= \left(\bar{\mathbf{B}}_{ni,i}^{T}\bar{\mathbf{B}}_{ni,i}^{*}\right)^{-1}\bar{\mathbf{B}}_{ni,i}^{T}$, and $\mathbf{Q}_{ni}^{dl}[k]=\mathds{E}\{\mathbf{s}^{dl}_{ni}[k]\mathbf{s}^{{dl}^{H}}_{ni}[k]\}$ is the covariance matrix of the transmit symbol vector  from the $i$-th BS intended for the $n$-th MS on the $k$-th subcarrier. 
{
Now, \eqref{R_ni_k_new1} can succinctly be written as
\begin{align}\label{I_ni_k_1}
\mathcal{I}_{ni}[k]=\log_2\det\left(\mathbf{I}_{L_{ni,i}}+\bar{\mathbf{D}}_{ni,i}\mathbf{Q}_{ni}^{dl}[k]\bar{\mathbf{D}}_{ni,i}^{H}\tilde{\mathbf{R}}_{ni}^{{'}^{-1}}[k]\right),
\end{align}
where inter-cell interference-plus-noise covariance matrix,	$\tilde{\mathbf{R}}_{ni}^{{'}}[k]$, is defined as 
\begin{align}\label{R_ni_tilde_prime_k}
\tilde{\mathbf{R}}_{ni}^{{'}}[k]= \sum\limits_{\substack{g=0\\g\neq i}}^{G-1}\sum\limits_{\substack{j=0}}^{J-1}\Lambda_{ni,g} \tilde{\mathbf{B}}_{ni,i}^H\mathbf{H}_{ni,g}^{dl}[k]\mathbf{V}_{jg}[k]\mathbf{Q}^{dl}_{jg}[k] \mathbf{V}_{jg}^H[k] \mathbf{H}_{ni,g}^{{dl}^H}[k]\tilde{\mathbf{B}}_{ni,i} +\sigma^2\mathbf{I} .
\end{align}
Let us now consider the following lemma:
\begin{lemma}\label{R_ni_tilde_k}
	Assuming all BSs apply the same precoding strategy, equivalent inter-cell interference-plus-noise covariance matrix, $\tilde{\mathbf{R}}_{ni}^{{'}}[k]$, is approximated by
	\begin{align}\label{R_ni_tilde_prime_k_2}
	\tilde{\mathbf{R}}_{ni}^{{'}}[k]\approx (\zeta_{ni}+\sigma^2)\mathbf{I}_{L_{ni,i}},
	\end{align}
	where $\zeta_{ni}=J(G-1)\mathds{E}\{\Lambda_{ni,g}p_{jg,\ell}[k]|\alpha_{ni,g}(\ell)|^2\}$, where $p_{jg,\ell}[k]$ is the power allocated on the $\ell$-th symbol for $j$-th user in $g$-th cell on the $k$-th subcarrier.
\end{lemma}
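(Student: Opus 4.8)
The plan is to substitute both the reciprocity representation of the inter-cell channels and the common precoder into the definition \eqref{R_ni_tilde_prime_k}, and then to exploit the asymptotic orthogonality of the receive steering vectors (Lemma \ref{lemma:angle}) together with an averaging argument over the random interferer geometry. First I would use reciprocity, \eqref{H_ji_g_dl_k}, applied to each interfering link to write $\mathbf{H}_{ni,g}^{dl}[k]=\bar{\mathbf{B}}_{ni,g}^{*}\mathbf{D}_{ni,g}\bar{\mathbf{A}}_{ni,g}^{T}[k]$, and invoke the hypothesis that \emph{all} BSs use the same strategy to set $\mathbf{V}_{jg}[k]=\frac{1}{N_r}\bar{\mathbf{A}}_{jg,g}^{*}[k]$ as in \eqref{V_ni_i_eig_opt}. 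Substituting these into the $(g,j)$-th summand of \eqref{R_ni_tilde_prime_k} reduces every interference term to a quadratic form whose only array-dependent factor is the cross-correlation $\frac{1}{N_r}\bar{\mathbf{A}}_{ni,g}^{T}[k]\bar{\mathbf{A}}_{jg,g}^{*}[k]$ between the target user's receive array at BS $g$ and the directions along which BS $g$ beamforms toward its own user $j$; the transmit-array factors $\bar{\mathbf{B}}_{ni,g}^{*}$ and the combiner $\tilde{\mathbf{B}}_{ni,i}^{H}=(\bar{\mathbf{B}}_{ni,i}^{T}\bar{\mathbf{B}}_{ni,i}^{*})^{-1}\bar{\mathbf{B}}_{ni,i}^{T}$ enter only as bounded sandwiching matrices.

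The heart of the argument is to evaluate these cross-correlations. Since the target user $(n,i)$ and each interferer $(j,g)$ occupy distinct physical locations, their DoAs at BS $g$ are drawn independently from a continuous distribution; by Lemma \ref{lemma:angle} the corresponding normalized steering vectors become mutually orthogonal, so the coherent (off-diagonal, cross-stream) contributions cancel and only the leakage energy survives. I would then take the expectation $\mathds{E}_{\alpha,\theta,\phi,\psi}$ over the random interferer DoAs and path gains: because the interferer directions carry no preferred orientation, the averaged interference covariance is isotropic in the $L_{ni,i}$-dimensional stream space and hence collapses to a scalar multiple of $\mathbf{I}_{L_{ni,i}}$. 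Adding the noise term $\sigma^2\mathbf{I}$ already present in \eqref{R_ni_tilde_prime_k} then yields the claimed form $(\zeta_{ni}+\sigma^2)\mathbf{I}_{L_{ni,i}}$.

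Finally I would identify the scalar $\zeta_{ni}$ by counting contributions: the double sum in \eqref{R_ni_tilde_prime_k} runs over $J$ users in each of the $G-1$ interfering cells, and each contributes, in expectation, a per-stream leakage power $\mathds{E}\{\Lambda_{ni,g}\,p_{jg,\ell}[k]\,|\alpha_{ni,g}(\ell)|^2\}$, where $p_{jg,\ell}[k]$ is the power loaded on the corresponding stream. Summing over the $J(G-1)$ interferers gives $\zeta_{ni}=J(G-1)\mathds{E}\{\Lambda_{ni,g}p_{jg,\ell}[k]|\alpha_{ni,g}(\ell)|^2\}$.

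The hard part will be making the two simultaneous claims rigorous at finite $N_r$: that the off-diagonal entries are negligible relative to the diagonal, \emph{and} that all diagonal entries share the common value $\zeta_{ni}$. Both rest on replacing the random steering-vector inner products by their averages, which is only justified through the asymptotic orthogonality of Lemma \ref{lemma:angle} together with a concentration (law-of-large-numbers) argument over the $J(G-1)$ interferers; tracking the beamforming-gain normalizations carefully, so that the per-stream leakage reduces precisely to $\Lambda_{ni,g}p_{jg,\ell}[k]|\alpha_{ni,g}(\ell)|^2$, is the delicate bookkeeping step.
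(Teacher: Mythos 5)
Your proposal follows essentially the same route as the paper's own (very brief) proof sketch: substitute the reciprocity representation \eqref{H_ji_g_dl_k} and the common eigen-beamformer into \eqref{R_ni_tilde_prime_k}, invoke the asymptotic orthogonality of Lemma \ref{lemma:angle} to suppress the coherent cross-terms, and average over the interferer geometry to collapse the residual leakage to the isotropic scalar $\zeta_{ni}$ counted over the $J(G-1)$ interferers. The paper omits all details "due to page limitation," so your write-up is in fact more complete than the published argument, and you correctly identify the finite-$N_r$ normalization bookkeeping as the only genuinely delicate step.
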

\begin{proof}
This lemma can be proved by substituting \eqref{H_ji_g_dl_k} in \eqref{R_ni_tilde_prime_k}, and by utilizing the orthogonality property from Lemma \ref{lemma:angle}. Details are omitted due to page limitation.
\end{proof}
Accordingly, \eqref{I_ni_k_1} results in
\begin{align}\label{I_ni_k_2}
\mathcal{I}_{ni}[k]=\log_2\det\left(\mathbf{I}_{L_{ni,i}}+\frac{1}{\zeta_{ni}+\sigma^2}\bar{\mathbf{D}}_{ni,i}\mathbf{Q}_{ni}^{dl}[k]\bar{\mathbf{D}}_{ni,i}^{H}\right).
\end{align}
}
Assuming Gaussian input signal, $\mathbf{Q}_{ni}^{dl}[k]=\mathds{E}\{\mathbf{s}^{dl}_{ni}[k]\mathbf{s}^{{dl}^{H}}_{ni}[k]\}=\text{diag}\{p_{ni,0}[k],\ldots, p_{ni,L-1}[k] \}$, where  $p_{ni,\ell}[k]$ is the power to be allocated on the $\ell$-th information symbol on the $k$-th subcarrier for the target user.
Now, using Hadamard inequality, \eqref{I_ni_k_2} can be rewritten as
\begin{align}\label{I_ni_k_3}
\mathcal{I}_{ni}[k]=\log_2 \underset{\ell}{\Pi} \left(1+\frac{\Lambda_{ni,i}|\alpha_{ni,i}(\ell)|^2p_{ni,\ell}[k]}{\zeta_{ni}+\sigma^2}\right)=\sum\limits_{\ell=0}^{L_{ni,i}-1}\log_2\left(1+\gamma_{ni,\ell}p_{ni,\ell}[k]\right),
\end{align}
where $\gamma_{ni,\ell}=\Lambda_{ni,i}|\alpha_{ni,i}(\ell)|^2/(\zeta_{ni}+\sigma^2)$. Accordingly, the optimal power allocation under perfect DoA estimation is the well-known water-filling solution which can be expressed as
\begin{align}\label{traditional_water_filling}
p_{ni,\ell}[k]=[\mu_{ni,\ell}[k]-1/\gamma_{ni,\ell}]^{\Diamond},
\end{align}
where $[x]^{\Diamond}$ denotes a function with $[x]^{\Diamond}=0$ when $x<0$, and $[x]^{\Diamond}=x$ when $x>0$, and $\mu_{ni,\ell}[k]$ is the corresponding Lagrange multiplier.

\subsubsection{System Achievable Rate under DoA Estimation Errors}
\label{subsec:DoAerror}
In this case, since the BS does not have perfect DoA estimation, the array steering matrix for the $n$-th MS in $i$-th cell, in the presence of DoA estimation error, can be expressed in the form of
\begin{equation*}
\hat{\bar{\mathbf{A}}}_{ni,i}[k]=\begin{bmatrix}
\hat{\mathbf{e}}_{r,ni,i,k}(0) & \hat{\mathbf{e}}_{r,ni,i,k}(1) & \ldots & \hat{\mathbf{e}}_{r,ni,i,k}(L_{ni,i}-1)
\end{bmatrix},
\end{equation*}
where $\hat{\mathbf{e}}_{r,ni,i,k}(\ell)= e^{\frac{-j2\pi k \ell}{N_c}}\mathbf{a}(v_{ni,i,\ell}+\Delta v_{ni,i,\ell})\otimes \mathbf{a}(u_{ni,i,\ell}+\Delta u_{ni,i,\ell})$, and $\Delta u_{ni,i,\ell}$ and $\Delta v_{ni,i,\ell}$ represent the DoA estimation errors in the azimuth and elevation spatial frequencies for the $\ell$-th path of the channel between the $i$-th BS and the $n$-th MS in $i$-th cell.
Now, let us consider the following lemma:
\begin{lemma}\label{error_orthogonality_lemma}
For the massive FD-MIMO OFDM system, the  normalized steering vectors $\mathbf{\bar{e}}_{r,jg,i,k}(\ell)=1/\sqrt{N_r}\mathbf{e}_{r,jg,i,k}(\ell)$ and $\mathbf{\hat{\bar{e}}}_{r,j'g',i,k}(\ell')=1/\sqrt{N_r}\mathbf{\hat{e}}_{r,j'g',i',k}(\ell')$, $\forall \{j,g,i,\ell\}\ne \{j',g',i',\ell'\}$, becomes orthonormal asymptotically as the number of antenna, $N_r \rightarrow \infty$ .
\end{lemma}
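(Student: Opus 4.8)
The plan is to reduce the Hermitian inner product of the two normalized steering vectors to a product of two Dirichlet kernels via the mixed-product rule for Kronecker products, and then show that the $1/N_r$ normalization annihilates this cross term while leaving the self-norms equal to one. First I would verify the ``normal'' part exactly: because the OFDM phase $e^{-j2\pi k\ell/N_c}$ has unit modulus and $\|\mathbf{a}(u)\|^2=M_1$, $\|\mathbf{a}(v)\|^2=M_2$, the Kronecker structure gives $\|\mathbf{e}_{r,jg,i,k}(\ell)\|^2=M_1M_2=N_r$ irrespective of the DoA errors $\Delta u$, $\Delta v$. Hence both $\bar{\mathbf{e}}_{r,jg,i,k}(\ell)$ and $\hat{\bar{\mathbf{e}}}_{r,j'g',i',k}(\ell')$ are unit vectors for every $N_r$, and the claim collapses to showing the cross term vanishes asymptotically.

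Next I would compute that cross term. Applying $(\mathbf{p}\otimes\mathbf{q})^H(\mathbf{r}\otimes\mathbf{s})=(\mathbf{p}^H\mathbf{r})(\mathbf{q}^H\mathbf{s})$ and absorbing the two OFDM phases into a single unit-modulus constant $e^{j\varphi}$, the inner product becomes a product of geometric sums in the spatial-frequency differences $\delta_v=(v_{j'g',i',\ell'}+\Delta v_{j'g',i',\ell'})-v_{jg,i,\ell}$ and $\delta_u=(u_{j'g',i',\ell'}+\Delta u_{j'g',i',\ell'})-u_{jg,i,\ell}$. Bounding each geometric sum by its Dirichlet-kernel envelope yields
\[
\left|\bar{\mathbf{e}}_{r,jg,i,k}(\ell)^H\hat{\bar{\mathbf{e}}}_{r,j'g',i',k}(\ell')\right|\le \frac{1}{M_1M_2}\,\frac{1}{\left|\sin(\delta_u/2)\right|\,\left|\sin(\delta_v/2)\right|}.
\]

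Finally I would invoke the same continuous-distribution hypothesis that underlies Lemma~\ref{lemma:angle}: since the elevation and azimuth DoAs of distinct index tuples are drawn independently from continuous distributions, the differences $\delta_u,\delta_v$ are almost surely bounded away from $0\pmod{2\pi}$, and the bounded estimation perturbation does not produce an exact collision. The denominators therefore stay $O(1)$ while the prefactor $1/(M_1M_2)$ forces the bound to $0$ as $N_r=M_1M_2\to\infty$, which establishes asymptotic orthogonality; combined with the exact unit norms this gives asymptotic orthonormality. The partial-resonance cases, in which exactly one of $\delta_u,\delta_v$ vanishes, are handled identically: the vanishing sum saturates at $M_1$ (or $M_2$) but the surviving Dirichlet factor still supplies a $1/M_2$ (or $1/M_1$) decay.

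The step I expect to be the main obstacle is ruling out the doubly-resonant event $\delta_u=\delta_v=0$, i.e.\ the estimation error aligning a perturbed steering vector exactly with the true steering vector of a \emph{different} tuple. Making this rigorous requires arguing that this joint event has probability zero under the continuous angle distributions---treating the error as a small, data-dependent perturbation rather than an adversarial one---rather than merely bounding a generic sum. Everything else is a routine specialization of the orthogonality argument already used for Lemma~\ref{lemma:angle}, now carrying the extra OFDM phase factor and the perturbed second argument.
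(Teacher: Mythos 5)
Your argument is correct and is essentially the same route the paper relies on: the paper itself gives no proof of this lemma, stating only that ``a similar lemma is proved in~\cite{Rubayet_Journal}, and hence omitted here,'' and the standard proof in that line of work is exactly your Kronecker mixed-product reduction to two geometric (Dirichlet-kernel) sums, each bounded by $1/|\sin(\delta/2)|$, so that the $1/(M_1M_2)$ normalization kills the cross term while the self-norms are exactly $N_r$. Your explicit handling of the unit-modulus OFDM phase, the perturbed second argument, and the measure-zero resonance events $\delta_u=0$ or $\delta_v=0$ under the continuous-angle assumption is a reasonable tightening of what the paper leaves implicit; the only unstated hypothesis you inherit is that both $M_1$ and $M_2$ grow (or at least that the dimension carrying a nonzero frequency offset grows), which is the regime the paper intends.
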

\begin{proof}
A similar lemma is proved in \cite{Rubayet_Journal}, and hence omitted here.
\end{proof}
Using Lemma \ref{error_orthogonality_lemma} we have  $\frac{1}{N_r}\bar{\mathbf{A}}_{ni,i}^{T}[k]\hat{\bar{\mathbf{A}}}_{jg,i}^{*}[k]=\mathbf{0}, \forall \{n,i\}\neq\{j,g\}$, and  $\frac{1}{N_r}\bar{\mathbf{A}}_{ni,i}^{T}[k]\hat{\bar{\mathbf{A}}}_{ni,i}^{*}[k]=\frac{1}{N_r}\text{diag}\{\mathbf{e}_{r,ni,i,k}(0) \hat{\mathbf{e}}_{r,ni,i,k}(0), \ldots, \mathbf{e}_{r,ni,i,k}(L_{ni,i}-1) \hat{\mathbf{e}}_{r,ni,i,k}(L_{ni,i}-1)\}$ for $\{n,i\}=\{j,g\}$. Hence, for massive MIMO systems, we can express the optimum eigen-beamformer under imperfect DoA estimation as
\begin{align}\label{V_ni_i_eig}
\tilde{\mathbf{V}}_{ni}^{eig}[k]=\frac{1}{N_r}\hat{\bar{\mathbf{A}}}_{ni,i}^{*}[k].
\end{align}
%
Accordingly, achievable rate, in the presence of UL DoA estimation error, can be written as
\begin{align}\label{I_ni_k_hat}
\hat{\mathcal{I}}_{ni}[k]=\mathds{E}\left\{\sum\limits_{\ell=0}^{L_{ni,i}-1}\log\left(1+\hat{\gamma}_{ni,\ell}\left|\mathbf{e}_{r,ni,i,k}(\ell) \hat{\mathbf{e}}_{r,ni,i,k}(\ell)\right|^2\hat{p}_{ni,\ell}[k]\right)\right\},
\end{align}
where $\hat{p}_{ni,\ell}[k]$ denotes the power to be allocated in the presence of DoA estimation error, $\hat{\gamma}_{ni,\ell}=\Lambda_{ni,i}|\alpha_{ni,i}(\ell)|^2/(N_r^2(\sigma^2+\zeta_{ni}))$, and the expectation is taken with respect to estimation error.
Using method of Lagrangian multiplier, the optimal expected power allocation on $\ell$-th information symbol for the $n$-th MS in the $i$-th cell on  the $k$-th subcarrier is given by
\begin{align}\label{E_p_ni_l_k}
\mathds{E}\{\hat{p}_{ni,\ell}[k]\}=\left[{\hat{\mu}}_{ni,\ell}[k]-\frac{1}{{\hat{\gamma}}_{ni,\ell}\mathds{E}\{|\mathbf{e}_{r,ni,i,k}^T(\ell)\mathbf{\hat{e}}_{r,ni,i,k}^*(\ell)|^2\}}\right]^\diamondsuit,
\end{align}
where ${\hat{\mu}}_{ni,\ell}(k)$ is the corresponding Lagrange multiplier. Finally, \eqref{E_p_ni_l_k} can be simplified as \cite{Rubayet_Journal}:
\begin{align}\label{Proposed_power_allocation}
\mathds{E}\{\hat{p}_{ni,\ell}[k]\}=\left[{\hat{\mu}}_{ni,\ell}[k]-\frac{1}{ \hat{\gamma}_{ni,\ell} M_1^2M_2^2}{\left(1+\frac{M_1^2E\left[(\Delta v_{ni,i,\ell})^2\right]}{12}\right)}{\left(1+\frac{M_2^2E\left[(\Delta u_{ni,i,\ell})^2\right]}{12}\right)}\right]^\diamondsuit.
\end{align}
It can be observed from that, in the absence of DoA estimation error, the optimal  power allocation algorithm in \eqref{Proposed_power_allocation} converges to  water filling solution in \eqref{traditional_water_filling}.
It is also to be noted here that both power allocations in \eqref{traditional_water_filling} and \eqref{Proposed_power_allocation} take into account the effects of inter-cell interference, unlike the single-user eigen-beamforming presented in \cite{Rubayet_Journal}.
{
\subsection{Precoding Complexity Analysis}
In this subsection, we briefly discuss the computational complexity of the proposed DoA-based precoding strategy as presented in Theorem \ref{Theorem_precoder}. Similarly to the Section  \ref{CompComplexity_DoA}, for computational complexity analysis, here we again assume that all the channels have $L$ resolvable paths. Now, for forming the effective channels, $\bar{\mathbf{D}}_{i,i}\bar{\mathbf{A}}_{i,i}^{T}[k]$, the number of FLOPS required is $E_a=2(JL-1)JLN_r$. Taking SVD of the effective channel requires $E_b=4J^2L^2N_r+22N_r^3$ FLOPS. Now, for constructing the final precoder, number of FLOPS required is $E_c=[2(N_r-1)N_rJL]+2(JL-1)N_rJL$. Hence, total number of FLOPS required for our DoA-based precoder can be written as $C_{\text{DoA}}=E_a+E_b+E_c$. Next, for comparison, we calculate the computational cost for conventional Block Diagonalization-based precoding. Let $\tilde{L}_j$ denote the rank of the matrix $[\mathbf{H}_{0i,i}^{{dl}^T},\ldots, \mathbf{H}_{(j-1)i,i}^{{dl}^T}, \mathbf{H}_{(j+1)i,i}^{{dl}^T}, \ldots, \mathbf{H}_{(J-1)i,i}^{{dl}^T}   ]^T $. For complexity analysis, we assume that $\tilde{L}_j=\tilde{L} ; \forall j$.
Hence, following the steps of conventional block diagonalization precoding, total number of FLOPS required for BD is $C_{\text{BD}} =J( [4 (J-1)^2N_t^2N_r+22N_r^3]+ [2(N_r-1)N_t(N_r-\tilde{L})] 
+[4N_t^2(N_r-\tilde{L})+22(N_r-\tilde{L})^3])$

}
\section{Performance Evaluation}
In this section, we evaluate the ESPRIT-based UL DoA estimation for multi-cell multi-user massive FD-MIMO OFDM networks through simulation.
For simulation evaluation, we consider seven hexagonal cells with MSs uniformly distributed in each cell.
Without loss of generality, we assume that number of co-scheduled MSs in each cell is $10$.
{An $M_1 \times M_2$  (antenna elements in elevation direction, and  antenna elements in the azimuth direction) rectangular antenna array is assumed at the BS, whereas the mobile device has a uniform linear array}
Different MSs are using non-orthogonal spreading sequences as UL pilots, and the same pool of sequences is reused in all seven cells.
Therefore, in the UL, the target BS is subject to intra-cell interference as well as interference from MSs in six other neighboring cells for the purpose of DoA estimation.
Cell radius is set to be $1000$ meters.
{The system is assumed to operate at the mmWave band with 28 GHz carrier frequency.
}
$4$ dominant clusters are assumed for each UL channel from the MS to the BS, and each cluster contributes one resolvable path.
The antenna spacing for both the received and transmit antenna arrays is assumed to be $0.5 \lambda$.
The number of transmit antennas at each MS is set to be $8$.
In this paper, we invoke the far field assumption, and the wavefront impinging on the antenna array is assumed to be planer.
The transmission medium is assumed to be isotropic and linear.

\begin{figure}[!htb]
	\begin{minipage}{0.5\textwidth}
		\centering
		\includegraphics[width=1\linewidth]{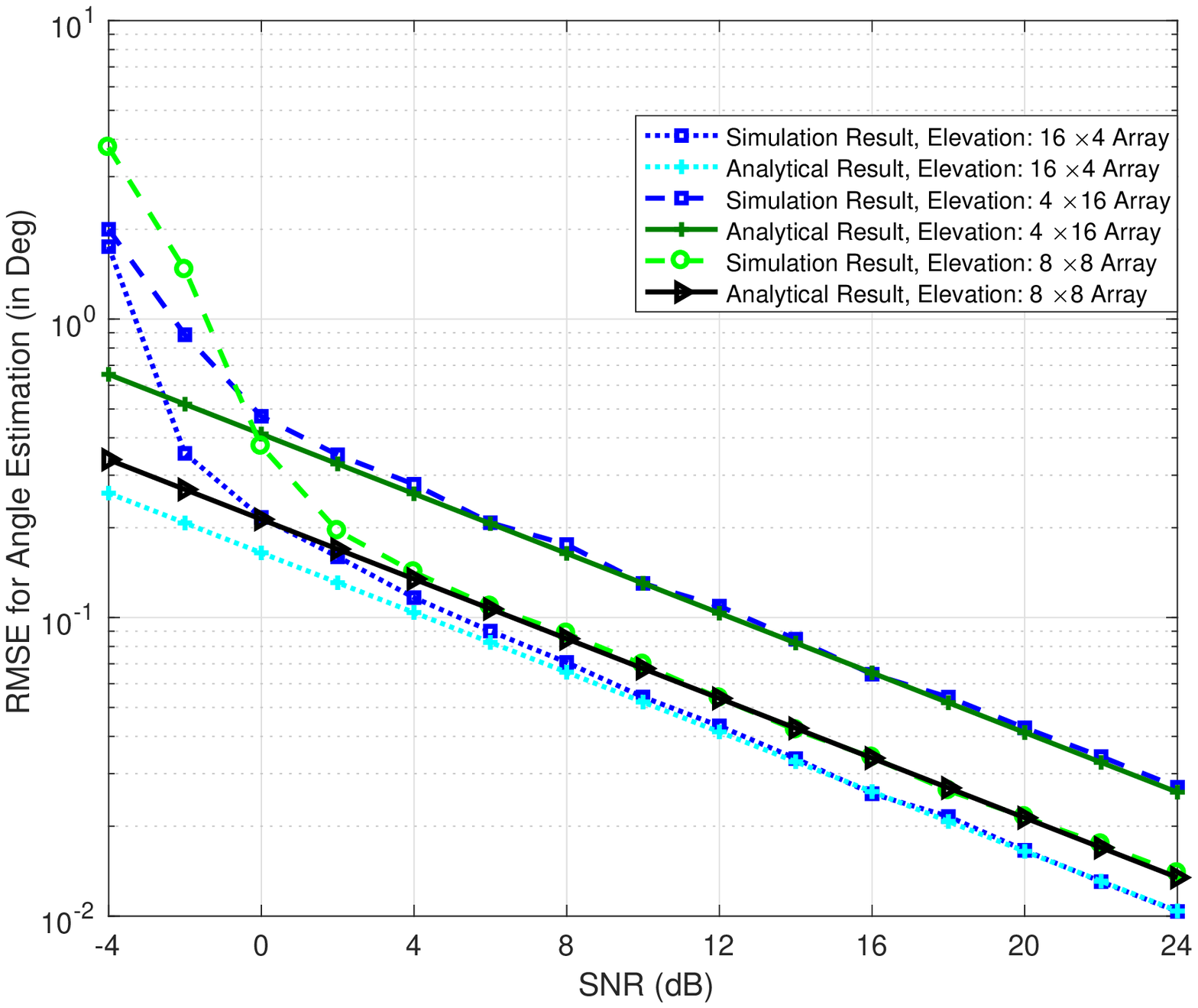}
		\caption{Elevation Angle Estimation for \\ 64 Antennas.}
		\label{64_elevation}
	\end{minipage}%
	\begin{minipage}{0.5\textwidth}
		\centering
		\includegraphics[width=1\linewidth]{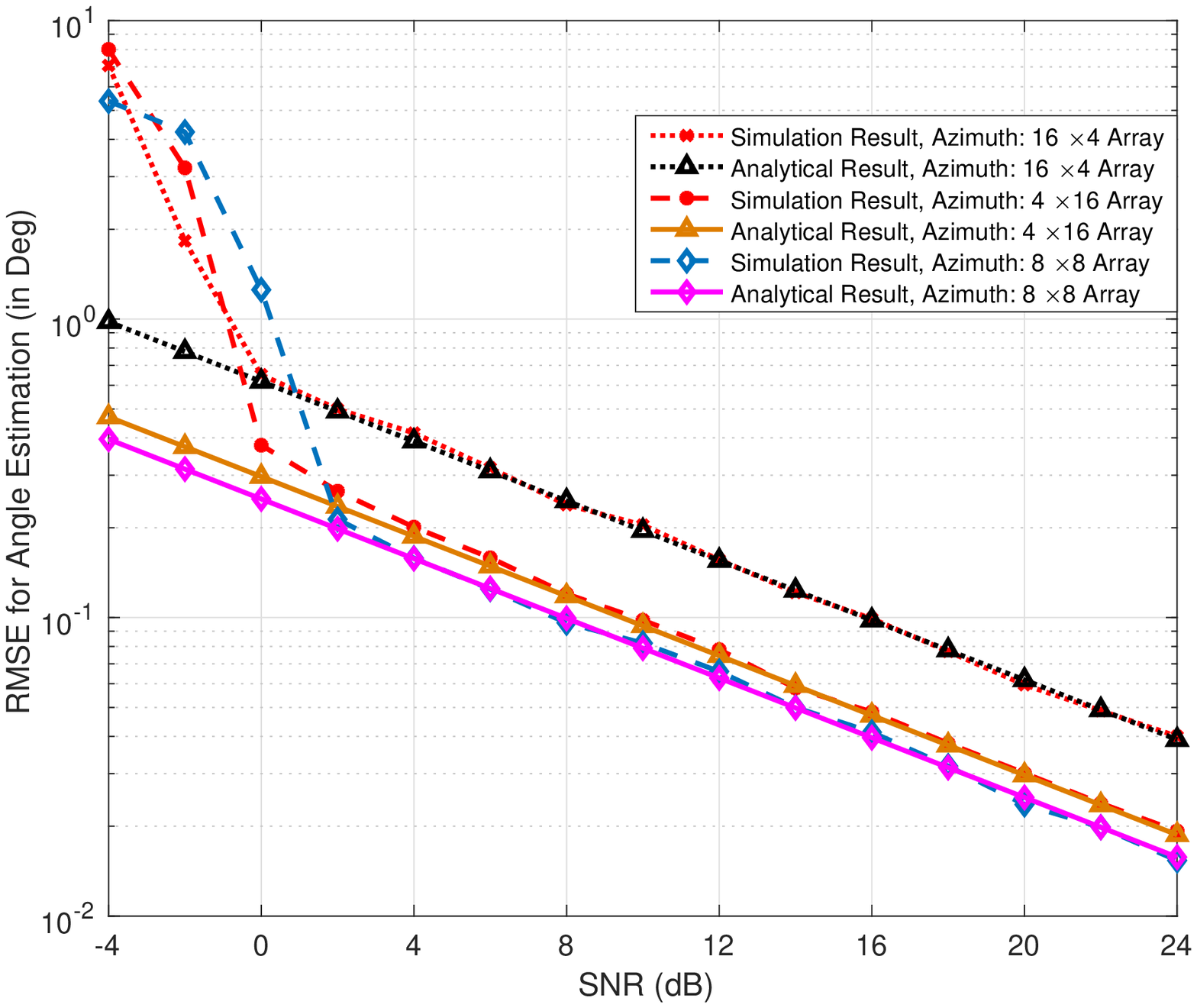}
		\caption{Azimuth Angle Estimation for 64 Antennas.}
		\label{64_azimuth}
	\end{minipage}
\end{figure}


The estimation performance of elevation and azimuth angles for $8 \times 8$, $4 \times 16$,  and $16 \times 4$ antenna arrays are shown in Fig.~\ref{64_elevation} and Fig.~\ref{64_azimuth}, respectively, where the RMSE of the DoA estimation has been used as the performance metric, and the correlation coefficient of spreading sequences, $\rho_1$, is chosen to be $0.1$.
As the figure suggests, the analytical results of DoA estimation match well with that of empirical results asymptotically with SNR.
Furthermore, antenna array geometry has a significant impact on estimation performance.
Fig.~\ref{64_elevation} clearly suggests that the $16 \times 4$ antenna array performs better than $8 \times 8$ and $4 \times 16$ arrays in elevation angle estimation. 
However, $8 \times 8$ array configuration may outperform the $4 \times 16$ configuration in azimuth angle estimation as shown in Fig.~\ref{64_azimuth}. 
This is quite counter-intuitive since the $4 \times 16$ array has more elements in the azimuth domain.
The reason mainly comes from the fact that the azimuth DoA estimation is actually coupled with elevation DoA estimation. 
For the $4 \times 16$ array, the performance of the elevation DoA estimation may be so bad that it affects the azimuth DoA  estimation performance.
{This dependence is manifested through Jacobians (see Remark \ref{jacobian11}), which, in fact, results from the underlying physics/ coordinate system of the 3D MIMO model.
}
On the other hand, elevation estimation is not dependent on azimuth estimation, and hence, $16 \times 4$ array geometry still outperforms $8 \times 8$ array in elevation angle estimation.
{These observations can provide important design intuitions for FD-MIMO networks that adopt subspace-based channel estimation methods.
}

The elevation and azimuth angle estimation results for $16 \times 16$, $8 \times 32$, and $32 \times 8$ antenna arrays are shown in Fig.~\ref{256_elevation} and Fig.~\ref{256_azimuth}, respectively.
Comparing the results with those presented in Fig.~\ref{64_elevation} and Fig.~\ref{64_azimuth}, we can observe that as the total number of antennas increases, the DoA estimation accuracy accordingly increases, which is also evident from our analytical results.

\begin{figure}[!htb]
	\begin{minipage}{0.5\textwidth}
		\centering
		\includegraphics[width=1\linewidth]{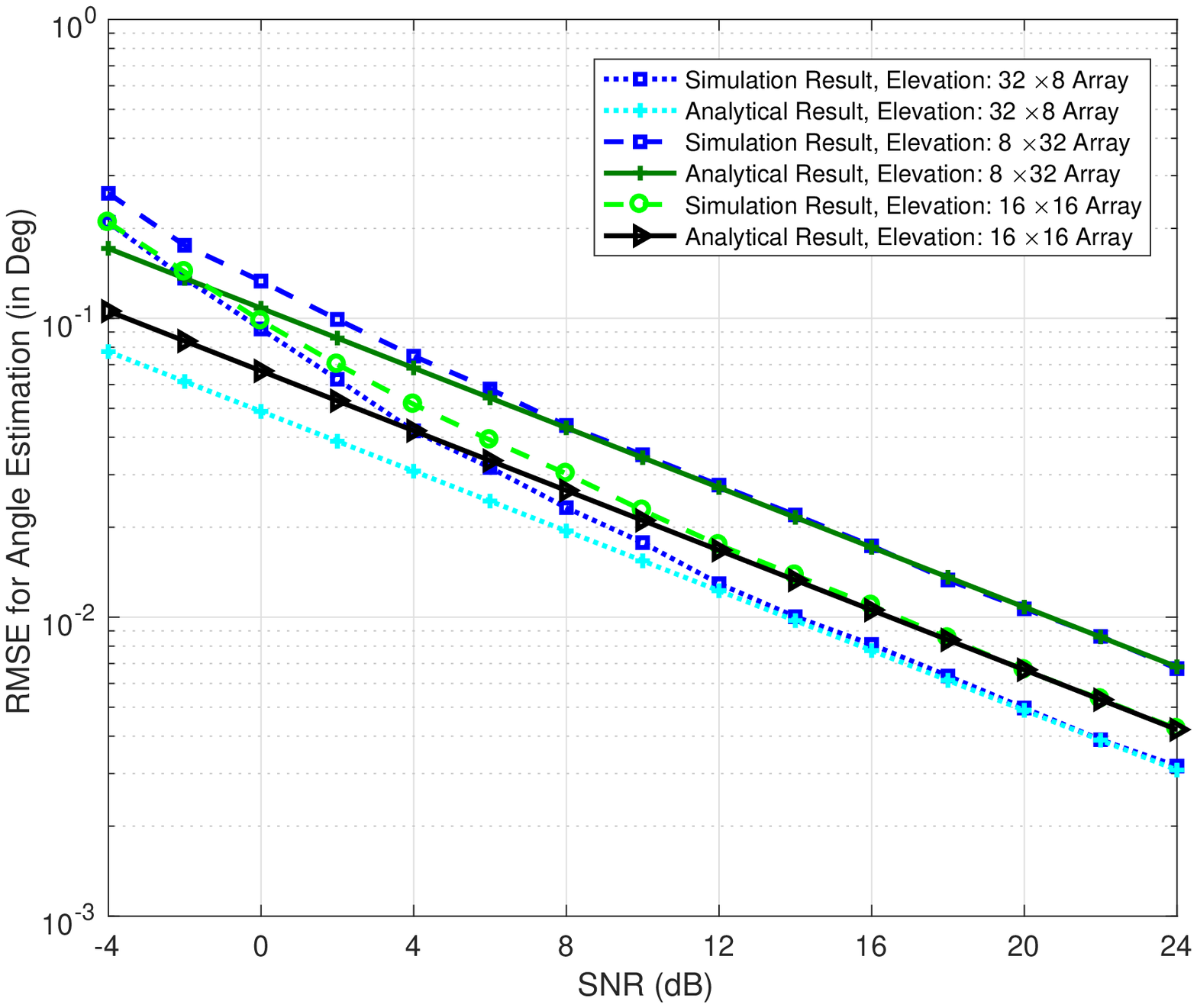}
		\caption{Elevation Angle Estimation for\\ 256 Antennas.}
		\label{256_elevation}
	\end{minipage}
	\begin{minipage}{0.5\textwidth}
		\centering
		\includegraphics[width=1\linewidth]{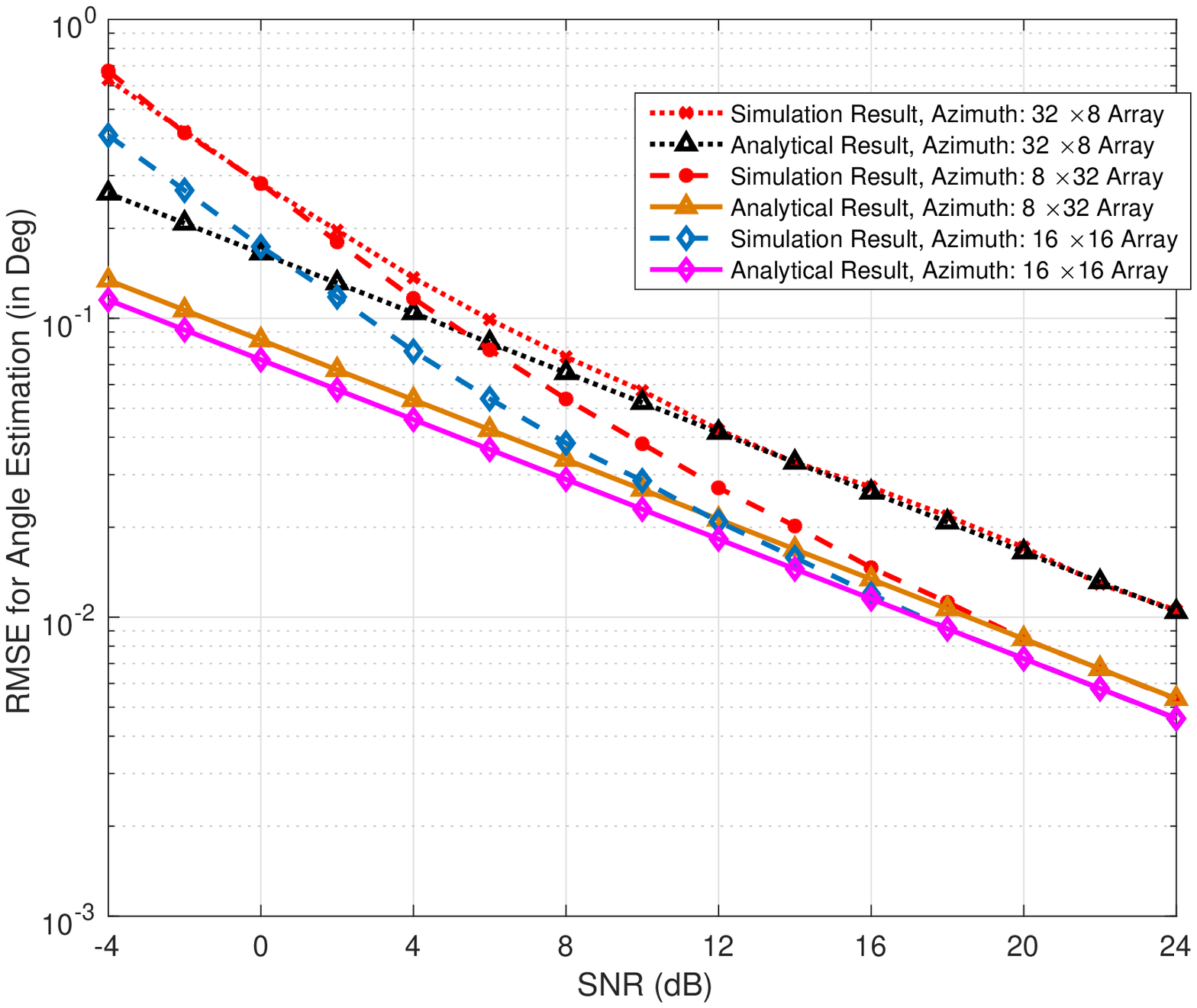}
		\caption{Azimuth Angle Estimation for \\ 256 Antennas.}
		\label{256_azimuth}
	\end{minipage}%
	
\end{figure}

\begin{figure}
		\centering
		\includegraphics[width=0.5\textwidth]{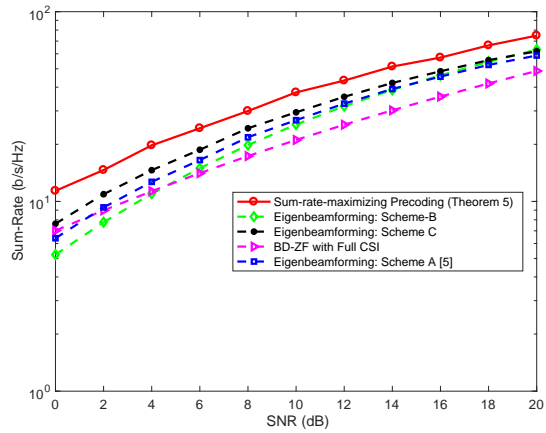}
		\caption[width=\linewidth]{{Average Achievable Sum-Rate Comparison.}}
		\label{rate1}
\end{figure}
In Fig.~\ref{rate1}, the average achievable sum-rates for different precoding strategies are compared for multi-cell multi-user massive FD-MIMO networks.
Five schemes are compared: the introduced scheme presented in Theorem 5; the block-diagonalization based zero forcing (BD-ZF) precoding method~\cite{Joint_Power_Allocation_and_User, Massive_MIMO_for_Maximal_Spectral} assuming full CSI at the BS; and three eigen-beamforming schemes based on the large antenna system analysis.
To be specific, Scheme A is the single-user eigen-beamforming introduced in \cite{Rubayet_Journal}.
This scheme uses eigen-beamformer in~\eqref{V_ni_i_eig}, and applies the modified water-filling power allocation presented in~\cite{Rubayet_Journal} taking into account the DoA estimation error due to the noise.
However, Scheme A doesn't consider the effects of intra/inter-cell interference into power allocation.
In Scheme B, \eqref{V_ni_i_eig} is used as the beamformer and the traditional water-filling in~\eqref{traditional_water_filling} is used as power allocation assuming ideal DoA estimation.
Scheme C uses the same beamformer as Scheme A and B, however, it utilizes the power allocation in \eqref{Proposed_power_allocation} considering the DoA estimation error due to intra/inter-cell interference of the network.
Fig.~\ref{rate1} clearly suggests that the scheme introduced in Theorem 5 achieves best performance among all precoding strategies over the entire SNR regime of interests.
Even assuming full CSI at the BS, the BD-ZF scheme performs worst in the medium to high SNR regime.
This suggests that BD-ZF based precoding strategy may yield strictly suboptimal performance for massive FD-MIMO networks.
{ It is to be noted here that even though BD is using full channel state information, the performance gain of DoA-based method over BD method is coming from the fact that DoA-based method utilizes the structure of the underlying channel, whereas BD method does not take into account the underlying structure of the MIMO channel.}

For the three eigen-beamforming schemes based on the large antenna system analysis, we have the following observation:
Scheme C outperforms both Schemes A and B over the entire SNR regime since Scheme C considers the comprehensive characterization of the DoA estimation for power allocation as discussed in Remark 1.
Scheme A performs better than Scheme B at low SNRs indicating the importance of incorporating the DoA estimation error.
Since DoA estimation error decreases as SNR increases, both Scheme A and B approach Scheme C asymptotically.
This is because the power allocation in \eqref{Proposed_power_allocation} converges to water-filling solution in \eqref{traditional_water_filling} with increasing SNR.

{
\begin{figure}[!htb]
	\begin{minipage}{0.5\textwidth}
		\centering
		\includegraphics[width=1\linewidth]{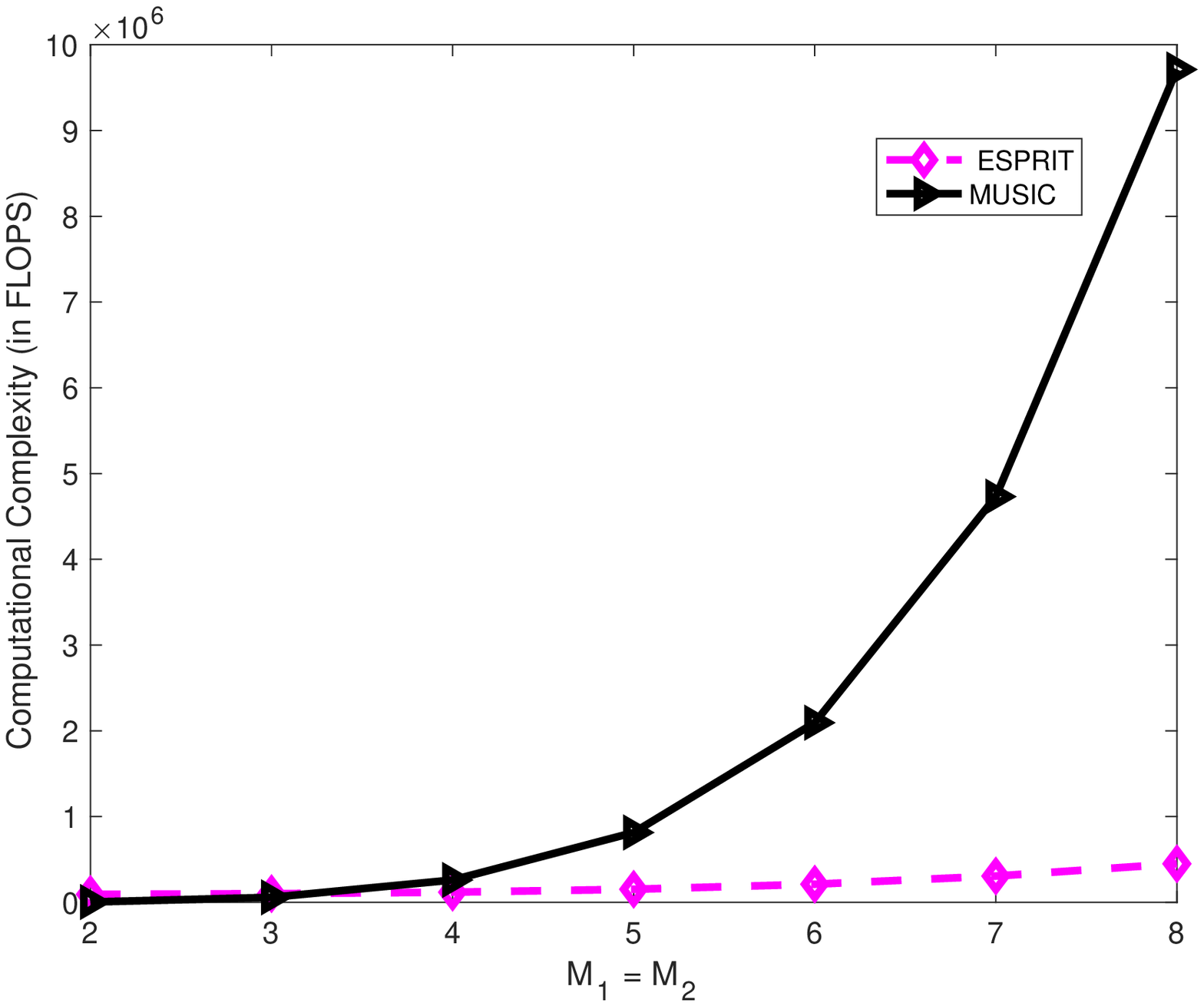}
		\caption{Computational Complexity \\Comparison for DoA Estimation Algorithms.}
		\label{FigComplexity_angle}
	\end{minipage}
	\begin{minipage}{0.5\textwidth}
		\centering
		\includegraphics[width=1\linewidth]{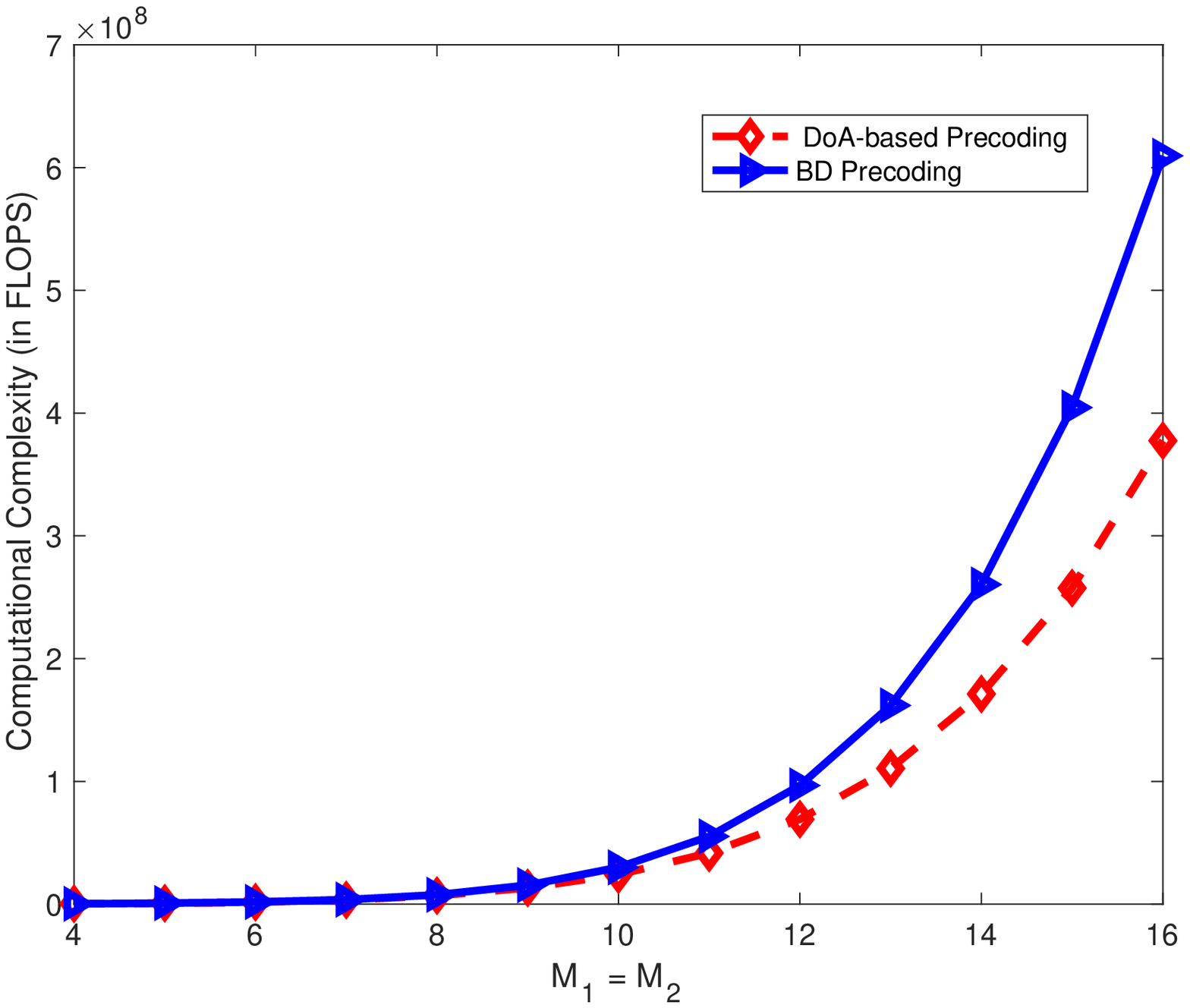}
		\caption{Computational Complexity \\Comparison for Precoding Methods.}
		\label{FigComplexity_Precoder}
	\end{minipage}%
	
\end{figure}
In Fig.~\ref{FigComplexity_angle} we compare computational complexity of our ESPRIT-based DoA estimation method with the widely used MUSIC algorithm, for a square BS antenna array (i.e., $M_1=M_2$). Number of transmit antennas is 8, and number of paths in the channel is 4. For MUSIC algorithm, number of grids for candidate DoA search is 360 which is a typical number. We can observe that as the number of antennas increases, the complexity of the MUSIC algorithm increases much faster compared to the complexity of ESPRIT algorithm. Hence, for massive MIMO system, MUSIC-based DoA estimation will incur significantly more computational burden than the ESPRIT method. 
In Fig.~\ref{FigComplexity_Precoder}, we compare the computational complexity of the our proposed DoA-based precoding scheme (Theorem \ref{Theorem_precoder}) is compared with the complexity of traditional Block Diagonalization (BD) precoding. We can observe that as for low and mid-size arrays, the complexity of both algorithms are similar. However, as the number of antennas increases the DoA-based precoder outperforms the BD method in term of the computational complexity.
}

\section{Conclusion}
Accurate DL CSI is critical for massive MIMO to realize the promised throughput gain.
In this paper, we introduced optimal DL MIMO precoding and power allocation strategies for multi-cell multi-user massive FD-MIMO networks based on UL DoA estimation at the BS.
The UL DoA estimation error for such a network has been analytically characterized and has been incorporated into the proposed MIMO precoding and power allocation strategy.
Simulation results suggested that the proposed strategy outperforms existing BD-ZF based MIMO precoding strategies which requires full CSI at the BS.
This work shed a light on system design for massive FD-MIMO communications which is critical for 5G and Beyond 5G cellular networks.

\appendices
\section{Proof of Theorem 1}\label{Proof_Theorem1}
Effect of pilot contamination on the MSE of DoA estimation is given by
\begin{align}\notag
\mathbb{E}\left\{\left(\triangle v_{ni,i,\ell} \right)^2\right\}_1 = &\frac{1}{2}\left({\mathbf{r}_{ni,i,\ell}^{(v)^{H}}}\cdot \mathbf{W}_{ni,i,mat}^* \cdot \mathbf{R}_{i,1}^{(fba)^{T}} \cdot \mathbf{W}_{ni,i,mat}^T \cdot {\mathbf{r}_{ni,i,\ell}^{(v)}} \right.\\\label{equ_mse1_pilot}
&\left.- \text{Re} \left\{{\mathbf{r}_{ni,i,\ell}^{(v)^{T}}}\cdot \mathbf{W}_{ni,i,mat} \cdot \mathbf{C}_{i,1}^{(fba)} \cdot \mathbf{W}_{ni,i,mat}^T \cdot {\mathbf{r}_{ni,i,\ell}^{(v)}}\right\}\right),
\end{align}
Let us now denote
\begin{equation}
\bm{\beta}_{ni,i,\ell}=\mathbf{V}_{ni,i}^{sig} \mathbf{\Sigma}_{ni,i}^{{sig}^{-1}}\mathbf{q}_{\ell},
\end{equation}
\begin{equation}
\bm{\alpha}_{v,{ni,i,\ell}}=\left(\mathbf{p}_{\ell}^T\left(\tilde{\mathbf{J}}_1^{(v)}\mathbf{U}_{ni,i}^{sig}\right)^{+}\left(\tilde{\mathbf{J}}_2^{(v)}/e^{jv_{ni,i,\ell}}-\tilde{\mathbf{J}}_1^{(v)}\right)\left(\mathbf{U}^{noise}_{ni,i} \mathbf{U}^{{noise}^H}_{ni,i}\right)\right)^T,
\end{equation}
Using \eqref{r_ni_i_l} and \eqref{W_ni_i_mat}, we have $\mathbf{W}_{ni,i,mat}^T  {\mathbf{r}_{ni,i,\ell}^{(v)}}=\bm{\beta}_{ni,i,\ell}\otimes \bm{\alpha}_{v,{ni,i,\ell}}$. The MSE in \eqref{equ_mse1_pilot} becomes
\begin{align}\notag
\mathbb{E}\left\{\left(\triangle v_{ni,i,\ell} \right)^2\right\}_1  =& \frac{1}{2}\left(\left(\bm{\beta}_{ni,i,\ell}\otimes \bm{\alpha}_{v,{ni,i,\ell}}\right)^H \cdot \mathbf{R}_{i,1}^{(fba)T} \cdot \left(\bm{\beta}_{ni,i,\ell}\otimes \bm{\alpha}_{v,{ni,i,\ell}}\right)\right.\\
&\left.- \text{Re} \left\{\left(\bm{\beta}_{ni,i,\ell}\otimes \bm{\alpha}_{v,{ni,i,\ell}}\right)^T \cdot \mathbf{C}_{i,1}^{(fba)}\left(\bm{\beta}_{ni,i,\ell}\otimes \bm{\alpha}_{v,{ni,i,\ell}}\right)\right\}\right).
\label{equ_mse2}
\end{align}
It can be easily verified that $\bm{\alpha}_{v,{ni,i,\ell}}$ can be written as
\begin{align}\notag
\bm{\alpha}_{v,{ni,i,\ell}}^T&=\mathbf{c}_\ell^T\left(\left(\mathbf{\tilde{J}}_{v,2}\mathbf{A}_{ni,i}\right)^{+}\mathbf{\tilde{J}}_{v,2}-
\left(\mathbf{\tilde{J}}_{v,1}\mathbf{A}_{ni,i}\right)^{+}\mathbf{\tilde{J}}_{v,1}\right),\\\notag
&=\frac{1}{(M_2-1)M_1}\left[
-1, -e^{-ju_{ni,i,\ell}}, \ldots,  -e^{-j(M_1-1)u_{ni,i,\ell}}, 0 , \ldots ,0,\right.\\
&\left. e^{-j(M_2-1)v_{ni,i,\ell}},  e^{-j((M_2-1)v_{ni,i,\ell}+u_{ni,i,\ell})}, \ldots, e^{-j((M_2-1)v_{ni,i,\ell}+(M_1-1)u_{ni,i,\ell})}
\right].
\end{align}
Next, in order to obtain the expression for $\bm{\beta}_{ni,i,\ell}$, we need to perform the SVD of the perturbation-free  signal in~\eqref{H_nii_fba}:
\begin{equation*}
\begin{split}
&\sqrt{\Lambda_{ni,i}}\left[\mathbf{A}_{ni,i} \mathbf{D}_{ni,i}\mathbf{B}_{ni,i}^H(k)  \quad\bm{\Pi}_{N_r}\mathbf{A}_{ni,i}^{*} \mathbf{D}_{ni,i}^{*}\mathbf{B}_{ni,i}^T(k)\bm{\Pi}_{N_t}\right]\\
&=\mathbf{A}_{ni,i}\text{diag}\left\{\mathbf{\bar{b}}_{ni,i}\right\}\left[\mathbf{\bar{B}}_{ni,i}^H(k) \quad\bm{\Gamma}_{ni,i}\mathbf{\bar{B}}_{ni,i}^T(k)\bm{\Pi}_{N_t}\right],
\end{split}
\end{equation*}
where
\begin{equation*}
\begin{split}
\bm{\Gamma}_{ni,i}&=\text{diag}\left\{\left[e^{-j\left((M_1-1)u_{ni,i,0}+(M_2-1)v_{ni,i,0}\right)},\ldots,e^{-j\left((M_1-1)u_{ni,i,L_{ni,i}-1}+(M_2-1)v_{ni,i,L_{ni,i}-1}\right)}\right]\right\},\\
\mathbf{\bar{B}}_{ni,i}^H(k)&=\text{diag}\left\{e^{j{\phi}_{ni,i,0}^{'}},\ldots,e^{j{\phi}_{ni,i,L_{ni,i}-1}^{'}}\right\}\mathbf{B}_{ni,i}^H(k), \quad\text{and}\\
\mathbf{\bar{b}}_{ni,i}&=\left[b_{ni,i,0},\ldots,b_{ni,i,L_{ni,i}-1}\right],
\end{split}
\end{equation*}
where $b_{ni,i,\ell}$ and $\phi_{ni,i,\ell}^{'}$ are the amplitude and the phase of the channel gain $\alpha_{ni,i}(\ell)$, respectively.
Accordingly, based on \textbf{Lemma}~\ref{lemma:angle}, we can obtain
\begin{equation*}
\begin{split}
\mathbf{U}_{ni,i}^{sig}&=1/\sqrt{N_r}\mathbf{A}_{ni,i},\\ \mathbf{\Sigma}_{ni,i}^{sig}&=\sqrt{2N_rN_t}\sqrt{\Lambda_{ni,i}}\text{diag}\left\{\mathbf{\bar{b}}_{ni,i}\right\},\quad\text{and}\\ \mathbf{V}_{ni,i}^{{sig}^{H}}&= 1/\sqrt{2N_t}\left[\mathbf{\bar{B}}_{ni,i}^H(k) \quad \bm{\Gamma}_{ni,i}\mathbf{\bar{B}}_{ni,i}^T(k)\bm{\Pi}_{N_t}\right].
\end{split}
\end{equation*}
In \cite{Performance_Analysis_for_DoA_Estimation}, the vector $\bm{\beta}_{ni,i,\ell} $ is given as $\bm{\beta}_{ni,i,\ell} = \mathbf{V}_{ni,i}^{sig}\mathbf{\Sigma}_{ni,i}^{{sig}^{-1}}\mathbf{U}_{ni,i}^{{sig}^{H}}\mathbf{A}_{ni,i}\mathbf{c}_\ell$. Now substituting here the expressions of $\mathbf{U}_{ni,i}^{sig}$, $\mathbf{\Sigma}_{ni,i}^{sig}$, and $\mathbf{V}_{ni,i}^{sig}$, we obtain:
\begin{equation}
\bm{\beta}_{ni,i,\ell} = \frac{1}{(b_{ni,i,\ell})\sqrt{\Lambda_{ni,i}}\sqrt{2N_t}}\mathbf{V}_{ni,i}^{sig}\mathbf{c}_{\ell}.
\label{equ_delaybeta3}
\end{equation}
Hence, the expression $\left(\bm{\beta}_{ni,i,\ell}\otimes \bm{\alpha}_{v,{ni,i,\ell}}\right)$ in \eqref{equ_mse2} can be written as
\begin{equation}
\begin{split}
&\bm{\beta}_\ell\otimes\bm{\alpha}_{v,\ell}=\frac{1}{(b_{ni,i,\ell})2N_t\sqrt{\Lambda_{ni,i}}}
\left[\begin{array}{c}
e^{-j{\phi}_{ni,i,\ell}'}\mathbf{e}_{t,ni,i,k}(\ell) \\
e^{j{\phi}_{ni,i,\ell}'}e^{j\left((M_1-1)u_{ni,i,\ell}+(M_2-1)v_{ni,i,\ell}\right)}\bm{\Pi}_{N_t}\mathbf{e}_{t,ni,i,k}^*(\ell)
\end{array}
\right]\otimes\bm{\alpha}_{v,\ell}\\
&= \frac{1}{(b_{ni,i,\ell})2N_t\sqrt{\Lambda_{ni,i}}}
\left[\begin{array}{c}
e^{-j{\phi}_{ni,i,\ell}'}\mathbf{e}_{t,ni,i,k}(\ell)\otimes\bm{\alpha}_{v,ni,i,\ell} \\
e^{j{\phi}_{ni,i,\ell}'}e^{j\left((M_1-1)u_{ni,i,\ell}+(M_2-1)v_{ni,i,\ell}\right)}\bm{\Pi}_{N_t}\mathbf{e}_{t,ni,i,k}^*(\ell)\otimes\bm{\alpha}_{v,ni,i,\ell}
\end{array}
\right],
\label{equ_albe1}
\end{split}
\end{equation}
Now, using equation \eqref{equ_covariance_1} and \eqref{equ_albe1}, the first term in \eqref{equ_mse2} can be written as
\begin{align}\notag
&\left(\bm{\beta}_{ni,i,\ell}\otimes \bm{\alpha}_{v,{ni,i,\ell}}\right)^H \cdot \mathbf{R}_{i,1}^{(fba)T} \cdot \left(\bm{\beta}_{ni,i,\ell}\otimes \bm{\alpha}_{v,{ni,i,\ell}}\right)\\\notag
&=\frac{1}{b_{ni,i,\ell}^24N_t^2\Lambda_{ni,i}}\bigg(
\left(\mathbf{e}_{t,ni,i,k}^H(\ell)\otimes\bm{\alpha}_{v,ni,i,\ell}^H\right)\mathbf{R}_{i,1}^T\left(\mathbf{e}_{t,ni,i,k}(\ell)\otimes\bm{\alpha}_{v,ni,i,\ell}\right)\\\label{First_part_MSE1_pilot}
&+\left(\mathbf{e}_{t,ni,i,k}^T(\ell)\mathbf{\Pi}_{N_t}\otimes\bm{\alpha}_{v,ni,i,\ell}^H\right)\mathbf{\Pi}_{N_rN_t}\mathbf{R}_{i,1}^H\bm{\Pi}_{N_rN_t}\left(\mathbf{\Pi}_{N_t}\mathbf{e}_{t,ni,i,k}^*(\ell)\otimes\bm{\alpha}_{v,ni,i,\ell}\right)\bigg)
\end{align}
Now, using \eqref{R_i_1_k}, and after some simplification, we have
\begin{align}\notag
&\left(\mathbf{e}_{t,ni,i,k}^H(\ell)\otimes\bm{\alpha}_{v,ni,i,\ell}^H\right)\mathbf{R}_{i,1}^T\left(\mathbf{e}_{t, ni,i,k}(\ell)\otimes\bm{\alpha}_{v,ni,i,\ell}\right)\\\label{FirstPart_1_Pilot}
&=\frac{1}{(M_2-1)^2M_1^2}\sum\limits_{\substack{g=0\\g\neq i}}^{G-1}\left(\sqrt{\Lambda_{ng,i}}\right)^2X_{ng,i}Y_{ng,i}\sum\limits_{m=1}^{L}|\alpha_{ng,i}(m)|^2
\end{align}
where  $X_{ng,i}$ and $Y_{ng,i}$ are given by
\begin{align}
&X_{ng,i}=\mathds{E}_{\psi}\left|\left(1+e^{-j(\omega_{ni,i,\ell}-\omega_{ng,i,m})}+\ldots+e^{-j(N_t-1)(\omega_{ni,i,\ell}-\omega_{ng,i,m})}\right)\right|^2,\\
&Y_{ng,i}=\mathds{E}_{\theta,\phi}\left|\left(1+e^{j(u_{ni,i,\ell}-u_{ng,i,m})}+\ldots+e^{j(M_1-1)(u_{ni,i,\ell}-u_{ng,i,m})}\right)\left(e^{jv_{ni,i,\ell}}e^{-jv_{ng,i,m}}-1\right)\right|^2,
\end{align}
for $m=0,\ldots L_{ng,i}-1$, and $\mathds{E}_{\psi}$ and $\mathds{E}_{\theta,\phi}$ denote, respectively, expectations with respect to DoD and DoAs. Now, similarly to \eqref{FirstPart_1_Pilot}, we also have
\begin{align}\notag
&\left(\mathbf{e}_{t,ni,i,k}^T(\ell)\mathbf{\Pi}_{N_t}\otimes\bm{\alpha}_{v,ni,i,\ell}^H\right)\mathbf{\Pi}_{N_rN_t}\mathbf{R}_{i,1}^H\bm{\Pi}_{N_rN_t}\left(\mathbf{\Pi}_{N_t}\mathbf{e}_{t,ni,i,k}^*(\ell)\otimes\bm{\alpha}_{v,ni,i,\ell}\right)\\\label{FirstPart_2_Pilot}
&=\frac{1}{(M_2-1)^2M_1^2}\sum\limits_{\substack{g=0\\g\neq i}}^{G-1}\left(\sqrt{\Lambda_{ng,i}}\right)^2X_{ng,i}Y_{ng,i}^{'}\sum\limits_{m=1}^{L}|\alpha_{ng,i}(m)|^2,
\end{align}
where
\begin{align}\notag
Y_{ng,i}^{'}=&\mathds{E}_{\theta,\phi}\left|\left(e^{j(M_1-1)u_{ng,i,m}}+e^{ju_{ni,i,\ell}}e^{j(M_1-2)u_{ng,i,m}}+\ldots+e^{j(M_1-1)u_{ni,i,\ell}}\right)\times\right.\\
&\qquad\left.\left(e^{j(M_2-1)v_{ni,i,\ell}}-e^{j(M_2-1)v_{ng,i,m}}\right)\right|^2,
\end{align}
for $m=0,\ldots L_{ng,i}-1$. Now, using \eqref{FirstPart_1_Pilot} and \eqref{FirstPart_2_Pilot}, we can write \eqref{First_part_MSE1_pilot} as
\begin{align}\notag
& \left(\bm{\beta}_{ni,i,\ell}\otimes \bm{\alpha}_{v,{ni,i,\ell}}\right)^H \cdot \mathbf{R}_{i,1}^{(fba)T} \cdot \left(\bm{\beta}_{ni,i,\ell}\otimes \bm{\alpha}_{v,{ni,i,\ell}}\right)\\\label{First_part_MSE1_pilot_simplified1}
&=\frac{1}{b_{ni,i,\ell}^24N_t^2\Lambda_{ni,i}}\frac{1}{(M_2-1)^2M_1^2}\sum\limits_{\substack{g=0\\g\neq i}}^{G-1}\left(\sqrt{\Lambda_{ng,i}}\right)^2X_{ng,i}\sum\limits_{m=0}^{L_{ng,i}-1}|\alpha_{ng,i}(m)|^2\left(Y_{ng,i}+Y_{ng,i}^{'}\right)
\end{align}
Similarly, we can also have
\begin{align}\notag
&\left(\bm{\beta}_{ni,i,\ell}\otimes \bm{\alpha}_{v,{ni,i,\ell}}\right)^T \cdot \mathbf{C}_{i,1}^{(fba)}\left(\bm{\beta}_{ni,i,\ell}\otimes \bm{\alpha}_{v,{ni,i,\ell}}\right)\\\label{second_part_MSE1_pilot_simplified1}
&=\frac{1}{b_{ni,i,\ell}^22N_t^2\Lambda_{ni,i}}\frac{1}{(M_2-1)^2M_1^2}e^{j\Phi}
\sum\limits_{\substack{g=0\\g\neq i}}^{G-1}\left(\sqrt{\Lambda_{ng,i}}\right)^2X_{ng,i}\tilde{Y}_{ng,i}\sum\limits_{m=0}^{L_{ng,i}-1}|\alpha_{ng,i}(m)|^2,
\end{align}
where $\Phi=\left((M_1-1)u_{ni,i,\ell}+(M_2-1)v_{ni,i,\ell}\right)$, and $\tilde{Y}_{ng,i}$ is given in \eqref{Y_ng_i_tilde}.
Finally, plug the expressions from \eqref{First_part_MSE1_pilot_simplified1} and \eqref{second_part_MSE1_pilot_simplified1} into \eqref{equ_mse2}, and the proof is finished.

\section{Proof of Theorem 2}\label{Proof_Theorem2}
Similarly to \eqref{equ_mse2}, MSE due to intra-cell interference can be written as
\begin{align}\notag
\mathbb{E}\left\{\left(\triangle v_{ni,i,\ell} \right)^2\right\}_2  =& \frac{1}{2}\left(\left(\bm{\beta}_{ni,i,\ell}\otimes \bm{\alpha}_{v,{ni,i,\ell}}\right)^H \cdot \mathbf{R}_{i,2}^{(fba)T} \cdot \left(\bm{\beta}_{ni,i,\ell}\otimes \bm{\alpha}_{v,{ni,i,\ell}}\right)\right.\\
&\left.- \text{Re} \left\{\left(\bm{\beta}_{ni,i,\ell}\otimes \bm{\alpha}_{v,{ni,i,\ell}}\right)^T \cdot \mathbf{C}_{i,2}^{(fba)}\left(\bm{\beta}_{ni,i,\ell}\otimes \bm{\alpha}_{v,{ni,i,\ell}}\right)\right\}\right).
\label{MSE_intraCell_transformed}
\end{align}
Using \eqref{equ_covariance_1} for $m=2$, the first term in \eqref{MSE_intraCell_transformed} can be expressed as
\begin{align}\notag
&\left(\bm{\beta}_{ni,i,\ell}\otimes \bm{\alpha}_{v,{ni,i,\ell}}\right)^T \cdot \mathbf{C}_{i,2}^{(fba)}\left(\bm{\beta}_{ni,i,\ell}\otimes \bm{\alpha}_{v,{ni,i,\ell}}\right)\\\notag
&=\frac{1}{b_{ni,i,\ell}^24N_t^2\Lambda_{ni,i}}e^{j\Phi}\left[\left(\mathbf{e}_{t,ni,i,k}^H(\ell)\bm{\Pi}_{N_t}\otimes\bm{\alpha}_{v,ni,i,\ell}^T\right)\bm{\Pi}_{N_rN_t}\mathbf{R}_{i,2}^*\left(\mathbf{e}_{t,ni,i,k}(\ell)\otimes\bm{\alpha}_{v,ni,i,\ell}\right)\right.\\\label{SecondPart_Intra_modified}
&\left.+\left(\mathbf{e}_{t,ni,i,k}^T(\ell)\otimes\bm{\alpha}_{v,ni,i,\ell}^T\right)\mathbf{R}_{i,2}\bm{\Pi}_{N_rN_t}\left(\bm{\Pi}_{N_t}\mathbf{e}_{t,ni,i,k}^*(\ell)\otimes\bm{\alpha}_{v,ni,i,\ell}\right)\right]
\end{align}
Now, using \eqref{R_i_2_k}, and after some simplifications, we can write the first term in \eqref{SecondPart_Intra_modified} as
\begin{align}\notag
&\left(\mathbf{e}_{t,ni,i,k}^H(\ell)\bm{\Pi}_{N_t}\otimes\bm{\alpha}_{v,ni,i,\ell}^T\right)\bm{\Pi}_{N_rN_t}\mathbf{R}_{i,2}^*\left(\mathbf{e}_{t,ni,i,k}(\ell)\otimes\bm{\alpha}_{v,ni,i,\ell}\right)\\\notag
=&\rho_1^2\mathds{E}_{\alpha,\theta,\phi,\psi}\left[\sum\limits_{\substack{j=0\\j\neq n}}^{J-1}\left(\sqrt{\Lambda_{ji,i}}\right)^2\left(\mathbf{e}_{t,ni,i,k}^H(\ell)\mathbf{1}_{N_t}\mathbf{B}_{ji,i}(k)\otimes\bm{\alpha}_{v,ni,i,\ell}^T\bm{\Pi}_{N_r}\mathbf{A}_{ji,i}^*\right)\text{vec}\left\{\mathbf{D}_{ji,i}\right\}^*\right.\\\label{SecondPart_Intra_modified_firstpart}
&\left.\times\text{vec}\left\{\mathbf{D}_{ji,i}\right\}^T\left(\mathbf{B}_{ji,i}^H(k)\mathbf{1}_{N_t}\mathbf{e}_{t,ni,i,k}(\ell)\otimes\mathbf{A}_{ji,i}^T\bm{\alpha}_{v,ni,i,\ell}\right)\vphantom{\sum\limits_{\substack{j=0\\j\neq n}}^{J-1}}\right]
\end{align}
It can be shown that
\begin{align}\notag
&\mathds{E}_{\alpha}\left[\left(\mathbf{e}_{t,ni,i,k}^H(\ell)\mathbf{1}_{N_t}\mathbf{B}_{ji,i}(k)\otimes\bm{\alpha}_{v,ni,i,\ell}^T\bm{\Pi}_{N_r}\mathbf{A}_{ji,i}^*\right)\text{vec}\left\{\mathbf{D}_{ji,i}\right\}^*\text{vec}\left\{\mathbf{D}_{ji,i}\right\}^T\times\right.\\\notag
&\quad\left.\left(\mathbf{B}_{ji,i}^H(k)\mathbf{1}_{N_t}\mathbf{e}_{t,ni,i,k}(\ell)\otimes\mathbf{A}_{ji,i}^T\bm{\alpha}_{v,ni,i,\ell}\right)\right]\\\label{withRespectToAlpha_1}
=&\left|X_{ni,i,\ell}^{''}\right|^2\sum\limits_{m=0}^{L_{ji,i}-1}\left|\alpha_{ji,i}(m)\right|^2\left|X_{ji,i}(m)\right|^2\bm{\alpha}_{v,ni,i,\ell}^T\bm{\Pi}_{N_r}\mathbf{e}_{ji,i}^*(m)\mathbf{e}_{ji,i}^T(m)\bm{\alpha}_{v,ni,i,\ell},
\end{align}
where $X_{ni,i,\ell}^{''}=\sum\limits_{r=0}^{N_t-1}e^{jr\omega_{ni,i,\ell}}$, and $X_{ji,i}(m)=\sum\limits_{r=0}^{N_t-1}e^{jr(\omega_{ni,i,\ell}-\omega_{ji,i,m}) }$.
After some tedious but straight forward calculations, we have
\begin{align}\label{withRespectToThetaPhi_1}
\mathds{E}_{\theta,\phi}\left[\bm{\alpha}_{v,ni,i,\ell}^T\bm{\Pi}_{N_r}\mathbf{e}_{ji,i}^*(m)\mathbf{e}_{ji,i}^T(m)\bm{\alpha}_{v,ni,i,\ell}\right]=\frac{1}{\left(M_2-1\right)^2M_1^2}\tilde{Y}_{ji,i}.
\end{align}
Now, using \eqref{withRespectToAlpha_1} and \eqref{withRespectToThetaPhi_1}, we can simplify \eqref{SecondPart_Intra_modified_firstpart} as follows:
\begin{align}\notag
&\left(\mathbf{e}_{t,ni,i,k}^H(\ell)\bm{\Pi}_{N_t}\otimes\bm{\alpha}_{v,ni,i,\ell}^T\right)\bm{\Pi}_{N_rN_t}\mathbf{R}_{i,2}^*\left(\mathbf{e}_{t,ni,i,k}(\ell)\otimes\bm{\alpha}_{v,ni,i,\ell}\right)\\\label{SecondPart_Intra_modified_firstpart_simplified}
&=\rho_1^2\frac{1}{(M_2-1)^2M_1^2}{|X_{ni,i,\ell}^{''}|}^2\left[\sum\limits_{\substack{j=0\\j\neq n}}^{J-1}\left(\sqrt{\Lambda_{ji,i}}\right)^2X_{ji,i}\tilde{Y}_{ji,i}\left(\sum\limits_{m=0}^{L_{ji,i}-1}|\alpha_{ji,i}(m)|^2\right)\right].
\end{align}
Similarly, we can simplify the second term in \eqref{SecondPart_Intra_modified} as follows:
\begin{align}\notag
&\left(\mathbf{e}_{t,ni,i,k}^H(\ell)\bm{\Pi}_{N_t}\otimes\bm{\alpha}_{v,ni,i,\ell}^T\right)\bm{\Pi}_{N_rN_t}\mathbf{R}_{i,2}^*\left(\mathbf{e}_{t,ni,i,k}(\ell)\otimes\bm{\alpha}_{v,ni,i,\ell}\right)\\\label{SecondPart_Intra_modified_secondpart_simplified}
&=\rho_1^2\frac{1}{(M_2-1)^2M_1^2}{|X_{ni,i,\ell}^{''}|}^2\left[\sum\limits_{\substack{j=0\\j\neq n}}^{J-1}\left(\sqrt{\Lambda_{ji,i}}\right)^2X_{ji,i}\tilde{Y}_{ji,i}\left(\sum\limits_{m=0}^{L_{ji,i}-1}|\alpha_{ji,i}(m)|^2\right)\right].
\end{align}
Now, plugging the expressions from \eqref{SecondPart_Intra_modified_firstpart_simplified} and \eqref{SecondPart_Intra_modified_secondpart_simplified} into \eqref{SecondPart_Intra_modified}, we obtain
\begin{align}\notag
&\left(\bm{\beta}_{ni,i,\ell}\otimes \bm{\alpha}_{v,{ni,i,\ell}}\right)^T \cdot \mathbf{C}_{i,2}^{(fba)}\left(\bm{\beta}_{ni,i,\ell}\otimes \bm{\alpha}_{v,{ni,i,\ell}}\right)\\\notag
=&\frac{1}{b_{ni,i,\ell}^24N_t^2\Lambda_{ni,i}}\rho_1^2\frac{1}{(M_2-1)^2M_1^2}{|X_{ni,i,\ell}^{''}|}^2e^{j\Phi}\times\\\label{SecondPart_Intra_modified_simplified}
&\quad\left[\sum\limits_{\substack{j=0\\j\neq n}}^{J-1}\left(\sqrt{\Lambda_{ji,i}}\right)^2\left(\sum\limits_{m=0}^{L_{ji,i}-1}|\alpha_{ji,i}(m)|^2\right)X_{ji,i}\left(2\tilde{Y}_{ji,i}\right)\right]
\end{align}
Following similar procedure, we can also obtain
\begin{align}\notag
&\left(\bm{\beta}_{ni,i,\ell}\otimes \bm{\alpha}_{v,{ni,i,\ell}}\right)^H \cdot \mathbf{R}_{i,2}^{(fba)T} \cdot \left(\bm{\beta}_{ni,i,\ell}\otimes \bm{\alpha}_{v,{ni,i,\ell}}\right)\\\notag
=&\frac{1}{b_{ni,i,\ell}^24N_t^2\Lambda_{ni,i}}\rho_1^2\frac{1}{(M_2-1)^2M_1^2}{|X_{ni,i,\ell}^{''}|}^2\times\\\label{FirstPart_Intra_modified_simplified}
&\quad\left[\sum\limits_{\substack{j=0\\j\neq n}}^{J-1}\left(\sqrt{\Lambda_{ji,i}}\right)^2\left(\sum\limits_{m=0}^{L_{ji,i}-1}|\alpha_{ji,i}(m)|^2\right)X_{ji,i}\left(Y_{ji,i}+Y_{ji,i}^{'}\right)\right],
\end{align}
Now, plugging the expressions from  \eqref{SecondPart_Intra_modified_simplified} and \eqref{FirstPart_Intra_modified_simplified} into \eqref{MSE_intraCell_transformed}, we obtain the desired result.

\section{Proof of Theorem 5}\label{Proof_Theorem5}
The problem in \eqref{MSE_problem3} is a convex quadratic optimization problem, and can be solved using  Lagrangian method.
The Lagrange function for \eqref{MSE_problem3} can be written as
\begin{align}\notag
L\left(\mathbf{V}_{i}[k],\mu_{ik}\right)&=\text{Tr}\{\mathbf{R}_{i}^H\mathbf{H}_{i,i}[k]\mathbf{V}_{i}[k]\mathbf{V}_{i}^H[k]\mathbf{R}_{i}-\mathbf{R}_{i}^H\mathbf{H}_{i,i}[k]\mathbf{V}_{i}[k]-\mathbf{V}_{i}^H[k]\mathbf{R}_i+\mathbf{I}\}\\
& \ \ \ +\mu_{ik}\left(\text{Tr}\{\mathbf{V}_{i}[k]\mathbf{V}_{i}^H[k]\}-P_t\right),
\end{align}
where $\mu_{ik}$ is the corresponding Lagrange multiplier. Now, taking the derivative of the Lagrange function w.r.t. $\mathbf{V}_{i}[k]$ and setting the derivative equal to zero, we can obtain the desired result.
\bibliographystyle{IEEEtran}
\bibliography{IEEEabrv,MIMO_OFDM}
\end{document}